\documentclass[11pt,letterpaper]{article}
\usepackage[T1]{fontenc}
\usepackage[utf8]{inputenc}
\usepackage[margin=1in]{geometry}
\usepackage{amsmath,amssymb,amsthm,mathtools}
\usepackage{newtxtext,newtxmath}
\usepackage{mathrsfs}
\usepackage{booktabs,tabularx,array}
\usepackage{enumitem}
\usepackage{microtype}
\usepackage{needspace}
\usepackage[section]{placeins}
\usepackage{graphicx}

\usepackage{xcolor}
\usepackage{cite}
\usepackage{url}
\usepackage{hyperref}
\newcommand\myshade{70}
\hypersetup{
  linkcolor  = red!\myshade!black,
  citecolor  = blue!\myshade!black,
  urlcolor   = blue!\myshade!black,
  colorlinks = true,
}
\usepackage[nameinlink,noabbrev]{cleveref}
\allowdisplaybreaks[2]

\newtheorem{theorem}{Theorem}
\newtheorem{lemma}[theorem]{Lemma}
\newtheorem{proposition}[theorem]{Proposition}
\newtheorem{corollary}[theorem]{Corollary}
\theoremstyle{definition}
\newtheorem{definition}[theorem]{Definition}
\newtheorem{example}[theorem]{Example}
\theoremstyle{remark}
\newtheorem{remark}[theorem]{Remark}
\newcommand{\cC}{\mathcal C}
\newcommand{\cB}{\mathcal B}
\newcommand{\cD}{\mathcal D}
\newcommand{\cI}{\mathcal I}
\newcommand{\cR}{\mathcal R}
\newcommand{\cU}{\mathscr U}
\newcommand{\R}{\mathbb R}

\newcommand{\Z}{\mathbb Z}
\newcommand{\E}{\mathbb E}
\newcommand{\vx}{\boldsymbol x}
\newcommand{\vy}{\boldsymbol y}

\newcommand{\vu}{\boldsymbol u}
\newcommand{\vs}{\boldsymbol s}
\newcommand{\vt}{\boldsymbol t}
\newcommand{\vc}{\boldsymbol c}
\newcommand{\vw}{\boldsymbol w}

\newcommand{\eqdef}{\coloneqq}
\newcommand{\subseq}{\preceq}
\newcommand{\Int}[1]{\left[#1\right]}
\DeclarePairedDelimiter{\abs}{\lvert}{\rvert}

\DeclareMathOperator{\runs}{runs}
\DeclareMathOperator{\supp}{supp}

\newcommand{\Udel}{\mathsf U_{q,\mathrm{del}}^{\mathrm{HSS}}}
\newcommand{\Uins}{\mathsf U_{q,\mathrm{ins}}^{\mathrm{HSS}}}
\newcommand{\Rhalf}{\underline R_{\mathrm{half}}^{(L)}}
\newcommand{\Rrun}{\underline R_{\mathrm{del,cp}}^{(L)}}
\newcommand{\RdelEval}{\underline R_{\mathrm{del,ev}}^{(L)}}
\newcommand{\Rirun}{\underline R_{\mathrm{ins}}^{(L)}}
\newcommand{\Runi}{\underline R_{\mathrm{uni}}}
\newcommand{\Gsame}{G_{\mathrm s}}
\newcommand{\Gflip}{G_{\mathrm f}}
\newcommand{\Ttail}{T_{\mathrm{tail}}}
\newcommand{\Ddomain}{\mathcal D_{\mathrm{del}}}
\newcommand{\Idomain}{\mathcal D_{\mathrm{ins}}^{(L)}}

\newcommand{\set}[1]{\left\{#1\right\}}
\newcommand{\image}[1]{\operatorname{Im}(#1)}

\hypersetup{pdftitle={Constant-List Insertion–Deletion Codes: New Bounds and an Improvement of Levenshtein's Bound}}
\title{Constant-List Insertion--Deletion Codes:\\
New Bounds and an Improvement of Levenshtein's Lower Bound%
\thanks{The work was supported by the National Natural Science
Foundation of China under Grant 12371523.}}

\author{%
Han Mao Kiah%
\thanks{H. M. Kiah is with the School of Physical and Mathematical
Sciences, Nanyang Technological University, Singapore. Email: \href{mailto:hmkiah@ntu.edu.sg}{\texttt{hmkiah@ntu.edu.sg}}.}
\and
Hengjia Wei%
\thanks{H. Wei is with the School of Mathematics and Statistics,
Xi'an Jiaotong University, Xi'an 710049, China. Email: \href{mailto:hjwei05@gmail.com}{\texttt{hjwei05@gmail.com}}.}
\and
Ruixiao Zeng%
\thanks{R. Zeng is with the College of Computing and Data Science, Nanyang Technological University, Singapore.
 Email: \href{mailto:zeng0146@e.ntu.edu.sg}{\texttt{zeng0146@e.ntu.edu.sg}}.}
}

\date{}
\begin{document}
\maketitle

\begin{abstract}
We study the tradeoff between rate and error tolerance for codes
correcting adversarial insertions and deletions when the decoder
may return at most $L$ candidates, with $L$ fixed independently
of the block length. We derive new achievable-rate bounds for
binary codes and upper bounds over every fixed alphabet,
retaining the explicit dependence on $L$.

We establish a combinatorial reduction that trades $L$ units
of insertion budget for one unit of deletion budget in the
decoding guarantee, without changing the code or increasing
the list size. Consequently, asymptotic bounds for mixed errors
with insertion fraction $\gamma$ and deletion fraction $\delta$
follow from insertion-only lower bounds at $\gamma+L\delta$
and deletion-only upper bounds at $\delta+\gamma/L$.

For binary unique decoding, we strictly improve Levenshtein's
classical asymptotic rate lower bound for every deletion fraction
$0<\delta<1/2$ for which the classical rate expression is
nonnegative. At $\delta=0.1$, the lower bound increases from
approximately $0.162009$ to $0.180431$, a relative increase of
about $11.37\%$.
Our framework also yields insertion and deletion lower bounds
for every fixed list size. The existence proofs combine the
Lov\'asz local lemma with sampling from words having a specified
number of runs, where a run is a maximal block of equal symbols.
Generating functions provide refined bounds on the probability
that $L+1$ sampled words share an allowed received word. In particular, we establish a Critical-Point Coefficient Lemma that exploits
the large-power structure of these generating functions to
determine their exponential coefficient growth rates from
critical-point equations under explicit hypotheses. The strict Levenshtein improvement also admits a short direct
generating-function argument.

We derive two distinct families of upper bounds. A run-counting
argument yields a Levenshtein-type list-decoding bound.
Separately, we obtain a higher-order Elias bound by analyzing
the intersections and unions of the position sets used to embed
$L+1$ codewords in a common supersequence.
This Elias-type bound recovers Yasunaga's asymptotic
unique-decoding bound at $L=1$ and strictly improves the
Haeupler--Shahrasbi--Sudan insertion upper bound for every
fixed $L$ and $0<\gamma<q-1$.
Numerical comparisons quantify the improvements and the
remaining gaps between achievable rates and upper bounds.
\end{abstract}

\section{Introduction}\label{sec:introduction}

Insertion and deletion errors arise in communication and storage systems
where the receiver can lose track of symbol positions. Examples include
timing errors in digital communication~\cite{Mitzenmacher2009} and errors
introduced when synthesizing and reading DNA in DNA-based data
storage~\cite{HeckelMikutisGrass2019}. A deletion removes a symbol without
revealing its position, while an insertion adds an extra symbol at an
unknown position. Unlike substitutions, these errors change the alignment
between the transmitted and received sequences. Error-correcting codes
for insertions and deletions, or \emph{insdel codes}, must therefore
protect the information while also allowing this lost alignment to be
resolved.

We study the adversarial model: recovery must be guaranteed for every
error pattern within the allowed insertion and deletion budgets. A code
is a set of length-$n$ words over an alphabet of size $q$, and its rate
is $n^{-1}\log_q|\cC|$. Thus a binary code of rate $R$ carries $Rn$
information bits in $n$ transmitted symbols. The basic problem is to
maximize this rate while tolerating a given fraction of errors. In
\emph{unique decoding}, the receiver must identify the transmitted
codeword without ambiguity. For fixed-length codes, correcting up to
$t$ deletions is equivalent to correcting up to $t$ insertions; more
generally, such a code corrects any combination of at most $t$ insertions
and deletions in total~\cite{Levenshtein1966,CullinaKiyavash2014}.
This equivalence allows the unique-decoding problem to be studied through
deletions alone.

\subsection{Unique decoding and classical bounds}

Levenshtein's work established fundamental upper and lower bounds on the
size of insdel codes~\cite{Levenshtein1966,Levenshtein2002}. An upper bound
limits the number of codewords that can coexist without producing an
ambiguous received word, whereas a lower bound guarantees the existence
of a sufficiently large code. His counting arguments account for a
feature that distinguishes deletions from Hamming errors: the number of
distinct words produced by deleting a given number of symbols depends
on the transmitted word.

Subsequent work has substantially developed the upper-bound side.
Kulkarni and Kiyavash~\cite{KulkarniKiyavash2013} modeled deletion
correction as a hypergraph matching problem and used linear programming
to obtain finite-length bounds and asymptotic consequences.
Fazeli, Vardy, and Yaakobi~\cite{FazeliVardyYaakobi2015} developed a
generalized sphere-packing framework for channels with nonuniform error
balls. Cullina and Kiyavash~\cite{CullinaKiyavash2014} improved packing
bounds by varying the mixture of insertions and deletions used in the
counting argument. Yasunaga~\cite{Yasunaga2024ImprovedBounds} derived an
Elias-type upper bound, using an averaging argument that will also be
important here. More recently, Kong, Tamo, and Wei~\cite{KongTamoWei2025}
obtained alphabet-dependent combinatorial bounds, including a
linear-programming sphere-packing bound with improvements for large
distance or alphabet size. These results address several parameter
regimes; they do not form a single chain of bounds that dominate one
another for all parameters.

For lower bounds, a central benchmark in the binary case is
Levenshtein's asymptotic rate expression
\begin{equation}\label{eq:levenshtein-baseline}
R_{\mathrm{Lev}}(\delta)=1+\delta-2h_2(\delta),
\qquad 0<\delta<1/2,
\end{equation}
where $h_2$ is binary entropy, and the operational lower bound is the
maximum of this expression and zero
\cite{Levenshtein2002,Yasunaga2024ImprovedBounds}. Here the number of
deletions is $d=\lfloor\delta n\rfloor$, with $\delta$ fixed as
$n\to\infty$. Improving the asymptotic rate requires an exponential
improvement in the guaranteed code size: multiplying a lower bound on
$|\cC|$ by a polynomial in $n$ does not change its normalized logarithm.

There have been important improvements in related regimes.
Alon, Bourla, Graham, He, and Kravitz~\cite{ABGHK2024} obtained a
logarithmic-factor improvement for a fixed number of deletions.
Kong, Tamo, and Wei~\cite{KongTamoWei2025} refined this improvement and
extended it to additional alphabet and distance regimes; they also
obtained asymptotically tight code-size bounds as the alphabet size tends to infinity while the block length and
distance remain fixed. Yasunaga's finite-length lower
bound improves the classical estimate, but has the same asymptotic rate
in the linear-deletion regime~\cite{Yasunaga2024ImprovedBounds}.
A different line of work, including Bukh, Guruswami, and
H\aa stad~\cite{BGH2017}, gives positive-rate codes for much larger
deletion fractions. Our question concerns a quantitative improvement
of the classical rate curve where it already gives a nonnegative bound,
rather than only a larger positive-rate decoding radius or a
subexponential gain in code size.

\begin{quote}
\noindent\textbf{Question 1.}
\emph{For positive deletion fractions in the linear-deletion regime, can
Levenshtein's classical rate lower bound be strictly improved throughout
the range in which it is nonnegative?}
\end{quote}

\subsection{List decoding: asymmetry and constant list sizes}

A natural relaxation is \emph{list decoding}: instead of identifying
a unique codeword, the receiver may return a list of at most $L$
candidates containing the transmitted word. Equivalently, no received
word may be consistent with more than $L$ codewords under the allowed
errors. This relaxation creates an additional tradeoff between rate,
error tolerance, and the ambiguity remaining after decoding.

Wachter-Zeh~\cite{WachterZeh2018} studied list decoding in the
Levenshtein metric and list-decoding algorithms for
Varshamov--Tenengolts codes. Subsequent explicit constructions include
the synchronization-string-based codes of Haeupler, Shahrasbi, and
Sudan (HSS)~\cite{HaeuplerShahrasbiSudan2018SyncStringsList}, the
positive-rate constructions attaining the optimal insdel feasibility
region of Guruswami, Haeupler, and
Shahrasbi~\cite{GuruswamiHaeuplerShahrasbi2020Optimally}, and the
high-rate linear edit-correcting codes of Li, Gabrys, and
Farnoud~\cite{LiGabrysFarnoud2025}. These explicit code families have list-size guarantees that are
polynomially bounded in $n$, with the error fractions and
approximation parameters held fixed.

Constant-list guarantees are also available. Guruswami and
Wang~\cite{GuruswamiWang2017} constructed binary codes list-decodable
from a deletion fraction $1/2-\varepsilon$, with list size depending
only on $\varepsilon$. Hayashi and
Yasunaga~\cite{HayashiYasunaga2020} developed a Johnson-type bound
and constant-list constructions for mixed errors.
The deletion random-coding results of Guruswami--Wang, HSS, and
Liu--Tjuawinata--Xing give list sizes $O(1/\varepsilon)$ at rates
$\varepsilon$ below their respective random-coding benchmarks
\cite{GuruswamiWang2017,
HaeuplerShahrasbiSudan2018SyncStringsList,
LiuTjuawinataXing2021Concatenation}.
More recently, Con, Doron, and Ribeiro~\cite{CDR2026} determined
binary insertion list-decoding capacity using a Markov ensemble,
with list size $O(1/\varepsilon)$ at rate
$C_{\mathrm{ins}}(\gamma)-\varepsilon$.
Thus, both growing-list and constant-list results are established
parts of the literature.

Unlike unique decoding, insertion and deletion list decodability
are not equivalent in general. This asymmetry is visible in both
rate and decoding-radius guarantees. HSS achieve rate
$1-\delta-\varepsilon$ against a deletion fraction $\delta$ and
any fixed insertion fraction $\gamma$, over an alphabet whose
size may depend on $\delta$, $\gamma$, and $\varepsilon$
\cite{HaeuplerShahrasbiSudan2018SyncStringsList}.
The deletion fraction enters the rate loss directly, whereas
the insertion fraction does not appear in this rate expression.
The allowance for a parameter-dependent alphabet and a list-size
bound depending on $n$ is important here.

A concrete binary example is provided by Hayashi and
Yasunaga~\cite[Theorem~3]{HayashiYasunaga2020}: positive-rate binary
codes can be list-decoded from any fixed insertion fraction
$\gamma<1/\sqrt{2}$ with a list size independent of $n$.
In contrast, correcting a deletion fraction $1/2$ with polynomially
bounded lists forces the binary code rate to zero. Indeed, every
binary word contains either $0^{\lceil n/2\rceil}$ or
$1^{\lceil n/2\rceil}$ as a subsequence, so a code list-decodable
from $\lfloor n/2\rfloor$ deletions with list size $L_n$ has at
most $2L_n$ codewords. These contrasting guarantees motivate a
more direct question about the relationship between the two
error models.

\begin{quote}
\noindent\textbf{Question 2.}
\emph{Given a list size $L$, what quantitative relations hold
between insertion and deletion list decodability?}
\end{quote}

A related question is the rate available for a specified numerical
list size, such as $L=2$ or $L=3$. A guarantee whose list size
depends only on a fixed rate or decoding-radius gap is already
constant in $n$, but it does not by itself determine this
small-list rate tradeoff. Such results provide fixed-$L$
benchmarks when their parameter dependence is tracked explicitly.
We seek sharper achievable-rate and converse bounds that retain
the dependence on the numerical value of $L$, with $L$ fixed
independently of the block length.

\begin{quote}
\noindent\textbf{Question 3.}
\emph{For a fixed alphabet and a given constant list size $L$,
what rates are achievable under insertions, deletions, or both,
and what upper bounds capture the dependence on $L$?}
\end{quote}

\subsection{Our contributions}

We give a budget-conversion result addressing Question~2, a strict
rate improvement answering Question~1, and new achievability and
converse bounds toward Question~3. Our main achievability results are
for binary codes; the converses hold over every fixed alphabet.
We write $\cR_{q,\mathrm{del}}^{(L)}(\delta)$ and
$\cR_{q,\mathrm{ins}}^{(L)}(\gamma)$ for the corresponding supremal
asymptotic rates. Here $\delta$ denotes a deletion fraction and $\gamma$
an insertion fraction, both measured relative to the transmitted block
length. Precise definitions are in Section~\ref{sec:preliminaries}.

\paragraph{A quantitative insertion--deletion relation.}
Proposition~\ref{prop:ins2del} converts error budgets without changing
the code or the list size. Writing $(\gamma,\delta,L)$ for list
decodability from fractions $\gamma$ of insertions and $\delta$ of
deletions with at most $L$ candidates, its integer-budget statement gives
the asymptotic implication
\begin{equation}\label{eq:reduction-intro}
(\gamma,\delta,L)\ \Longrightarrow\
\left(0,\delta+\frac{\gamma}{L},L\right).
\end{equation}
Thus a deletion converse can be applied at the effective deletion
fraction $\tau=\delta+\gamma/L$, with the finite-length budgets and
limiting argument made precise in Section~\ref{sec:reduction}.
The same proposition transfers insertion-only achievability at fraction
$\gamma+L\delta$ to mixed errors with fractions $(\gamma,\delta)$.
This gives a direct quantitative answer to Question~2, but not an
equivalence or a claim that all possible budget conversions have been
characterized.

\Needspace{12\baselineskip}
\paragraph{Improving the Levenshtein lower bound.}
Our first rate improvement already occurs for unique decoding. A
\emph{run} is a maximal block of equal symbols.
\begin{theorem}[Improving the Levenshtein rate bound]
\label{thm:unique-main}\label{cor:deletion-strict-improvement}
There is an explicit positive-evaluation lower bound $\Runi(\delta)$, given in
\eqref{eq:unique-rate}, such that, for every $0<\delta<1/2$,
\begin{equation}\label{eq:unique-main}
\cR_{2,\mathrm{del}}^{(1)}(\delta)
\ge \max\{0,\Runi(\delta)\},
\qquad
\Runi(\delta)>1+\delta-2h_2(\delta).
\end{equation}
It is enough to sample binary words that start with $0$ and have
$\lfloor n/2\rfloor$ runs.
\end{theorem}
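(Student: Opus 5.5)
We plan to run a Gilbert--Varshamov-type argument restricted to the ensemble $\cU$ of binary words of length $n$ that start with $0$ and have exactly $r=\lfloor n/2\rfloor$ runs. Let $d=\lfloor\delta n\rfloor$. A code in $\cU$ is a unique-decoding deletion code exactly when no two codewords share a common subsequence of length $n-d$. We form the conflict graph on $\cU$ by joining $\vx$ and $\vy$ whenever they share such a subsequence. Any independent set is a code. The greedy bound (or Tur\'an / the local lemma in its simplest form) gives $|\cC|\ge |\cU|/(1+\Delta_{\mathrm{avg}})$. Here $\Delta_{\mathrm{avg}}$ is the average number of conflicting partners, and it is bounded by $\sum_{\vz}|D_d(\vz)\cap\cU|^2/|\cU|$ with $\vz$ ranging over length-$(n-d)$ words. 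Since $|\cU|=\binom{n-1}{r-1}=2^{n-o(n)}$, the rate is
\[
\Runi(\delta) \;=\; 1-\lim \tfrac1n\log_2\Bigl(\sum_{\vz}N(\vz)^2\Big/|\cU|\Bigr),
\]
where $N(\vz)$ is the number of words in $\cU$ that contain $\vz$ as a subsequence.

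Levenshtein's bound comes from the crude estimate $N(\vz)\le\binom{n}{d}2^{d}$, which holds uniformly, together with $2^{n-d}$ choices of $\vz$. We improve this in two places. First, we count supersequences by their canonical (leftmost) embedding. A supersequence of $\vz$ is then encoded by choosing the $d$ inserted positions and their symbols, subject to the constraint that each inserted symbol differs from the next symbol of $\vz$. Second, we require the resulting word to have exactly $r$ runs. Each inserted symbol changes the run count in a way that depends only on local structure: whether it extends a run, splits one, or sits at a boundary. So $\sum_{\vz}N(\vz)^2$ can be written as a coefficient extraction in a bivariate generating function. One variable tracks length and one tracks runs, for pairs of supersequences built from a common $\vz$. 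This is a product (transfer-matrix) structure raised to a large power. We then apply the Critical-Point Coefficient Lemma to obtain the exponential growth rate from critical-point (saddle) equations. That yields an explicit expression for $\Runi(\delta)$, which we record as \eqref{eq:unique-rate}.

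To prove the strict inequality $\Runi(\delta)>1+\delta-2h_2(\delta)$, we avoid solving the saddle equations in closed form. Instead we argue by a short direct generating-function bound. The exponent from the saddle point is an infimum over the positive evaluation parameters $(x,y)$ of $\log F(x,y)-\alpha\log x-\beta\log y$. Evaluating at the particular point that reproduces Levenshtein's count shows the exponent is at most Levenshtein's. Strictness then follows because that point is not a critical point: the run-count constraint $r=n/2$ is not the typical run count under the Levenshtein-weighted measure. Equivalently, the derivative in the run variable at the Levenshtein point is nonzero for $0<\delta<1/2$, so moving slightly in $y$ strictly decreases the bound. Taking the maximum with $0$ is trivial.

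The main obstacle is the middle step. We must set up the pair-counting generating function correctly, with canonical embeddings that avoid overcounting and that correctly track how insertions change the run count. We must also verify the explicit hypotheses of the Critical-Point Coefficient Lemma: positivity, aperiodicity, and the existence of an interior critical point. We would first try a transfer matrix indexed by the last symbol relative to the current $\vz$-symbol, and check the hypotheses there.
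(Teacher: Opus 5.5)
Your argument is in substance the paper's proof. You count witness--pair incidences $\sum_{\vz}N(\vz)^2$ through leftmost embeddings, bound that count by a positive evaluation $A_1(x,y)K_1(x,y)^{n-d-1}x^{-2d}y^{-2r}$, and then perturb the run variable away from the point that reproduces Levenshtein's exponent. Your Caro--Wei/Tur\'an bound $|\cC|\ge|\cU|^2/(|\cU|+\sum_{\vz}N(\vz)^2)$ is a legitimate, more elementary substitute for the local-lemma sampling lemma when $L=1$. It gives the same exponent $2-\Phi_\delta(x,y)$ and yields distinct codewords directly; the paper needs the local lemma only for general $L$. No transfer-matrix state is needed either. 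Each local factor depends only on whether consecutive witness symbols agree, so the sum over $\vz$ collapses to a power of $\Gsame\Gsame+\Gflip\Gflip$.

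\textbf{Drop the Critical-Point Coefficient Lemma.} The paper deliberately avoids it and defines $\Runi(\delta)=2-\inf_{(x,y)\in\Ddomain}\Phi_\delta(x,y)$. Nonnegativity of the coefficients already gives $\limsup_n n^{-1}\log_2\sum_{\vz}N(\vz)^2\le\Phi_\delta(x,y)$ at every point of $\Ddomain$. An admissible critical point at run density $1/2$ must also satisfy $x^*(1+y^*)<1$, so that the tail factor $\Ttail$ in the prefactor converges. The paper does not establish this for all $0<\delta<1/2$. If you define $\Runi$ through that critical point, the theorem is unproved for any $\delta$ where it fails to exist.

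\textbf{Correct the Levenshtein count.} Levenshtein's exponent does not come from $N(\vz)\le\binom nd2^d$; that estimate yields only $1-\delta-2h_2(\delta)$. The correct count is $N(\vz)\le\sum_{j\le d}\binom nj=2^{nh_2(\delta)+o(n)}$. The point reproducing it is $(x,y)=(\delta,1)$, which lies in $\Ddomain$ because $\delta<1/2$. There $K_1(\delta,1)=2/(1-\delta)^2$ and $\Phi_\delta(\delta,1)=1-\delta+2h_2(\delta)=2-R_{\mathrm{Lev}}(\delta)$. Moreover $\partial_y\Phi_\delta(\delta,1)=\delta(1-2\delta)/\ln 2>0$, so a slightly smaller $y$ gives the strict inequality. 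This confirms your heuristic: the tilted run density at $(\delta,1)$ is $(1+\delta-2\delta^2)/2>1/2$.
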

This answers Question~1 affirmatively. Allowing the run density to
vary improves the bound further. For example, at $\delta=0.1$,
\begin{equation}\label{eq:intro-numerical-improvement}
R_{\mathrm{Lev}}(0.1)=0.162008812821\ldots,
\qquad
\cR_{2,\mathrm{del}}^{(1)}(0.1)\ge0.180431.
\end{equation}
The latter value is approximately $11.37\%$ larger and is obtained from
a feasible evaluation at run density $\rho\approx0.418848$ in
Proposition~\ref{prop:deletion-certificates}. Keeping density $1/2$
already gives $0.168125$. These are independently checkable lower-bound
values, not claims of global numerical optimality; the evaluation points
are given in Section~\ref{sec:unique-numerics}. Where the classical
expression is negative, the strict algebraic comparison alone does not
imply a new positive-rate guarantee.

\paragraph{Achievable rates and converses for constant lists.}
Theorems~\ref{cor:deletion-list-lower-bound}
and~\ref{cor:insertion-list-lower-bound} give run-restricted deletion
and insertion lower bounds with explicit dependence on $L$. Their
counting exponents are evaluated through small systems of critical-point
equations. The deletion formula applies to the design densities with
an admissible critical point; auxiliary positive evaluations also cover
other densities. For insertions, all feasible witness run densities,
including the endpoint, are included in the bound.

The uniform local-lemma deletion baseline is
\begin{equation}\label{eq:uniform-LLL-intro}
B_{\mathrm{del},L}(\delta)
=1+\frac{\delta}{L}-\frac{L+1}{L}h_2(\delta).
\end{equation}
It recovers Levenshtein's expression at $L=1$ and improves the fixed-$L$
HSS union-bound baseline (Corollary~\ref{cor:uniform-deletion}).
A separate positive-evaluation argument strictly improves
\eqref{eq:uniform-LLL-intro} for every fixed finite $L$ and
$0<\delta<1/2$ (Theorem~\ref{thm:fixed-L-strict}), without requiring
an admissible critical point. For insertions, we also combine the
Markov-ensemble estimate of Con, Doron, and Ribeiro~\cite{CDR2026}
with the local lemma. We compare the resulting bounds without asserting
that the allocation-based lower bound dominates every benchmark.

On the converse side, we give a finite-length, alphabet-dependent
Levenshtein-type list-decoding bound and its asymptotic form
(Theorem~\ref{thm:levenshtein-insdel-list} and
Corollary~\ref{cor:levenshtein-insdel-asymptotic-fixed-q}). Its run-counting
argument is also related to the binary deletion converse in
\cite[Section~4.2]{CDR2026}. We further prove a higher-order Elias bound
(Theorem~\ref{thm:higher-order-elias}) that recovers Yasunaga's asymptotic
bound at $L=1$ and strictly improves the HSS insertion converse for every
finite $L$ and $0<\gamma<q-1$. In particular, for binary codes,
\begin{equation}\label{eq:binary-zero-intro}
\gamma\ge1-2^{-L}\quad\Longrightarrow\quad
\cR_{2,\mathrm{ins}}^{(L)}(\gamma)=0.
\end{equation}
This is a necessary zero-rate condition, not a matching positive-rate
threshold. The deletion converses have complementary parameter ranges.
Section~\ref{sec:comparison} displays these improvements and the
remaining gaps between lower and upper bounds. All our achievability
results are existential; we do not claim efficient encoders or decoders
attaining the stated rates.

\subsection{Overview of the approach}

\paragraph{A combinatorial reduction.}
Suppose that $L+1$ codewords produce the same received word. We restore
some of their deleted symbols and insert the restored blocks into the
corresponding gaps of a common supersequence. Relative to any one
codeword, only the symbols restored for the other $L$ codewords contribute
to its additional insertion budget. This observation gives the factor
$L$ in the integer budget tradeoff and leads to
\eqref{eq:reduction-intro}.

\paragraph{The probabilistic step and the quantity to be estimated.}
For the lower bounds, we sample many candidate codewords independently.
A fixed $(L+1)$-tuple is bad if its members share an allowed received
word. Such a word witnesses a violation of the list-size constraint.
The Lov\'asz local lemma converts an upper bound on the probability of
this event into a code with no bad tuple. More precisely, a bound of the
form $p_n\le2^{-n(a-o(1))}$ gives every positive rate below $a/L$
(Lemma~\ref{lem:sampling}). Improving the achievable rate therefore
requires a sharper estimate of the bad-event probability, especially
its exponential decay with $n$.

Rather than sample uniformly from all binary words, we fix the number
of runs. Runs affect how
many different deletion descendants a word can have
\cite{Levenshtein2002,LironLangberg2015}. Restricting their number can
reduce confusable configurations, but also reduces the candidate space;
the two effects must be compared at the exponential scale. At run
density $1/2$, the ensemble still has $2^{n-o(n)}$ words, so there is no
first-order entropy loss. Run-based ensembles were already considered
by Kash, Mitzenmacher, Thaler, and Ullman~\cite{KMTU2011}. The new
counting input here controls a common received word and all $L+1$
candidate codewords jointly, while retaining their run constraints.

\paragraph{Generating functions and critical points.}
For deletions, we count a common subsequence together with the codewords
that contain it. A unique leftmost embedding gives each such configuration
an unambiguous description by local blocks. Multivariate generating
functions record the lengths and run counts of the codewords while
keeping the same subsequence across the tuple. A tuple may have several
common subsequences and is then counted more than once; this is harmless
for the required probability upper bound. For insertions, the received
word is a common supersequence. We upper-bound its descendant count by
counting allocations of deletions among its runs, keeping this relaxation
separate from exact enumeration.

The generating functions reduce to coefficient extractions of the
large-power form
\[
[\vu^{\mathbf j_N}]\,B(\vu)K(\vu)^N.
\]
The Critical-Point Coefficient Lemma
(Lemma~\ref{lem:critical-point-coefficient}) determines the exponential
growth by solving the associated critical equations and checking
convergence and support-lattice conditions. In our setting, nonnegative
coefficients make this route possible. Thus the analytic tools estimate
the configuration counts needed by the local lemma; they do not form a
decoding algorithm. The design run density remains an outer parameter,
and insertion decoding additionally requires a worst-case choice of the
received word's run density. The large-power approach is related to
restricted-space Gilbert--Varshamov bounds and their evaluation by
analytic combinatorics
\cite{marcus1992improved,goyal2024evaluating,goyal2024gilbert,lenz2025multivariate}.

Positive evaluation supplies a complementary, simpler estimate: every
positive point in the convergence domain bounds the desired coefficient.
One such point reproduces the classical deletion baseline. A small
perturbation in the variable marking runs strictly improves its exponent
at density $1/2$, giving the short proof of the Levenshtein improvement
without a critical-point existence assumption.

\paragraph{Run counting and higher-order averaging for the converses.}
The Levenshtein-type upper bound separates low-run codewords, which can
be counted directly, from high-run codewords, which have many deletion
descendants. Each descendant can be shared by at most $L$ codewords,
so double counting bounds the high-run part.
For the Elias bound, we extend Yasunaga's ambient-word averaging
argument~\cite{Yasunaga2024ImprovedBounds}. If the code is too large,
some longer word must contain many codewords as subsequences. Represent
each embedding by its set of positions. A higher-order averaging lemma
finds $L+1$ embeddings with a large intersection or a small union. Their
intersection gives a long common subsequence; their union gives a short
common supersequence. These two operations yield the deletion and
insertion restrictions, respectively, and retain the dependence on $L$.

\paragraph{Organization.}
Section~\ref{sec:preliminaries} introduces the notation and error models.
Section~\ref{sec:baselines} proves the reduction and sampling lemma and
derives the baseline rates. Section~\ref{sec:unique} proves the
unique-decoding improvement. Sections~\ref{sec:deletions}
and~\ref{sec:insertions} derive the critical-point formulas and the
constant-list lower bounds. Section~\ref{sec:converses} proves the
converses and recalls the Yasunaga and HSS bounds.
Section~\ref{sec:comparison} compares the results, and
Section~\ref{sec:discussion} concludes. Appendices~\ref{app:gf}
and~\ref{app:critical} contain the generating-function derivations,
the coefficient-lemma proof, and the kernel-specific verifications.

\section{Preliminaries}
\label{sec:preliminaries}
Let $\Sigma_q=\{0,1,\ldots,q-1\}$ and $[n]=\{1,\ldots,n\}$.
For words $u,v$, write $u\subseq v$ if $u$ is a subsequence of $v$,
including the case $u=v$. An embedding is a strictly increasing map
between their position sets that matches the corresponding symbols.
For a word $u$ and integers $a,b\ge0$, let $\cB(u;a,b)$ be the set of words
obtainable from $u$ by at most $a$ insertions and at most $b$ deletions.
We use exact-length pure-error sets:
\[
\cD_t(u)=\{w\in\Sigma_q^{|u|-t}:w\subseq u\},\qquad
\cI_t(u)=\{w\in\Sigma_q^{|u|+t}:u\subseq w\}.
\]
The first definition is used for $0\le t\le|u|$.
Every length-$n$ word has the same number of length-$(n+t)$
supersequences~\cite{Levenshtein2002,Yasunaga2024ImprovedBounds}, namely
\begin{equation}\label{eq:insertion-sphere-Iq}
I_q(n,t)\eqdef |\cI_t(u)|
=\sum_{j=0}^{t}\binom{n+t}{j}(q-1)^j.
\end{equation}

\begin{definition}\label{def:list}
A code $\cC\subseteq\Sigma_q^n$ is \emph{$(a,b;L)$-list-decodable}
if for every received word $w$,
\[
\bigl|\{c\in\cC:w\in\cB(c;a,b)\}\bigr|\le L.
\]
Equivalently, $|\cB(w;b,a)\cap\cC|\le L$.
For real $\gamma,\delta\ge0$, the notation
\emph{$(\gamma,\delta,L)$-insdel list-decodable} means
$(\lfloor\gamma n\rfloor,\lfloor\delta n\rfloor;L)$-list-decodable.
Thus $\gamma$ always denotes insertions and $\delta$ always denotes deletions.
\end{definition}
For pure deletions, put $d=\min\{n,\lfloor\delta n\rfloor\}$.
It suffices to test received words of length $n-d$: a longer common
subsequence can be shortened.
For pure insertions it suffices to test words of length
$n+\lfloor\gamma n\rfloor$: a shorter common supersequence can be extended.

The rate of $\cC$ is $n^{-1}\log_q|\cC|$.
We write $\cR_q^{(L)}(\gamma,\delta)$ for the supremum of achievable
asymptotic rates, and abbreviate
$\cR_{q,\mathrm{del}}^{(L)}(\delta)=\cR_q^{(L)}(0,\delta)$ and
$\cR_{q,\mathrm{ins}}^{(L)}(\gamma)=\cR_q^{(L)}(\gamma,0)$.
Achievability means that for every smaller rate there are suitable codes
for all sufficiently large block lengths; our upper bounds hold for the
limsup of the rates of any code family. In particular, a rate lower bound
of zero is always available. The notation $[x]_+=\max\{x,0\}$ distinguishes
an algebraic rate expression from its nonnegative operational version.

All logarithms in the binary achievability analysis have base two.
Define
\[
h_2(x)=-x\log_2x-(1-x)\log_2(1-x),\qquad
H_q(x)=\frac{h_2(x)+x\log_2(q-1)}{\log_2q},
\]
with $0\log0=0$, and let
\[
\overline H_q(x)=
\begin{cases}H_q(x),&0\le x\le(q-1)/q,\\1,&(q-1)/q\le x\le1.\end{cases}
\]
No rationality assumption on $\gamma,\delta$, or a run density is needed;
all finite-length counting expressions use integer parts.

\section{Error-Budget Reduction and Baseline Bounds}
\label{sec:baselines}
We establish two tools used throughout the paper: a reduction that trades
insertion and deletion budgets without changing the code or its list
size, and a common-witness sampling lemma that turns counting estimates
into achievable rates. We then combine them with uniform sampling to
obtain baseline bounds, before exploiting run structure in the subsequent
sections. Throughout this section, $q\ge2$ and $L\ge1$ are fixed.

\subsection{Trading Insertions for Deletions}\label{sec:reduction}
\begin{proposition}[Trading insertions for deletions]\label{prop:ins2del}
Let $a,b,t$ be nonnegative integers. If a code $\cC\subseteq\Sigma_q^n$
is $(a+Lt,b;L)$-list-decodable, then it is $(a,b+t;L)$-list-decodable.
\end{proposition}
\begin{proof}
We may first delete original symbols and then insert the remaining new
symbols: canceling inserted symbols that are subsequently deleted does
not increase either error budget. Suppose, for the sake of contradiction,
that $\cC$ is not $(a,b+t;L)$-list-decodable. Then there exist $L+1$ distinct codewords
\[
\vc_1,\vc_2,\ldots,\vc_{L+1}\in \cC
\]
and a sequence $\vs$ such that, for each $i\in[L+1]$, the sequence $\vs$ can be obtained from $\vc_i$ by $a_i$ insertions and $b_i$ deletions, where
\begin{equation}\label{eq:abrange}
a_i\le a
\qquad\text{and}\qquad
b_i\le b+t.
\end{equation}

    For each $1\leq i\leq L+1$, let $\vc_i'$ be the subsequence of $\vc_i$ obtained after the $b_i$ deletions, so that
\[
\abs{\vc_i'} = n-b_i
\]
and $\vc_i'$ is a subsequence of $\vs$. Thus $|\vs|=n+a_i-b_i$ for every $i$. Let
\[
\psi_i:[n-b_i]\to [n+a_i-b_i]
\]
be an embedding satisfying
\[
\vc_i'[j]=\vs[\psi_i(j)]
\]
for all $j\in[n-b_i]$.

    Define $I\eqdef \set{i\in [L+1]| b_i>b }$. For each $i\in I$, take an arbitrary subsequence  $\vc_i''$ of $\vc_i$ of length $n-b$ such that \[\vc_i' \subseq  \vc_i''  \subseq \vc_i.\]
    Let 
    \[\phi_i: [n-b_i]\rightarrow [n-b]\]
    be an embedding satisfying \[\vc_i'[j]=\vc_i''[\phi_i(j)]\] for all $j \in [n-b_i]$. For each $i\notin I$, define simply
\[
\vc_i''\eqdef \vc_i'.
\]
Then, for every $i\in[L+1]$, the sequence $\vc_i''$ can be obtained from $\vc_i$ using at most $b$ deletions.

We next construct a common supersequence $\vs'$ of $\vc_1'',\ldots,\vc_{L+1}''$ by inserting the symbols in $\vc_i''$ that are not in $\vc_i'$ into $\vs$. For each $i\in I$ and each $j\in [n-b] \backslash \image{\phi_i}$, let $\ell_{i,j}$ be the largest integer such that $\phi_i(\ell_{i,j})<j$. If such an integer exists, associate the symbol $\vc_i''[j]$ with the position immediately to the right of $\vs[\psi_i(\ell_{i,j})]$. Otherwise, associate $\vc_i''[j]$ with the left end of $\vs$. 
 
For each $k\in [|\vs|]$, let $I_{i,k}$ be the set of indices  of  $\vc_i''$ whose symbols are associated with the position immediately after $\vs[k]$. Similarly, let $I_{i,0}$ denote the set of indices associated with the left end of $\vs$. By construction, each nonempty $I_{i,k}$ forms a consecutive interval.

Now construct $\vs'$ as follows:
\begin{itemize}
    \item prepend
    \[
    \vc_{i_1}''[I_{i_1,0}]
    \vc_{i_2}''[I_{i_2,0}]
    \cdots
    \vc_{i_M}''[I_{i_M,0}]
    \]
    to the left of $\vs$, where
    \[
    I=\{i_1,i_2,\ldots,i_M\};
    \]
    \item for each position $k$, insert
    \[
    \vc_{i_1}''[I_{i_1,k}]
    \vc_{i_2}''[I_{i_2,k}]
    \cdots
    \vc_{i_M}''[I_{i_M,k}]
    \]
    immediately after $\vs[k]$.
\end{itemize}

Let $\vs'$ denote the resulting sequence. By construction, $\vs'$ is a common supersequence of $\vc_1'',\vc_2'',\ldots,\vc_{L+1}''$. Moreover,
    \[\abs{\vs'}=\abs{\vs} + \sum_{\ell=1}^{L+1}(\abs{\vc_\ell''}-\abs{\vc_\ell'}) =\abs{\vs}+\sum_{\ell\in I} (b_\ell-b).\]

   Fix $i\in[L+1]$. Since $\vc_i''$ is a subsequence of $\vs'$, the number of insertions needed to transform $\vc_i''$ into $\vs'$ equals $\abs{\vs'}-\abs{\vc''_i}$.
    If $i\in I$, then
    \begin{align*}
\abs{\vs'}-\abs{\vc_i''}
&=n+a_i-b_i+\sum_{\ell\in I}(b_\ell-b)-(n-b)\\
&=a_i+\sum_{\ell\in I\setminus\{i\}}(b_\ell-b)\le a+Lt.
\end{align*}
    The inequality follows from \eqref{eq:abrange} and the fact that $\abs{I\backslash \set{i}}\leq L.$
    If $i\notin I$, then $\abs{\vc_i''}=\abs{\vc_i'}=n-b
    _i$
    and $\abs{I}\leq L$. It follows that
    \begin{align*}
\abs{\vs'}-\abs{\vc_i''}
&=n+a_i-b_i+\sum_{\ell\in I}(b_\ell-b)-(n-b_i)\\
&=a_i+\sum_{\ell\in I}(b_\ell-b)\le a+Lt.
\end{align*}

    Therefore,  for every $i\in[L+1]$, the sequence $\vs'$ can be obtained from $\vc_i$ using at most $b$ deletions and at most $a+Lt$ insertions. Equivalently, 
    \[\vc_i \in \cB(\vs'; b, a+Lt)\] for all $i\in[L+1]$. This contradicts the assumption that $\cC$ is $(a+Lt,b;L)$-list-decodable.
\end{proof}

For fractional budgets let $A_n=\lfloor\gamma n\rfloor$ and
$B_n=\lfloor\delta n\rfloor$. Since
$L\lfloor A_n/L\rfloor\le A_n$, the proposition implies deletion-list
decodability with the \emph{integer} budget
\begin{equation}\label{eq:effective-integer-budget}
t_n=\min\{n,B_n+\lfloor A_n/L\rfloor\}.
\end{equation}
In particular, when $\tau=\delta+\gamma/L<1$, we have $t_n/n\to\tau$.
The fractional implication \eqref{eq:reduction-intro} is understood in
this asymptotic sense; it need not hold with $t_n$ replaced by
$\lfloor\tau n\rfloor$ at every finite length.

\begin{example}[The converse implication fails]\label{ex:reduction-not-equivalence}
Let $L=2$, $n=2$, and consider $\cC=\{00,01,11\}\subseteq\{0,1\}^2$.
We first show that $\cC$ is $(0,1;2)$-list-decodable. Indeed, the sets
of length-one subsequences are
\[
\cD_1(00)=\{0\},\qquad \cD_1(01)=\{0,1\},\qquad
\cD_1(11)=\{1\}.
\]
Their intersection is empty, so no received word obtainable by at most
one deletion corresponds to all three codewords. On the other hand,
$0011$ contains every codeword as a subsequence, and it is obtained from
each by two insertions. Hence $\cC$ is not $(2,0;2)$-list-decodable.
Thus deletion-list decodability with budget one does not imply
insertion-list decodability with budget $L\cdot1=2$.
\end{example}
The reverse implication therefore cannot transfer deletion-only
achievability to mixed errors. The forward reduction has two uses:
deletion upper bounds apply at the effective fraction $\tau=\delta+\gamma/L$,
whereas a code correcting $A_n+LB_n$ insertions is
$(A_n,B_n;L)$-list-decodable. Thus an insertion lower bound at
$\gamma+L\delta$ gives a mixed-error lower bound;
Section~\ref{sec:baseline-rates} combines this transfer with uniform sampling.

\subsection{A Common-Witness Sampling Lemma}\label{sec:sampling}
The symmetric Lov\'asz local lemma states that events of probability at
most $p$, each mutually independent of all but at most $D$ other events,
can all be avoided when $ep(D+1)\le1$. We use it in the following form.
\begin{lemma}[Common-witness sampling lemma]\label{lem:sampling}\label{thm:LLL}
Fix $L\ge1$. Let $P_n$ be a distribution on a finite set $\mathcal X_n$,
and let $\mathcal W_n(c)$ be a nonempty set of witnesses for each
$c\in\mathcal X_n$. Suppose independent samples $C_1,\ldots,C_{L+1}$
satisfy
\begin{equation}\label{eq:sampling-probability}
\Pr\left[\bigcap_{i=1}^{L+1}\mathcal W_n(C_i)\ne\varnothing\right]
\le 2^{-n(a-o(1))}.
\end{equation}
For every $0<R<a/L$, and all sufficiently large $n$, there is a set
$\cC_n\subseteq\mathcal X_n$ of size at least
$\lfloor2^{Rn}\rfloor/L$ such that every witness belongs to at most
$L$ members of $\cC_n$.
\end{lemma}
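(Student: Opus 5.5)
The plan is to apply the symmetric Lov\'asz local lemma to $N\eqdef\lfloor 2^{Rn}\rfloor$ independent samples $C_1,\ldots,C_N$ drawn from $P_n$. The bad events are indexed by $(L+1)$-subsets $S\subseteq[N]$:
\[
A_S\eqdef\Bigl\{\textstyle\bigcap_{i\in S}\mathcal W_n(C_i)\ne\varnothing\Bigr\}.
\]
Each $C_i$ has the law $P_n$ and the samples in $S$ are independent. Hence \eqref{eq:sampling-probability} gives $\Pr[A_S]\le p\eqdef 2^{-n(a-o(1))}$, uniformly in $S$.

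\textbf{Dependency count.} The event $A_S$ is a function of $(C_i)_{i\in S}$ only. It is therefore mutually independent of the family $\{A_{S'}:S'\cap S=\varnothing\}$. The number of $(L+1)$-sets $S'\ne S$ meeting $S$ is at most
\[
D\le (L+1)\binom{N-1}{L}\le (L+1)N^L\le (L+1)2^{RLn}.
\]
We then need
\[
e\,p\,(D+1)\le e\,2^{-n(a-o(1))}\bigl((L+1)2^{RLn}+1\bigr)\le 1 .
\]
Since $RL<a$, we can fix $\varepsilon>0$ with $RL<a-2\varepsilon$. For large $n$ the $o(1)$ term is below $\varepsilon$, and the left side is then at most $e(L+2)2^{-\varepsilon n}$, which tends to $0$. (If $N<L+1$ there are no events, and the argument below still applies.) So with positive probability no $A_S$ occurs, and we fix such an outcome $c_1,\ldots,c_N$.

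\textbf{Extracting the code.} Let $\cC_n=\{c_1,\ldots,c_N\}$ be the set of distinct values. Fix a witness $w$. The set of indices $i$ with $w\in\mathcal W_n(c_i)$ has size at most $L$, since otherwise some $(L+1)$-subset of these indices would give an event $A_S$ that occurs. Distinct members of $\cC_n$ containing $w$ each contribute at least one such index. Hence every witness belongs to at most $L$ members of $\cC_n$.

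\textbf{Handling repeated samples.} This is the only subtle point, since repeats could shrink $\cC_n$. Suppose a value $c$ appears at indices $i_1,\ldots,i_m$. Because $\mathcal W_n(c)$ is nonempty, any $w\in\mathcal W_n(c)$ lies in $\mathcal W_n(c_{i_j})$ for all $j$. The previous step then gives $m\le L$. Each value therefore occurs at most $L$ times, so
\[
|\cC_n|\ge N/L=\lfloor 2^{Rn}\rfloor/L,
\]
as claimed. I expect no serious obstacle beyond checking the uniformity of the $o(1)$ term and this multiplicity argument, which is exactly where the nonemptiness hypothesis on $\mathcal W_n(c)$ is used.
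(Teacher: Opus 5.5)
Your proof is correct and follows essentially the same route as the paper. Both apply the symmetric local lemma to the $(L+1)$-subsets of $\lfloor 2^{Rn}\rfloor$ independent samples with dependency degree at most $(L+1)\binom{N-1}{L}$, then use nonemptiness of $\mathcal W_n(c)$ to show each sampled word occurs at most $L$ times, so deduplication leaves at least $\lfloor 2^{Rn}\rfloor/L$ words; your explicit choice of $\varepsilon$ with $RL<a-2\varepsilon$ is just a more careful rendering of the paper's step $e\,p_n(D_n+1)\le 2^{n(LR-a+o(1))}<1$.
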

\begin{proof}
Fix $0<R<a/L$, and put $M=\lfloor2^{Rn}\rfloor$.
Choose $C_1,\ldots,C_M$ independently according to $P_n$.
For every $(L+1)$-subset $S\subseteq[M]$, let $E_S$ be the bad event
that the samples indexed by $S$ have a common witness:
\[
E_S=\left\{\bigcap_{i\in S}\mathcal W_n(C_i)\ne\varnothing\right\}.
\]
By hypothesis, every such event has probability at most
$p_n=2^{-n(a-o(1))}$. The event $E_S$ is determined by the samples
with indices in $S$ and is mutually independent of all events whose
index sets are disjoint from $S$. A neighboring index set contains at
least one $i\in S$ and $L$ further indices. Hence its dependency degree
$D_n$ satisfies
\[
D_n\le (L+1)\binom{M-1}{L},\qquad
D_n+1\le 2^{n(LR+o(1))}.
\]
The gap $a-LR$ is positive and independent of $n$. Therefore
\[
ep_n(D_n+1)
\le 2^{n(LR-a+o(1))}<1
\]
for all sufficiently large $n$. The Lov\'asz local lemma gives an
outcome in which none of the events $E_S$ occurs. In this outcome,
every witness belongs to at most $L$ sample indices.

The samples form a multiset, so it remains to obtain a code consisting
of distinct words. For any sampled word $c$, choose one witness in
$\mathcal W_n(c)$, which is nonempty by assumption. If $c$ occurred
at least $L+1$ times, those sample indices would share this witness and
would give a bad event. Thus every word occurs at most $L$ times.
Removing repeated copies leaves at least $M/L$ distinct words. This
operation cannot increase the number of codewords associated with a
witness. The resulting set therefore has the required property, and
\[
\frac1n\log_2|\cC_n|\ge
\frac1n\log_2\!\left(\frac{\lfloor2^{Rn}\rfloor}{L}\right)
=R-o(1).
\]
\end{proof}
The lemma applies to nonuniform as well as uniform distributions. In the
pure-deletion case the witnesses are members of $\cD_d(c)$; in the
pure-insertion case they are members of $\cI_d(c)$.
Repeated sampled words are included in the bad events throughout.

\subsection{Baseline Achievable Rates}\label{sec:baseline-rates}
We first apply the sampling lemma to uniform binary deletion codes,
and then use the error-budget reduction to obtain a mixed-error bound
over a general alphabet.

\begin{corollary}[Uniform deletion baseline]\label{cor:uniform-deletion}
For $L\ge1$ and $0<\delta<1/2$,
\begin{equation}\label{eq:uniform-deletion}
\cR_{2,\mathrm{del}}^{(L)}(\delta)
\ge\left[1+\frac{\delta}{L}-\frac{L+1}{L}h_2(\delta)\right]_+.
\end{equation}
\end{corollary}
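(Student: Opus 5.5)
We apply Lemma~\ref{lem:sampling} with $P_n$ uniform on $\{0,1\}^n$ and witness sets $\mathcal W_n(c)=\cD_d(c)$, where $d=\lfloor\delta n\rfloor$. As noted after Definition~\ref{def:list}, it suffices to test received words of length $n-d$. Hence a set in which every witness of length $n-d$ lies in at most $L$ members is $(0,d;L)$-list-decodable. It therefore suffices to show
\[
\Pr\Bigl[\textstyle\bigcap_{i=1}^{L+1}\cD_d(C_i)\ne\varnothing\Bigr]\le 2^{-n(a-o(1))},
\qquad a=L+\delta-(L+1)h_2(\delta).
\]
The lemma then gives every rate below $a/L=1+\delta/L-\frac{L+1}{L}h_2(\delta)$. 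When this quantity is nonpositive, the bound $[\,\cdot\,]_+=0$ holds trivially.

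The probability is bounded by a union bound over the candidate common subsequence $w\in\{0,1\}^{n-d}$. By independence,
\[
\Pr[\cdots]\le\sum_{w\in\{0,1\}^{n-d}}\Pr[w\subseq C]^{L+1}
=\sum_{w}\Bigl(\frac{I_2(n-d,d)}{2^n}\Bigr)^{L+1},
\]
since the number of length-$n$ supersequences of any length-$(n-d)$ word is $I_2(n-d,d)$ by \eqref{eq:insertion-sphere-Iq}, independently of $w$. This step is the key simplification: the uniform ensemble makes $\Pr[w\subseq C]$ constant in $w$. The sum therefore equals $2^{n-d}\,I_2(n-d,d)^{L+1}2^{-n(L+1)}$.

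It remains to estimate $I_2(n-d,d)=\sum_{j\le d}\binom{n}{j}\le (d+1)\binom{n}{d}=2^{n(h_2(\delta)+o(1))}$, which uses $\delta<1/2$ so that the largest term is $j=d$. The exponent of the probability is then
\[
(1-\delta)+(L+1)h_2(\delta)-(L+1)=-(L+\delta-(L+1)h_2(\delta)),
\]
which is the claimed $a$. The remaining points are routine: the $o(1)$ terms from the floors and the polynomial factor, and whether $a>0$ is needed. The lemma only requires $0<R<a/L$, so if $a\le0$ we just use the trivial zero bound.

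The main obstacle is minor. One must check that the lemma's conclusion ("every witness belongs to at most $L$ members") with exact-length witnesses matches Definition~\ref{def:list} for at most $d$ deletions. The shortening remark after Definition~\ref{def:list} handles this: any common subsequence of length greater than $n-d$ can be shortened to one of length exactly $n-d$.
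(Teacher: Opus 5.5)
Your proposal is correct and follows the paper's proof essentially step for step: uniform sampling, a union bound over the $2^{n-d}$ candidate witnesses using the witness-independent count $I_2(n-d,d)=\sum_{j\le d}\binom{n}{j}=2^{n(h_2(\delta)+o(1))}$, and then Lemma~\ref{lem:sampling} with exponent $a=L+\delta-(L+1)h_2(\delta)$, with a singleton code covering the case $a\le 0$. Your explicit check that exact-length witnesses suffice, via the shortening remark after Definition~\ref{def:list}, is a detail the paper leaves implicit in this proof.
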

\begin{proof}
Set $d=\lfloor\delta n\rfloor$, and sample each $C_i$ uniformly from
$\{0,1\}^n$. For a fixed witness $w\in\{0,1\}^{n-d}$,
\eqref{eq:insertion-sphere-Iq} gives
\[
\bigl|\{c\in\{0,1\}^n:w\subseq c\}\bigr|
=I_2(n-d,d)=\sum_{j=0}^d\binom nj.
\]
Since $0<\delta<1/2$, this sum is $2^{n(h_2(\delta)+o(1))}$.
Consequently,
\[
\Pr[w\subseq C_i]
=2^{-n(1-h_2(\delta)+o(1))}.
\]
For a fixed set $S$ of $L+1$ sample indices, independence and a union
bound over the $2^{n-d}$ possible witnesses give
\begin{align*}
\Pr(E_S)
&\le\sum_{w\in\{0,1\}^{n-d}}
       \prod_{i\in S}\Pr[w\subseq C_i]\\
&\le2^{n(1-\delta+o(1))}
     2^{-n(L+1)(1-h_2(\delta)+o(1))}\\
&=2^{-n\{L+\delta-(L+1)h_2(\delta)-o(1)\}}.
\end{align*}
Applying Lemma~\ref{lem:sampling}, every positive rate strictly below
\[
\frac{L+\delta-(L+1)h_2(\delta)}{L}
=1+\frac\delta L-\frac{L+1}{L}h_2(\delta)
\]
is achievable. If this expression is nonpositive, a singleton code
provides the zero lower bound.
\end{proof}
For $L=1$, Corollary~\ref{cor:uniform-deletion} gives
\[
\cR_{2,\mathrm{del}}^{(1)}(\delta)
\ge\bigl[1+\delta-2h_2(\delta)\bigr]_+,
\]
which is exactly the classical Levenshtein lower bound
\eqref{eq:levenshtein-baseline}, with its nonnegative part taken.
Thus, the uniform deletion baseline extends the Levenshtein lower bound
from unique decoding to every fixed list size $L$.

This baseline is $\frac{L+1}{L}$ times the fixed-$L$ union-bound
expression $1-h_2(\delta)-(1-\delta)/(L+1)$ from the HSS random-code
analysis~\cite{HaeuplerShahrasbiSudan2018SyncStringsList}.
The local-lemma improvement and the subsequent improvement from run
restriction are separate steps.

Combining the same sampling lemma with the reduction in
Proposition~\ref{prop:ins2del} gives a coarse mixed-error baseline.
\begin{corollary}[A coarse $q$-ary local-lemma bound]
\label{prop:coarse-mixed-lll}
Let $q\ge2$, $L\ge1$, and $\gamma,\delta\ge0$, and put
$s=\gamma+L\delta$. Then
\begin{equation}\label{eq:coarse-mixed-lower}
\cR_q^{(L)}(\gamma,\delta)
\ge\left[1-\frac sL
-\frac{L+1}{L}(1+s)h_2\!\left(\frac1{1+s}\right)\log_q2\right]_+.
\end{equation}
\end{corollary}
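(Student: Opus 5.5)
The plan is to reduce mixed errors to pure insertions via Proposition~\ref{prop:ins2del}, and then bound the pure-insertion rate by uniform $q$-ary sampling combined with Lemma~\ref{lem:sampling}.

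\emph{Reduction.} Set $A_n=\lfloor\gamma n\rfloor$, $B_n=\lfloor\delta n\rfloor$ and $T_n=A_n+LB_n$. Apply Proposition~\ref{prop:ins2del} with $a=A_n$, $b=0$, $t=B_n$. It shows that every $(T_n,0;L)$-list-decodable code is $(A_n,B_n;L)$-list-decodable. Note that $T_n\le\lfloor sn\rfloor$ and $T_n/n\to s$. Since insertion list decodability is monotone in the budget, it is enough to build codes that are list-decodable from $\lfloor sn\rfloor$ insertions, or from $T_n$ insertions directly. The rate then transfers unchanged, because the code itself is not modified.

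\emph{Pure-insertion bound.} Sample $C_1,\dots,C_{L+1}$ uniformly from $\Sigma_q^n$ and use witness sets $\mathcal W_n(c)=\cI_T(c)$ with $T=T_n$. These sets are nonempty, and by the remark after Definition~\ref{def:list} it suffices to test received words of length exactly $n+T$.
\begin{enumerate}[label=(\roman*)]
\item For fixed $w\in\Sigma_q^{n+T}$, bound the number of length-$n$ subsequences of $w$ by $\binom{n+T}{n}$, which counts position sets. Hence
\[
\Pr[C_i\subseq w]\le\binom{n+T}{n}q^{-n}.
\]
\item Take a union bound over all $w$, using independence of the samples:
\[
\Pr(E_S)\le q^{n+T}\binom{n+T}{n}^{L+1}q^{-n(L+1)}.
\]
\item Use $\binom{n+T}{n}=2^{(n+T)h_2(n/(n+T))+o(n)}=2^{n(1+s)h_2(1/(1+s))+o(n)}$. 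Taking logarithms base $q$, the decay exponent is
\[
a=L-s-(L+1)(1+s)h_2\!\left(\frac{1}{1+s}\right)\log_q 2.
\]
Lemma~\ref{lem:sampling} is phrased with base-$2$ exponents. I will either reread it in base $q$ (the proof is identical) or convert by multiplying the exponent by $\log_2 q$.
\end{enumerate}

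\emph{Conclusion.} Lemma~\ref{lem:sampling} then yields a code in which each witness $w$, a common supersequence, lies above at most $L$ codewords. This is exactly $(T_n,0;L)$-list-decodability. Every rate below $a/L$, which is the bracketed expression in \eqref{eq:coarse-mixed-lower}, is therefore achievable. When that expression is nonpositive, the zero bound is trivial.

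\emph{Main obstacle.} The only delicate points are bookkeeping: the integer budgets versus $s$, and the $o(1)$ terms. Because $T_n\le sn$, the counts are monotone in $T$, and the exponent is continuous in $s$, the $o(1)$ absorbs these discrepancies. No run structure is needed. The bound is "coarse" precisely because step (i) ignores repeated subsequences.
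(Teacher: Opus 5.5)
Your proposal is correct and follows essentially the same route as the paper's proof. Both use uniform $q$-ary sampling, the $\binom{n+T}{n}$ bound on the length-$n$ subsequences of a fixed witness, a union bound over the $q^{n+T}$ witnesses, Lemma~\ref{lem:sampling} with a base-$q$ conversion, and the transfer to mixed errors via Proposition~\ref{prop:ins2del} using $A_n+LB_n\le\lfloor sn\rfloor$. The one small difference is that you work with $T_n$ directly, which also covers $s=0$ without the paper's separate full-space remark.
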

\begin{proof}
For pure insertions with fraction $s>0$, set $d=\lfloor sn\rfloor$ and
$m=n+d$. A fixed witness $w\in\Sigma_q^m$ has at most $\binom mn$
length-$n$ subsequences. For independent uniform samples from $\Sigma_q^n$,
a union bound over the $q^m$ witnesses therefore gives
\begin{align*}
\Pr(E_S)
&\le q^{m-n(L+1)}\binom mn^{L+1}\\
&\le q^{n(s-L+o(1))}
  2^{n(L+1)(1+s)h_2(1/(1+s))+o(n)}.
\end{align*}
Applying Lemma~\ref{lem:sampling} and converting its binary rate to base
$q$ gives \eqref{eq:coarse-mixed-lower} for pure insertions; the lemma
also supplies the distinct-word conclusion. For mixed errors, let
$A_n=\lfloor\gamma n\rfloor$ and $B_n=\lfloor\delta n\rfloor$.
Since $A_n+LB_n\le\lfloor sn\rfloor$, Proposition~\ref{prop:ins2del}
transfers insertion correction to $(A_n,B_n;L)$-list decodability.
A nonpositive expression is replaced by zero, and $s=0$ is handled by
taking the full space.
\end{proof}
The bound in Corollary~\ref{prop:coarse-mixed-lll} holds for all $s>0$; when $s\geq q-1$, its right-hand side is zero.


\section{Improving Levenshtein's Asymptotic Rate Bound}\label{sec:unique}
We first treat two codewords and prove Theorem~\ref{thm:unique-main}.
This case contains the essential counting idea and the strict-improvement
argument, without the additional notation needed for general $L$.

\subsection{A full-exponent ensemble and its common witnesses}
For $1\le r\le n$, let
\begin{equation}\label{eq:run-ensemble}
\cU(n,r)=\{c\in\{0,1\}^n:c_1=0,\ \runs(c)=r\}.
\end{equation}
A word in this set is determined by its $r-1$ run boundaries, so
\begin{equation}\label{eq:run-ensemble-size}
|\cU(n,r)|=\binom{n-1}{r-1}
=2^{nh_2(\rho)+o(n)}\qquad\text{when }r=\lfloor\rho n\rfloor,
\quad0<\rho<1.
\end{equation}
In particular, $\cU(n,\lfloor n/2\rfloor)$ has exponent one.
For a witness $w\in\{0,1\}^{n-d}$, define
\begin{equation}\label{eq:fixed-witness-count}
M(w,d;r)=|\{c\in\cU(n,r):w\subseq c\}|,
\qquad
M_1(n,d;r)=\sum_{w\in\{0,1\}^{n-d}}M(w,d;r)^2.
\end{equation}
Thus $M_1$ counts triples $(w,c_1,c_2)$, not distinct pairs $(c_1,c_2)$.
For independent uniform samples from $\cU(n,r)$,
\begin{equation}\label{eq:pair-witness-probability}
\Pr[\text{$C_1,C_2$ have a common length-$(n-d)$ subsequence}]
\le \frac{M_1(n,d;r)}{|\cU(n,r)|^2}.
\end{equation}

\subsection{The leftmost embedding and five local factors}
Fix a witness $w$. To count each codeword containing it exactly once, we use
the unique \emph{leftmost embedding}: match each witness symbol to its
first available occurrence after the preceding match.
Let $x$ mark the number of unmatched codeword symbols, and let $y$ mark
the number of codeword runs. The local factors are given in
Table~\ref{tab:leftmost-factors}. A same-symbol transition can contain
only a block of the opposite symbol before the next match. A flip
transition can contain only extra copies of the previous symbol.

\begin{table}[tb]
\centering
\renewcommand{\arraystretch}{1.3}
\begin{tabularx}{\textwidth}{@{}l X l@{}}
\toprule
Factor & Part of the leftmost embedding & Generating function\\
\midrule
$S_0(x,y)$ & First witness symbol is $0$; it is the first codeword symbol. & $y$\\
$S_1(x,y)$ & First witness symbol is $1$, preceded by a nonempty zero block. & $xy^2/(1-x)$\\
$\Gsame(x,y)$ & Consecutive witness symbols agree. & $1+xy^2/(1-x)$\\
$\Gflip(x,y)$ & Consecutive witness symbols differ. & $y/(1-x)$\\
$\Ttail(x,y)$ & Unrestricted suffix after the final match. & $1/(1-x(1+y))$\\
\bottomrule
\end{tabularx}
\caption{The variables mark unmatched symbols and runs, not the values of
the binary symbols. The suffix factor records that an appended symbol
continues the current run or starts a new one.}
\label{tab:leftmost-factors}
\end{table}

For example, when $w=101$, a codeword starting with $0$ has the unique
leftmost-embedding decomposition
\[
0^+\underline 1\;1^*\underline 0\;0^*\underline 1\;\{0,1\}^*,
\]
where the underlined symbols form the witness, $+$ means a nonempty block,
and $*$ permits an empty block. Its generating function is
$S_1(x,y)\Gflip(x,y)^2\Ttail(x,y)$.
The local factors count additional runs as the word is scanned, so their
product does not double count run boundaries.

\begin{lemma}[The two-codeword generating function]\label{lem:pair-gf}
Set
\begin{align*}
\mathcal A_1(x_1,x_2,y_1,y_2)
&=\bigl(S_0(x_1,y_1)S_0(x_2,y_2)+S_1(x_1,y_1)S_1(x_2,y_2)\bigr)
  \prod_{i=1}^2\Ttail(x_i,y_i),\\
\mathcal K_1(x_1,x_2,y_1,y_2)
&=\Gsame(x_1,y_1)\Gsame(x_2,y_2)
 +\Gflip(x_1,y_1)\Gflip(x_2,y_2).
\end{align*}
For $n-d\ge1$,
\begin{equation}\label{eq:pair-gf}
M_1(n,d;r)
=[x_1^dx_2^dy_1^ry_2^r]\,
\mathcal A_1\mathcal K_1^{\,n-d-1}.
\end{equation}
\end{lemma}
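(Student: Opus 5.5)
We prove \eqref{eq:pair-gf} by first establishing a single-codeword identity for a fixed witness and then multiplying two copies of it. Fix $w=w_1\cdots w_m$ with $m=n-d\ge1$. For a word $c$ with $c_1=0$ and $w\subseq c$, take the leftmost embedding. It splits $c$ uniquely into:
\begin{itemize}
\item a prefix ending at the match of $w_1$;
\item for each $k=1,\ldots,m-1$, a segment strictly after the match of $w_k$ and ending at the match of $w_{k+1}$;
\item a suffix after the match of $w_m$.
\end{itemize}
The plan is to show that the single-codeword generating function
\[
F_w(x,y)=\sum_{c}x^{|c|-m}y^{\runs(c)}
\]
factors as $P_{w_1}(x,y)\prod_{k=1}^{m-1}G_{w_k,w_{k+1}}(x,y)\,\Ttail(x,y)$. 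Here $P_0=S_0$, $P_1=S_1$, and $G$ equals $\Gsame$ or $\Gflip$ according to whether $w_k=w_{k+1}$. The exponent of $x$ records unmatched symbols, so $[x^dy^r]F_w=M(w,d;r)$.

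\textbf{Step 1: characterize each segment.} By leftmost-ness, the unmatched symbols between the matches of $w_k$ and $w_{k+1}$ all differ from $w_{k+1}$. Conversely, any such segment arises from a leftmost embedding. So a same-type segment has the form $\bar a^{j}a$ with $j\ge0$, where $a=w_k=w_{k+1}$. A flip-type segment has the form $a^{j}\bar a$, where $a=w_k$, since the unmatched symbols must equal $\bar{w}_{k+1}=w_k$. The prefix must equal $0$ if $w_1=0$, and $0^{j}1$ with $j\ge1$ if $w_1=1$, because $c_1=0$ is forced. The suffix is unrestricted.

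\textbf{Step 2: assign runs locally.} We attribute runs additively by counting, in each piece, the number of positions $p$ with $c_p\ne c_{p-1}$ (with $c_1$ counting as one run). Each piece begins right after a symbol known from $w$. This makes the counts local:
\begin{itemize}
\item A same-type segment with $j=0$ contributes $1$. With $j\ge1$ it contributes $x^jy^2$, giving $1+xy^2/(1-x)=\Gsame$.
\item A flip-type segment contributes $x^jy$, since the only change is at $\bar a$. This gives $y/(1-x)=\Gflip$.
\item The prefix gives $y$ or $\sum_{j\ge1}x^jy^2=xy^2/(1-x)$.
\item In the suffix, each appended symbol either repeats the previous one or changes it, giving $1/(1-x(1+y))=\Ttail$.
\end{itemize}
The total exponent of $y$ is exactly $\runs(c)$, so the product formula holds.

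\textbf{Step 3: sum over pairs and over witnesses.} By Step 2,
\[
M(w,d;r)^2=[x_1^dx_2^dy_1^ry_2^r]\,F_w(x_1,y_1)F_w(x_2,y_2).
\]
Summing over all $w\in\{0,1\}^{m}$ and expanding: the choice of $w_1$ yields the factor $S_0S_0+S_1S_1$. Each of the $m-1$ consecutive pairs independently chooses same or flip, and each pattern corresponds to two binary words. The first symbol fixes the word, and the same/flip pattern then determines $w$ bijectively. Hence the sum over transition patterns is $\mathcal K_1^{\,m-1}$, and the two tails give the remaining factor of $\mathcal A_1$. Coefficient extraction is linear, so \eqref{eq:pair-gf} follows. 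Note that $S_0,S_1,\Gsame,\Gflip$ do not depend on the actual symbol values, only on the same/flip pattern; this is what makes the product form valid.

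\textbf{Main obstacle.} The delicate part is Step 1: showing that leftmost embeddings correspond bijectively to concatenations of the stated block forms, in both directions. This needs an induction on $k$ showing that the greedy match of $w_{k+1}$ lands exactly at the end of the prescribed segment. The run bookkeeping at segment boundaries also needs care, but it is routine once each piece is known to start after a determined symbol.
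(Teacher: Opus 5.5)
Your proposal is correct and takes essentially the same route as the paper. The paper likewise proves the fixed-witness factorization through the unique leftmost embedding (Lemma~\ref{lem:fixed-witness} in Appendix~\ref{app:gf}), takes two copies in separate variables, and sums over witnesses, so that the shared first symbol gives $\mathcal A_1$ and the shared same/flip transitions give $\mathcal K_1^{\,n-d-1}$. Your piece-by-piece run bookkeeping and your bijectivity check of the leftmost decomposition spell out details that the paper states only briefly.
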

\begin{proof}
A binary witness is determined by its first symbol and the choices
of same or flip between consecutive symbols. The first choice is shared
by both codewords, giving the sum in $\mathcal A_1$. Each subsequent
choice is likewise shared, giving $\mathcal K_1$.
Each codeword has its own suffix. The leftmost embedding is unique, and
the two pairs of variables impose $d$ unmatched symbols and $r$ runs in
each word. Thus coefficient extraction counts every witness--pair
incidence exactly once.
\end{proof}

\subsection{A positive-coefficient estimate and the strict improvement}
Let
\begin{equation}\label{eq:del-domain}
\Ddomain=\{(x,y)\in\R_{>0}^2:x(1+y)<1\},
\end{equation}
and define the symmetric kernel
\begin{equation}\label{eq:pair-kernel}
K_1(x,y)=\left(1+\frac{xy^2}{1-x}\right)^2
          +\left(\frac{y}{1-x}\right)^2.
\end{equation}
Write the generating function in \eqref{eq:pair-gf} as a power series:
\[
\mathcal A_1(\vx,\vy)\mathcal K_1(\vx,\vy)^{n-d-1}
=\sum_{i_1,i_2,j_1,j_2\ge0}
 a_{i_1,i_2,j_1,j_2}x_1^{i_1}x_2^{i_2}y_1^{j_1}y_2^{j_2}.
\]
The definitions of the five local factors show that all coefficients
$a_{i_1,i_2,j_1,j_2}$ are nonnegative. Now evaluate at
$x_1=x_2=x$ and $y_1=y_2=y$, where $(x,y)\in\Ddomain$.
The series converges there. On the right-hand side, the term with
$i_1=i_2=d$ and $j_1=j_2=r$ equals
$M_1(n,d;r)x^{2d}y^{2r}$; all other terms are nonnegative.
The left-hand side is $A_1(x,y)K_1(x,y)^{n-d-1}$. Therefore
\begin{equation}\label{eq:pair-coefficient-bound}
M_1(n,d;r)
\le A_1(x,y)K_1(x,y)^{n-d-1}x^{-2d}y^{-2r},
\end{equation}
where $A_1(x,y)=\mathcal A_1(x,x,y,y)$ is independent of $n$.
For $d=\lfloor\delta n\rfloor$ and $r=\lfloor n/2\rfloor$, this gives
\begin{equation}\label{eq:unique-phi}
\limsup_{n\to\infty}\frac1n\log_2M_1(n,d;r)
\le \Phi_\delta(x,y)
\eqdef (1-\delta)\log_2K_1(x,y)-2\delta\log_2x-\log_2y.
\end{equation}
Define
\begin{equation}\label{eq:unique-rate}
\Runi(\delta)
\eqdef 2-\inf_{(x,y)\in\Ddomain}\Phi_\delta(x,y).
\end{equation}
Every feasible evaluation point already gives a rate lower bound;
attainment of the infimum is not required.

\begin{proof}[Proof of Theorem~\ref{thm:unique-main}]
Fix $d=\lfloor\delta n\rfloor$, $r=\lfloor n/2\rfloor$, and a
positive point $(x,y)\in\Ddomain$. Taking logarithms in
\eqref{eq:pair-coefficient-bound}, dividing by $n$, and using that
$A_1(x,y)$ is fixed and finite gives \eqref{eq:unique-phi}.
Since $|\cU(n,r)|=2^{n-o(n)}$, we obtain
\[
\Pr(E_{\{1,2\}})
\le\frac{M_1(n,d;r)}{|\cU(n,r)|^2}
\le2^{-n(2-\Phi_\delta(x,y)-o(1))}.
\]
Lemma~\ref{lem:sampling} with $L=1$ now gives every positive rate
below $2-\Phi_\delta(x,y)$. Taking the supremum over fixed feasible
points proves the achievable-rate assertion in \eqref{eq:unique-main}.
A zero lower bound is supplied by a singleton code.

It remains to prove the strict comparison. Because $\delta<1/2$,
the point $(x,y)=(\delta,1)$ lies in $\Ddomain$. At this point,
\[
K_1(\delta,1)=\frac2{(1-\delta)^2}.
\]
Consequently,
\begin{align*}
\Phi_\delta(\delta,1)
&=(1-\delta)\log_2\frac2{(1-\delta)^2}
  -2\delta\log_2\delta\\
&=1-\delta+2h_2(\delta)
=2-R_{\mathrm{Lev}}(\delta).
\end{align*}
Moreover,
\begin{equation}\label{eq:unique-derivative}
\left.\frac{\partial\Phi_\delta}{\partial y}\right|_{(\delta,1)}
=\left.\frac{\partial}{\partial y}
 \bigl((1-\delta)\log_2K_1(\delta,y)-\log_2y\bigr)\right|_{y=1}
=\frac{\delta(1-2\delta)}{\ln2}>0.
\end{equation}
For clarity, at $y=1$ the two summands in $K_1(\delta,y)$ are equal;
their logarithmic derivatives with respect to $y$ are $4\delta$ and
$2$. Thus $(\partial_y K_1)/K_1=2\delta+1$, which gives the last
identity in \eqref{eq:unique-derivative}.

By continuity, there exists $y'<1$ sufficiently close to one such that
$\delta(1+y')<1$ and
\[
\Phi_\delta(\delta,y')<\Phi_\delta(\delta,1)
=2-R_{\mathrm{Lev}}(\delta).
\]
Substituting this feasible point into \eqref{eq:unique-rate} yields
\[
\Runi(\delta)
\ge2-\Phi_\delta(\delta,y')
>R_{\mathrm{Lev}}(\delta),
\]
as required.
\end{proof}
The strict-improvement proof uses only the exact generating function and
nonnegativity of its coefficients. In particular, it does not assume the
existence of an interior critical point.

\subsection{Size of the rate improvement}\label{sec:unique-numerics}
At $\delta=0.1$, take the fixed feasible points
\begin{align*}
(\rho,x,y)&=(0.5,\ 0.10927370505,\ 0.899381588422),\\
(\rho,x,y)&=(0.418847635271,\ 0.133574259813,\ 0.720719144739).
\end{align*}
They satisfy $x(1+y)<1$, with products approximately $0.207552$ and
$0.229844$. The first evaluates the certificate defining \eqref{eq:unique-rate}; the second uses
\eqref{eq:del-point-certificate} at $L=1$. The resulting expressions are
$0.168125799063810\ldots$ and $0.180431344210118\ldots$, so the rates
$0.168125$ and $0.180431$, rounded down, are guaranteed.
The latter exceeds $R_{\mathrm{Lev}}(0.1)=0.162008812821437\ldots$ by
approximately $0.0184225$ bits per symbol, or $11.37\%$.
Figure~\ref{fig:unique-rate-comparison} displays the rate curves; further
comparisons and the numerical conventions appear in
Section~\ref{sec:comparison}.

\begin{figure}[tbp]
\centering
\includegraphics{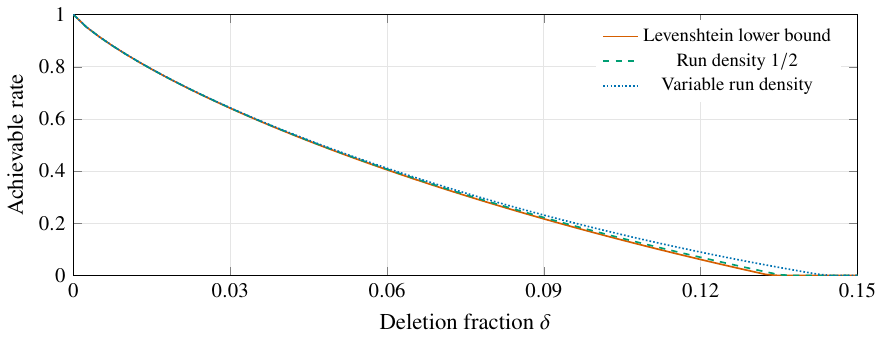}
\caption{Unique-decoding lower bounds, with negative expressions replaced
by zero. The new curves use feasible evaluations at density $1/2$ and
variable density; no global optimization is asserted.}
\label{fig:unique-rate-comparison}
\end{figure}

\section{Achievable Rates for Deletion List Decoding}\label{sec:deletions}
We now let $L\ge1$ be arbitrary and retain its numerical value in the
counting exponent. We derive the generating function, evaluate its
coefficient exponent through critical equations, and substitute that
exponent into the sampling bound. Positive evaluations are then recorded
as auxiliary certificates.

\subsection{The common-witness count and its generating function}
For the quantity $M(w,d;r)$ in \eqref{eq:fixed-witness-count}, set
\begin{equation}\label{eq:ML-definition}
M_L(n,d;r)=\sum_{w\in\{0,1\}^{n-d}}M(w,d;r)^{L+1}.
\end{equation}
For independent uniform samples from $\cU(n,r)$, a union bound gives
\begin{equation}\label{eq:deletion-bad-probability}
\Pr(E_S)\le M_L(n,d;r)/|\cU(n,r)|^{L+1}.
\end{equation}
The repeated-witness overcount is intentional; it is an upper bound on
the probability of a bad tuple.

\begin{theorem}[The common-subsequence generating function]
\label{thm:list-witness-gf}\label{cor:list-witness-rational-gf}
For $\vx=(x_1,\ldots,x_{L+1})$ and $\vy=(y_1,\ldots,y_{L+1})$, define
\begin{align}
\mathcal A_L(\vx,\vy)
&=\left(\prod_{i=1}^{L+1}S_0(x_i,y_i)
       +\prod_{i=1}^{L+1}S_1(x_i,y_i)\right)
  \prod_{i=1}^{L+1}\Ttail(x_i,y_i),\label{eq:AL-definition}\\
\mathcal K_L(\vx,\vy)
&=\prod_{i=1}^{L+1}\Gsame(x_i,y_i)
 +\prod_{i=1}^{L+1}\Gflip(x_i,y_i).\label{eq:KL-definition}
\end{align}
If $n-d\ge1$, then
\begin{align}
M_L(n,d;r)
&=\left[\prod_{i=1}^{L+1}x_i^dy_i^r\right]
  \mathcal A_L(\vx,\vy)\mathcal K_L(\vx,\vy)^{n-d-1}
  \label{eq:deletion-gf-power}\\
&=\left[z^{n-d}\prod_{i=1}^{L+1}x_i^dy_i^r\right]
  \frac{z\mathcal A_L(\vx,\vy)}{1-z\mathcal K_L(\vx,\vy)}.
  \label{eq:deletion-gf-rational}
\end{align}
\end{theorem}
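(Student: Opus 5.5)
The plan is to generalize the proof of Lemma~\ref{lem:pair-gf} from two codewords to $L+1$. First, for a single fixed witness $w\in\{0,1\}^{n-d}$, we establish the single-word identity
\[
\sum_{c\in\{0,1\}^n,\ c_1=0,\ w\subseq c} x^{\,n-(n-d)}y^{\runs(c)}
\text{ restricted appropriately}
\;=\;S_{w_1}(x,y)\prod_{k=1}^{n-d-1}G_{w_k,w_{k+1}}(x,y)\,\Ttail(x,y),
\]
where $G_{w_k,w_{k+1}}$ is $\Gsame$ or $\Gflip$ according as $w_k=w_{k+1}$ or not. In other words, the coefficient of $x^dy^r$ in $S_{w_1}\prod_k G_{w_k,w_{k+1}}\Ttail$ equals $M(w,d;r)$. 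This comes from the unique leftmost-embedding decomposition. Every codeword $c$ with $c_1=0$ containing $w$ factors uniquely as: a prefix up to and including the match of $w_1$; for each $k$, the segment strictly after the match of $w_k$ up to and including the match of $w_{k+1}$; and the suffix after the last match.

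Next we check each table entry, tracking unmatched symbols by $x$ and new runs by $y$.
\begin{itemize}
\item \emph{Prefix.} Leftmost matching forces the prefix before the match of $w_1$ to avoid the symbol $w_1$. If $w_1=0$, the prefix is empty and the match is $c_1$, which opens the first run, giving $y$. If $w_1=1$, the prefix is a nonempty $0$-block (one run, $x/(1-x)$) followed by the matched $1$ (a new run), giving $xy^2/(1-x)$.
\item \emph{Same transition} ($w_{k+1}=w_k=a$). The gap must avoid $a$, so it is either empty (the matched $a$ continues the run, factor $1$) or a nonempty $\bar a$-block, which opens a run, after which the matched $a$ opens another. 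This gives $xy^2/(1-x)$.
\item \emph{Flip transition} ($w_{k+1}=\bar a$). The gap consists of $a$'s only, continuing the current run, and the matched $\bar a$ opens one run. This gives $y/(1-x)$.
\item \emph{Suffix.} Each appended symbol either continues or starts a run, giving $1/(1-x(1+y))$.
\end{itemize}
Run boundaries are counted exactly once because each factor counts only the runs it opens, relative to the known last symbol.

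Then take $L+1$ copies with separate variables $(x_i,y_i)$. Because $w$ is shared, $M(w,d;r)^{L+1}$ is the coefficient of $\prod_i x_i^dy_i^r$ in
\[
\prod_i S_{w_1}(x_i,y_i)\;\prod_k\prod_i G_{w_k,w_{k+1}}(x_i,y_i)\;\prod_i\Ttail(x_i,y_i).
\]
Summing over all $w$ amounts to choosing $w_1\in\{0,1\}$ independently of each of the $n-d-1$ same/flip choices, and this bijection with $\{0,1\}^{n-d}$ is clear. Distributing the sum over these product choices gives $\mathcal A_L\mathcal K_L^{\,n-d-1}$, by linearity of coefficient extraction. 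This proves \eqref{eq:deletion-gf-power}. The extraction is well defined because all factors are formal power series with nonnegative coefficients, and each $\Gflip$ and $S_1$ has enough $y$ to make the products well defined; in any case, we work in $\mathbb Z[[\vx]][\vy]$-type formal series where only finitely many terms contribute to each monomial.

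Finally, \eqref{eq:deletion-gf-rational} follows by expanding
\[
\frac{z\mathcal A_L}{1-z\mathcal K_L}=\sum_{m\ge1}z^m\mathcal A_L\mathcal K_L^{m-1}
\]
and reading off the $z^{n-d}$ term. The main obstacle is the careful verification that the local factors encode runs without double counting and that the leftmost embedding gives a bijection. In particular, we must check that the prefix condition $c_1=0$ is correctly built into $S_0$ and $S_1$; everything else is bookkeeping.
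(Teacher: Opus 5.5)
Your proposal is correct and follows essentially the same route as the paper. It uses the fixed-witness identity from the unique leftmost embedding and the five local factors, then separate variables $(x_i,y_i)$ for the $L+1$ copies. The sum over witnesses then factors through the shared first symbol and the shared same/flip transitions into $\mathcal A_L\mathcal K_L^{\,n-d-1}$, and the geometric expansion in $z$ gives the rational form.

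Your well-definedness aside is not needed, since everything is a finite product in $\Z[[\vx,\vy]]$ and $z\mathcal K_L$ has no constant term. It is also slightly off: $\Ttail$ has unbounded $y$-degree, so it does not lie in a ring like $\Z[[\vx]][\vy]$. This does not affect the argument.
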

The proof is the same shared first-symbol and same/flip argument as in
Lemma~\ref{lem:pair-gf}. Appendix~\ref{app:gf} gives the full derivation and
a finite-length example. Notice that the products are formed \emph{before}
the two transition types are added: every codeword must use the same
witness transition.

Let
\begin{equation}\label{eq:KL-symmetric}
K_L(x,y)=\mathcal K_L(x,\ldots,x,y,\ldots,y)
=\left(1+\frac{xy^2}{1-x}\right)^{L+1}
 +\left(\frac{y}{1-x}\right)^{L+1}.
\end{equation}
\subsection{A critical-point tool for large-power coefficients}\label{sec:critical-tool}
The coefficient identities above have the large-power form $[\vu^{\mathbf j_N}]B(\vu)K(\vu)^N$.
The following lemma evaluates their exponential growth by solving
logarithmic-gradient equations. Its assumptions concern the convergence
domain and the support of the two series; rationality is not required.
Here $\ln$ denotes the natural logarithm, while coefficient exponents
are expressed in base two.

\begin{lemma}[Critical-Point Coefficient Lemma]
\label{lem:critical-point-coefficient}\label{lem:coefficient-powers}
Let $s\ge1$ and let $B(\vu)$ and $K(\vu)$ be fixed, nonzero power series
in $\vu=(u_1,\ldots,u_s)$ with nonnegative real coefficients. Set
$F(z,\vu)=B(\vu)/(1-zK(\vu))$. Let
$\mathbf j_N\in\Z_{\ge0}^s$ satisfy $\mathbf j_N/N\to\boldsymbol\mu$.
Suppose the following conditions hold.
\begin{enumerate}[label=\textnormal{(\alph*)},leftmargin=2em,itemsep=.25em]
\item \emph{Critical-point equations.} There is $\vu^*>0$ such that
\begin{equation}\label{eq:coefficient-critical-equations}
\frac{u_i^*K_{u_i}(\vu^*)}{K(\vu^*)}=\mu_i\quad(1\le i\le s).
\end{equation}
Set $z^*=1/K(\vu^*)$.
\item \emph{Interior convergence.} Both power series converge in a
neighborhood of $\vu^*$.
\item \emph{Full-rank support-difference lattice.} With
$\supp(K)=\{v\in\Z_{\ge0}^s:[\vu^v]K(\vu)>0\}$, the lattice
\[
\Lambda(K)=\operatorname{span}_{\Z}
\{v-v':v,v'\in\supp(K)\}
\]
has rank $s$.
\item \emph{Lattice compatibility.} For some $v_0\in\supp(K)$, there
is a fixed monomial $b_w\vu^w$ of $B$, with $b_w>0$, such that
$\mathbf j_N-w\in Nv_0+\Lambda(K)$ for all sufficiently large $N$.
\end{enumerate}
Then $c_N=[z^N\vu^{\mathbf j_N}]F(z,\vu)$ is eventually positive and
\begin{equation}\label{eq:coefficient-power-estimate}
\lim_{N\to\infty}\frac1N\log_2 c_N
=\log_2K(\vu^*)-\langle\boldsymbol\mu,\log_2\vu^*\rangle
=-\log_2z^*-\langle\boldsymbol\mu,\log_2\vu^*\rangle.
\end{equation}
\end{lemma}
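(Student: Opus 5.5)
The plan is to prove matching upper and lower bounds on $\frac1N\log_2 c_N$. First note $c_N=[\vu^{\mathbf j_N}]B(\vu)K(\vu)^N$, since $[z^N]$ of $1/(1-zK)$ is $K^N$.

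\emph{Upper bound.} This is the positive-evaluation argument already used in \eqref{eq:pair-coefficient-bound}. All coefficients are nonnegative, and by (b) the series converge at $\vu^*$. Hence
\[
c_N\le B(\vu^*)K(\vu^*)^N(\vu^*)^{-\mathbf j_N}.
\]
Taking $\frac1N\log_2$ and using $\mathbf j_N/N\to\boldsymbol\mu$ gives $\limsup\le\log_2K(\vu^*)-\langle\boldsymbol\mu,\log_2\vu^*\rangle$.

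\emph{Lower bound.} I would use exponential tilting and a local limit theorem. Define a random vector $X\in\Z_{\ge0}^s$ with
\[
\Pr[X=v]=[\vu^v]K\cdot(\vu^*)^v/K(\vu^*).
\]
This is a probability distribution by (b). Since $K$ converges in a neighborhood of $\vu^*$, $X$ has exponential moments, and in particular finite variance. The critical equations (a) say exactly that $\E X=\boldsymbol\mu$. Let $S_N$ be a sum of $N$ i.i.d. copies of $X$. Then
\[
[\vu^{v}]K^N=K(\vu^*)^N(\vu^*)^{-v}\Pr[S_N=v].
\]
Keep only the single monomial $b_w\vu^w$ of $B$ from (d). Then
\[
c_N\ge b_w[\vu^{\mathbf j_N-w}]K^N
=b_wK(\vu^*)^N(\vu^*)^{-(\mathbf j_N-w)}\Pr[S_N=\mathbf j_N-w].
\]
It then suffices to show $\Pr[S_N=\mathbf j_N-w]\ge 2^{-o(N)}$. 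The target satisfies $(\mathbf j_N-w)/N\to\E X$. By (d), the target lies in the coset $Nv_0+\Lambda(K)$, and $S_N$ is supported exactly on that coset, since $\supp X=\supp K$. By (c), the minimal lattice of $X$ has full rank $s$, so the covariance is nondegenerate on $\R^s$.

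\emph{Main obstacle.} The difficulty is that the target $\mathbf j_N-w$ may deviate from $N\boldsymbol\mu$ by $o(N)$ rather than $O(\sqrt N)$. A standard local CLT then does not apply directly. I would handle this with a second tilt. Choose $\vu_N\to\vu^*$ solving the critical equations with $\mu$ replaced by $(\mathbf j_N-w)/N$. This is possible by the inverse function theorem, because the Jacobian of $\vu\mapsto\nabla_{\log\vu}\log K$ is the covariance matrix, which is positive definite by (c). Apply the multidimensional lattice local limit theorem, uniform over tilts in a compact neighborhood where the covariance stays nondegenerate. This gives $\Pr\ge cN^{-s/2}$ for the tilted walk at its exact mean. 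Converting back changes the exponent only by $\log_2K(\vu_N)-\log_2K(\vu^*)$ and $\langle(\mathbf j_N-w)/N,\log_2\vu_N-\log_2\vu^*\rangle$, and both are $o(1)$ by continuity.

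A cruder alternative avoids the uniform local limit theorem. Write the target as a sum of $N-m$ free steps plus an $m=o(N)$-step correction built from support vectors. The correction is realizable because $\Lambda(K)$ has full rank and the target is in the right coset. For it to be realizable with nonnegative multiplicities, I need $\boldsymbol\mu$ in the interior of the cone of $\supp K$, which follows from $\E X=\boldsymbol\mu$ with $\vu^*>0$. Either route gives eventual positivity of $c_N$ and $\liminf\ge$ the same exponent. Combining the two bounds yields \eqref{eq:coefficient-power-estimate}, and the second equality there is just $z^*=1/K(\vu^*)$.
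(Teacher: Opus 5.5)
Your proposal is correct and follows essentially the same route as the paper. The upper bound is the positive-evaluation estimate $c_N\le B(\vu^*)K(\vu^*)^N(\vu^*)^{-\mathbf j_N}$. For the lower bound you keep the single compatible monomial $b_w\vu^w$, tilt at a moving point $\vu_N\to\vu^*$ obtained from the inverse function theorem (the full-rank lattice makes the covariance nondegenerate), and apply a lattice local limit theorem that is uniform over tilts, giving $\Pr[S_N=\mathbf j_N-w]\ge cN^{-s/2}$. The paper tilts directly at $\vt_N$ and sketches that uniform local limit theorem via characteristic functions, which is the same ingredient you cite. One small wording point: $S_N$ is supported \emph{in} the coset $Nv_0+\Lambda(K)$, not on all of it, but only containment is used.
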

The proof is in Appendix~\ref{app:critical-proof}. The direction need
only converge: no bounded-displacement assumption is imposed. Interior
convergence concerns the power series, not merely an analytically
continued rational expression. One compatible positive monomial of $B$
is sufficient. The same conclusion holds along any sequence of integer
powers tending to infinity.

\subsection{The deletion exponent and its critical-point coding bound}
\label{sec:deletion-critical}
For $0<\delta<1$ and $0<\rho<1$, define the counting exponent
\begin{equation}\label{eq:beta-definition}
\beta_L(\delta,\rho)=\limsup_{n\to\infty}\frac1n
\log_2M_L(n,\lfloor\delta n\rfloor;\lfloor\rho n\rfloor).
\end{equation}
The run density $\rho$ is a code-design parameter. In contrast, the
analytic variables are determined by the following critical equations,
where the logarithmic derivatives use $\ln$:
\begin{align}
x\partial_x\ln K_L(x,y)&=\frac{(L+1)\delta}{1-\delta},
\label{eq:deletion-critical-x}\\
y\partial_y\ln K_L(x,y)&=\frac{(L+1)\rho}{1-\delta},
\label{eq:deletion-critical-y}\\
x>0,\quad y>0,\qquad x(1+y)&<1.
\label{eq:deletion-critical-admissibility}
\end{align}
The last condition ensures convergence of both $\mathcal K_L$ and
its prefactor $\mathcal A_L$. Define the admissible design set
\begin{equation}\label{eq:deletion-admissible-set}
\mathcal D_{\delta,L}=\{\rho\in(0,1):
\text{\eqref{eq:deletion-critical-x}--\eqref{eq:deletion-critical-admissibility}
have a solution}\}.
\end{equation}

\begin{theorem}[Deletion Critical-Point Formula]\label{thm:deletion-beta}
Fix $L\ge1$, $0<\delta<1$, and $\rho\in\mathcal D_{\delta,L}$.
The admissible solution $(x^*,y^*)$ is unique. Set
$z^*=1/K_L(x^*,y^*)$. Then the limsup in \eqref{eq:beta-definition}
is a limit and
\begin{equation}\label{eq:beta}
\begin{split}
\beta_L(\delta,\rho)
&=(1-\delta)\log_2K_L(x^*,y^*)
 -(L+1)\delta\log_2x^*-(L+1)\rho\log_2y^*\\
&=-(1-\delta)\log_2z^*-(L+1)\delta\log_2x^*
 -(L+1)\rho\log_2y^*.
\end{split}
\end{equation}
More generally, the same normalized coefficient limit holds whenever
$d_n/n\to\delta$ and $r_n/n\to\rho$, with integer $0\le d_n<n$
and $1\le r_n\le n$.
\end{theorem}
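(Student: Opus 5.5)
The plan is to obtain Theorem~\ref{thm:deletion-beta} as a direct application of the Critical-Point Coefficient Lemma (Lemma~\ref{lem:critical-point-coefficient}) to the identity \eqref{eq:deletion-gf-power} of Theorem~\ref{thm:list-witness-gf}. Set $s=2(L+1)$ and $\vu=(\vx,\vy)$. Take $B=\mathcal A_L$, $K=\mathcal K_L$, $N=n-d-1$, and $\mathbf j_N=(d,\ldots,d,r,\ldots,r)$. Since $[z^N\vu^{\mathbf j_N}]B/(1-zK)=[\vu^{\mathbf j_N}]BK^N$, this coefficient is exactly $M_L(n,d;r)$. When $d_n/n\to\delta$ and $r_n/n\to\rho$, we have
\[
\mathbf j_N/N\to\boldsymbol\mu=\Bigl(\tfrac{\delta}{1-\delta},\ldots,\tfrac{\delta}{1-\delta},\tfrac{\rho}{1-\delta},\ldots,\tfrac{\rho}{1-\delta}\Bigr).
\]
Because $N/n\to1-\delta$, the lemma's limit multiplied by $(1-\delta)$ gives the claimed limit for $\frac1n\log_2M_L$. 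This simultaneously shows that the limsup is a limit and gives the general $(d_n,r_n)$ version.

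The critical equations come from symmetry. At the symmetric point $\vx=(x,\ldots,x)$, $\vy=(y,\ldots,y)$, permutation invariance of $\mathcal K_L$ gives $x_i\partial_{x_i}\mathcal K_L=\frac1{L+1}x\partial_xK_L$, and similarly for $y_i$. Hence the $2(L+1)$ equations \eqref{eq:coefficient-critical-equations} reduce to \eqref{eq:deletion-critical-x}--\eqref{eq:deletion-critical-y}, and $\mathcal K_L$ at this point equals $K_L(x^*,y^*)$. The exponent then becomes
\[
(1-\delta)\Bigl[\log_2K_L-(L+1)\tfrac{\delta}{1-\delta}\log_2x^*-(L+1)\tfrac{\rho}{1-\delta}\log_2y^*\Bigr],
\]
which is \eqref{eq:beta}.

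For uniqueness, set $\Psi(\vt)=\ln\mathcal K_L(e^{\vt})-\langle\boldsymbol\mu,\vt\rangle$. This function is convex on the log-convergence domain, and it is strictly convex because the support-difference lattice has full rank (so the associated exponential family is nondegenerate). A strictly convex function has at most one critical point, so the admissible $(x^*,y^*)$ is unique.

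Interior convergence (condition (b)) holds because $x(1+y)<1$ is an open condition implying $x<1$. Every local factor, including $\Ttail$, is then a convergent nonnegative series on a neighborhood of the point.

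I expect the main obstacle to be the lattice conditions (c)--(d), which need explicit monomials.
\begin{itemize}
\item In $\prod_i\Gsame$ we have the monomials $1$, $x_iy_i^2$ and $x_i^2y_i^2$. Their differences give $e_{x_i}$ and then $2e_{y_i}$ for every $i$.
\item $\prod_i\Gflip$ contains $y_1\cdots y_{L+1}$, which adds $\sum_ie_{y_i}$.
\end{itemize}
This already gives rank $s$. The lattice $\Lambda$ consists of the integer vectors whose $y$-coordinates all have the same parity.

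For (d), take $v_0=0$, the constant monomial of $\prod\Gsame$. Take $w$ to be the monomial $\prod_iy_i$ of $\mathcal A_L$, coming from $\prod S_0$ times the constant term of each tail. The difference $\mathbf j_N-w$ has all $y$-coordinates equal to $r-1$, hence of equal parity, so it lies in $\Lambda$. If some parity issue remains, I would instead multiply $w$ by a tail monomial $x_iy_i$ in every coordinate. With (a)--(d) verified, the lemma applies, and it also gives eventual positivity of $M_L$.
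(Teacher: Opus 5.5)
Your proposal is correct and is essentially the paper's proof: the same application of the Critical-Point Coefficient Lemma with $B=\mathcal A_L$, $K=\mathcal K_L$, power $n-d_n-1$ and target $(d_n,\ldots,d_n,r_n,\ldots,r_n)$. It also uses the same symmetry reduction, the same support monomials yielding $e_{x_i}$, $2e_{y_i}$ and $\mathbf 1_y$, the same compatibility choice $v_0=0$, $w=\prod_i y_i$, and the final rescaling by $1-\delta$. The only differences are cosmetic: you get uniqueness from strict convexity in all $2(L+1)$ logarithmic variables, whereas the paper runs the same covariance argument on the diagonal function $\ln K_L(e^a,e^b)$ with projected support points $(0,0),(1,2),(2,2)$; and your parity fallback is never needed, since $(r-1)\mathbf 1_y$ is already an integer multiple of $\mathbf 1_y\in\Lambda$.
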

\begin{proof}
Apply Lemma~\ref{lem:critical-point-coefficient} with
$B=\mathcal A_L$, $K=\mathcal K_L$, power $N_n=n-d_n-1$, and
$\mathbf j_n=(d_n,\ldots,d_n,r_n,\ldots,r_n)$. Symmetry turns
\eqref{eq:deletion-critical-x}--\eqref{eq:deletion-critical-y} into the
individual mean equations. The full-rank lattice and the compatible
prefactor monomial $\prod_i y_i$ are verified in
Appendix~\ref{app:deletion-critical}; that appendix also proves
uniqueness. Multiply the lemma's limit by $N_n/n\to1-\delta$ to obtain
\eqref{eq:beta}.
\end{proof}

\begin{theorem}[Critical-point deletion lower bound]
\label{cor:deletion-list-lower-bound}
For $L\ge1$ and $0<\delta<1$, set
\begin{equation}\label{eq:deletion-list-lower-bound}
\Rrun(\delta)=\sup_{\rho\in\mathcal D_{\delta,L}}
\frac{(L+1)h_2(\rho)-\beta_L(\delta,\rho)}{L}.
\end{equation}
Then $\cR_{2,\mathrm{del}}^{(L)}(\delta)\ge[\Rrun(\delta)]_+$.
An empty admissible set gives only the trivial lower bound zero,
with $\sup\varnothing=-\infty$.
\end{theorem}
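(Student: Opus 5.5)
The plan is to combine the deletion critical-point formula (Theorem~\ref{thm:deletion-beta}) with the run-restricted ensemble $\cU(n,r)$ and the common-witness sampling lemma (Lemma~\ref{lem:sampling}), exactly as in the proof of Theorem~\ref{thm:unique-main} but with $L+1$ codewords. If $\mathcal D_{\delta,L}=\varnothing$, or if the supremum in \eqref{eq:deletion-list-lower-bound} is nonpositive, a singleton code gives the zero bound. So it suffices to fix $\rho\in\mathcal D_{\delta,L}$ with $(L+1)h_2(\rho)-\beta_L(\delta,\rho)>0$ and show that every positive rate below $\bigl((L+1)h_2(\rho)-\beta_L(\delta,\rho)\bigr)/L$ is achievable. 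Taking the supremum over such $\rho$ then finishes the proof, since achievability is defined rate by rate.

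For fixed $\rho$, set $d=\min\{n,\lfloor\delta n\rfloor\}$ and $r=\lfloor\rho n\rfloor$, and sample $C_1,\ldots,C_{L+1}$ independently and uniformly from $\cU(n,r)$. The witnesses are $\mathcal W_n(c)=\cD_d(c)$, which is nonempty. A bad event means the $L+1$ samples share a length-$(n-d)$ subsequence, and \eqref{eq:deletion-bad-probability} bounds its probability by $M_L(n,d;r)/|\cU(n,r)|^{L+1}$. By \eqref{eq:run-ensemble-size}, the denominator is $2^{n(L+1)(h_2(\rho)+o(1))}$. For the numerator, Theorem~\ref{thm:deletion-beta} asserts that the limsup in \eqref{eq:beta-definition} is a genuine limit. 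Hence $M_L(n,d;r)\le 2^{n(\beta_L(\delta,\rho)+o(1))}$, and in fact the limsup bound alone would suffice here. Combining the two estimates gives
\[
\Pr(E_S)\le 2^{-n\left((L+1)h_2(\rho)-\beta_L(\delta,\rho)-o(1)\right)},
\]
which is hypothesis \eqref{eq:sampling-probability} with $a=(L+1)h_2(\rho)-\beta_L(\delta,\rho)$.

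Lemma~\ref{lem:sampling} then produces, for every $0<R<a/L$ and all large $n$, a set $\cC_n\subseteq\cU(n,r)\subseteq\{0,1\}^n$ of size at least $\lfloor 2^{Rn}\rfloor/L$ in which every length-$(n-d)$ word is a subsequence of at most $L$ codewords. By the remark after Definition~\ref{def:list}, testing received words of length $n-d$ suffices for pure deletions, because a longer common subsequence can be shortened. So $\cC_n$ is $(0,\lfloor\delta n\rfloor;L)$-list-decodable with rate $R-o(1)$.

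The only delicate points are bookkeeping. First, the case $\delta\ge1/2$ needs no special treatment: either the admissible set is empty or $a\le0$, and the zero bound applies. Second, the degenerate lengths $n-d\ge1$ and $r\ge1$ must hold so that Theorem~\ref{thm:deletion-beta} applies; they do for large $n$ because $\delta<1$ and $\rho>0$. The analytic content, namely uniqueness of the critical point and the coefficient asymptotics, is already contained in Theorem~\ref{thm:deletion-beta}, so no step here is a real obstacle.
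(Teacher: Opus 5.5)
Your proposal is correct and follows the paper's proof essentially verbatim. Both sample uniformly from $\cU(n,\lfloor\rho n\rfloor)$, bound $\Pr(E_S)$ through \eqref{eq:deletion-bad-probability} using the numerator exponent from Theorem~\ref{thm:deletion-beta} and the denominator from \eqref{eq:run-ensemble-size}, apply Lemma~\ref{lem:sampling}, take the supremum over admissible $\rho$, and use a singleton code for the zero case. Your aside on $\delta\ge1/2$ is unnecessary, since the argument is uniform in $\delta$ and the claim $a\le0$ there only follows a posteriori from the zero-rate converse; it does not affect the proof.
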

\begin{proof}
Fix $\rho\in\mathcal D_{\delta,L}$, and set
$d=\lfloor\delta n\rfloor$, $r=\lfloor\rho n\rfloor$.
Uniform sampling from $\cU(n,r)$ and \eqref{eq:deletion-bad-probability}
give
\[
\Pr(E_S)\le\frac{M_L(n,d;r)}{|\cU(n,r)|^{L+1}}
=2^{n\{\beta_L(\delta,\rho)-(L+1)h_2(\rho)+o(1)\}},
\]
where Theorem~\ref{thm:deletion-beta} supplies the numerator's exponent.
Lemma~\ref{lem:sampling} gives every positive rate below
$((L+1)h_2(\rho)-\beta_L(\delta,\rho))/L$, including the distinct-word
conclusion. Take the supremum over admissible $\rho$; a singleton code
supplies zero when necessary.
\end{proof}
Thus evaluation requires solving two equations and optimizing only over
the admissible run densities. The prefactor convergence test is essential:
a positive solution for the kernel alone need not belong to
$\mathcal D_{\delta,L}$. We do not identify
\eqref{eq:deletion-list-lower-bound} with an unrestricted optimization
over all run densities.

\subsection{Auxiliary positive-evaluation certificates}
\label{sec:deletion-certificates}
Positive evaluations remain useful when a critical point is unavailable,
and provide independent certificates at any feasible point. For
$(x,y)\in\Ddomain$ and $0<\rho<1$, set
\begin{align}
\Phi_{L,\delta,\rho}(x,y)
&=(1-\delta)\log_2K_L(x,y)
 -(L+1)\delta\log_2x-(L+1)\rho\log_2y,
\label{eq:general-phi}\\
\widehat\beta_L(\delta,\rho)
&=\inf_{(x,y)\in\Ddomain}\Phi_{L,\delta,\rho}(x,y).
\label{eq:beta-envelope}
\end{align}

\begin{proposition}[Positive-evaluation deletion certificates]
\label{prop:deletion-certificates}
Every $\rho\in(0,1)$ and $(x,y)\in\Ddomain$ gives an achievable
rate lower bound equal to the nonnegative part of
\begin{equation}\label{eq:del-point-certificate}
\frac{(L+1)h_2(\rho)-\Phi_{L,\delta,\rho}(x,y)}{L}.
\end{equation}
Consequently, the auxiliary all-density envelope
\begin{equation}\label{eq:deletion-evaluation-bound}
\RdelEval(\delta)=\sup_{0<\rho<1}
\frac{(L+1)h_2(\rho)-\widehat\beta_L(\delta,\rho)}{L}
\end{equation}
also satisfies $\cR_{2,\mathrm{del}}^{(L)}(\delta)\ge[\RdelEval(\delta)]_+$.
For $\rho\in\mathcal D_{\delta,L}$,
$\widehat\beta_L(\delta,\rho)=\beta_L(\delta,\rho)$, and hence
$\Rrun(\delta)\le\RdelEval(\delta)$.
\end{proposition}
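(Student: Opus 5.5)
The plan is to repeat the argument of Section~\ref{sec:unique} for general $L$. There are three parts: a positive-coefficient bound on $M_L$, the sampling lemma, and a comparison of the infimum with the critical-point exponent.

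\emph{Step 1 (coefficient bound).} Fix $\rho\in(0,1)$ and $(x,y)\in\Ddomain$, and put $d=\lfloor\delta n\rfloor$, $r=\lfloor\rho n\rfloor$. By Theorem~\ref{thm:list-witness-gf}, $M_L(n,d;r)$ is a coefficient of $\mathcal A_L\mathcal K_L^{\,n-d-1}$. Every local factor in Table~\ref{tab:leftmost-factors} has nonnegative coefficients, so this series has nonnegative coefficients. It converges at the symmetric point $x_i=x$, $y_i=y$, because $x(1+y)<1$ gives $x<1$ and also makes $\Ttail$ converge. The extracted term is therefore at most the value of the whole series, which gives
\[
M_L(n,d;r)\le A_L(x,y)K_L(x,y)^{n-d-1}x^{-(L+1)d}y^{-(L+1)r},
\]
where $A_L(x,y)$ is fixed and finite. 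Taking logarithms yields $\limsup_n n^{-1}\log_2M_L\le\Phi_{L,\delta,\rho}(x,y)$.

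\emph{Step 2 (rate).} Combine this with \eqref{eq:deletion-bad-probability} and \eqref{eq:run-ensemble-size}:
\[
\Pr(E_S)\le2^{-n((L+1)h_2(\rho)-\Phi_{L,\delta,\rho}(x,y)-o(1))}.
\]
Lemma~\ref{lem:sampling} then gives every positive rate below \eqref{eq:del-point-certificate}. A singleton code covers a nonpositive value. The witness sets $\cD_d(c)$ are nonempty, so the lemma applies. Taking the supremum over feasible $(x,y)$, and then over $\rho$, gives the envelope bound $\cR_{2,\mathrm{del}}^{(L)}(\delta)\ge[\RdelEval(\delta)]_+$. As in Theorem~\ref{thm:unique-main}, this is a supremum of achievable values, so the infimum never has to be attained.

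\emph{Step 3 (agreement on $\mathcal D_{\delta,L}$).} Step~1 gives $\beta_L\le\Phi_{L,\delta,\rho}(x,y)$ at every feasible point, hence $\beta_L(\delta,\rho)\le\widehat\beta_L(\delta,\rho)$. For the reverse inequality, let $(x^*,y^*)$ be the admissible critical point. It lies in $\Ddomain$, and by Theorem~\ref{thm:deletion-beta} its value satisfies $\Phi_{L,\delta,\rho}(x^*,y^*)=\beta_L(\delta,\rho)$. So the infimum is at most $\beta_L$, and the two exponents are equal. Substituting this equality into \eqref{eq:deletion-list-lower-bound} and enlarging the range of the supremum from $\mathcal D_{\delta,L}$ to $(0,1)$ gives $\Rrun\le\RdelEval$.

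The one point needing care is the nonnegativity and convergence claim in Step~1. Each factor $xy^2/(1-x)$, $y/(1-x)$ and $1/(1-x(1+y))$ expands as a geometric series with nonnegative coefficients inside $\Ddomain$. This must be checked for the products over $i$ formed before the two transition types are summed, but the check is routine. The rest of the argument is bookkeeping.
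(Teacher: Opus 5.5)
Your proposal is correct and follows the paper's proof essentially line by line. The steps match: evaluate \eqref{eq:deletion-gf-power} with nonnegative coefficients at a point of $\Ddomain$, substitute into \eqref{eq:deletion-bad-probability} and apply Lemma~\ref{lem:sampling}, then use the sandwich $\beta_L\le\widehat\beta_L\le\Phi_{L,\delta,\rho}(x^*,y^*)=\beta_L$ from Theorem~\ref{thm:deletion-beta}. Your explicit checks of convergence at the diagonal point and of nonempty witness sets are details the paper leaves implicit.
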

\begin{proof}
Nonnegativity in \eqref{eq:deletion-gf-power} gives
\[
M_L(n,d;r)\le A_L(x,y)K_L(x,y)^{n-d-1}
 x^{-(L+1)d}y^{-(L+1)r},
\]
where $A_L(x,y)=\mathcal A_L(x,\ldots,x,y,\ldots,y)$ is fixed and finite.
Its normalized logarithm is at most $\Phi_{L,\delta,\rho}(x,y)+o(1)$.
Substitution in \eqref{eq:deletion-bad-probability} and
Lemma~\ref{lem:sampling} prove the certificate and its supremum.
Also $\beta_L\le\widehat\beta_L$ by this estimate. At an admissible
critical point, \eqref{eq:beta} gives the reverse inequality, proving
the equality on $\mathcal D_{\delta,L}$.
\end{proof}
The distinction in notation is intentional: $\Rrun$ is the principal
critical-point bound, whereas $\RdelEval$ also allows certificates
outside its admissible design set. No equality between these two
all-design expressions is asserted.

\begin{remark}[Evaluating certificates]\label{rem:rho-optimization}
For fixed $x,y$, the maximum of $h_2(\rho)+\rho\log_2y$ is
$\log_2(1+y)$ at $\rho=y/(1+y)$. Thus the auxiliary expression
\eqref{eq:deletion-evaluation-bound} also equals
\begin{equation}\label{eq:del-two-variable}
\RdelEval(\delta)=\frac1L\sup_{(x,y)\in\Ddomain}
\left\{(L+1)\log_2(1+y)-(1-\delta)\log_2K_L(x,y)
 +(L+1)\delta\log_2x\right\}.
\end{equation}
This identity is useful for certificate generation; it does not enlarge
the admissible set in the critical-point theorem.
\end{remark}

\subsection{Strict improvement for every fixed list size}
Fixing $\rho=1/2$ gives the simpler expression
\begin{equation}\label{eq:half-rate}
\Rhalf(\delta)
=\frac{L+1-\widehat\beta_L(\delta,1/2)}{L}.
\end{equation}
For $L=1$, this is $\Runi(\delta)$. This auxiliary expression is
available regardless of whether $1/2$ belongs to $\mathcal D_{\delta,L}$.
\begin{theorem}[The fixed-list extension of the Levenshtein improvement]
\label{thm:fixed-L-strict}
For every finite integer $L\ge1$ and every $0<\delta<1/2$,
\begin{equation}\label{eq:fixed-L-strict}
\RdelEval(\delta)\ge\Rhalf(\delta)
>1+\frac\delta L-\frac{L+1}{L}h_2(\delta).
\end{equation}
\end{theorem}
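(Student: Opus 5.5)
The first inequality is immediate from the definitions: $\RdelEval(\delta)$ in \eqref{eq:deletion-evaluation-bound} is a supremum over $\rho\in(0,1)$, and $\Rhalf(\delta)$ in \eqref{eq:half-rate} is its value at $\rho=1/2$, since $(L+1)h_2(1/2)=L+1$. For the strict inequality I will copy the $L=1$ argument from the proof of Theorem~\ref{thm:unique-main}. The plan is to exhibit one feasible point $(x,y)\in\Ddomain$ with $\Phi_{L,\delta,1/2}(x,y)<L+1-L\,B_{\mathrm{del},L}(\delta)$. Then
\[
\widehat\beta_L(\delta,1/2)<L+1-L\,B_{\mathrm{del},L}(\delta),
\]
and dividing by $L$ gives the claim.

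\emph{Step 1: baseline point.} Since $\delta<1/2$, the point $(x,y)=(\delta,1)$ lies in $\Ddomain$, because $2\delta<1$. There we have $\Gsame=1+\delta/(1-\delta)=1/(1-\delta)$ and $\Gflip=1/(1-\delta)$. Hence
\[
K_L(\delta,1)=2(1-\delta)^{-(L+1)}.
\]
Substituting,
\[
\Phi_{L,\delta,1/2}(\delta,1)=(1-\delta)\bigl(1-(L+1)\log_2(1-\delta)\bigr)-(L+1)\delta\log_2\delta=1-\delta+(L+1)h_2(\delta).
\]
This equals $L+1-L\,B_{\mathrm{del},L}(\delta)$, since
\[
L\,B_{\mathrm{del},L}(\delta)=L+\delta-(L+1)h_2(\delta).
\]
So the point $(\delta,1)$ reproduces the uniform baseline of Corollary~\ref{cor:uniform-deletion}.

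\emph{Step 2: derivative in $y$.} At $(\delta,1)$ the two summands of $K_L$ are equal, so $\partial_y\ln K_L$ is the average of their logarithmic derivatives. The first is $(L+1)\cdot\frac{2xy/(1-x)}{1+xy^2/(1-x)}$, which at $x=\delta$, $y=1$ equals $(L+1)\cdot 2\delta$. The second is $(L+1)/y=L+1$. Hence
\[
\partial_y\ln K_L(\delta,1)=(L+1)\frac{2\delta+1}{2}.
\]
Therefore
\[
\ln 2\,\cdot\partial_y\Phi_{L,\delta,1/2}(\delta,1)=(1-\delta)(L+1)\frac{1+2\delta}{2}-\frac{L+1}{2}=\frac{L+1}{2}\,\delta(1-2\delta),
\]
which is positive for $0<\delta<1/2$. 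For $L=1$ this matches \eqref{eq:unique-derivative}.

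\emph{Step 3: perturbation.} By continuity of $\Phi$ on the open set $\Ddomain$, some $y'<1$ close to $1$ keeps $(\delta,y')\in\Ddomain$ (indeed $\delta(1+y')<2\delta<1$) and satisfies $\Phi_{L,\delta,1/2}(\delta,y')<\Phi_{L,\delta,1/2}(\delta,1)$. Then
\[
\widehat\beta_L(\delta,1/2)\le\Phi_{L,\delta,1/2}(\delta,y')<L+1-L\,B_{\mathrm{del},L}(\delta),
\]
which gives the strict inequality.

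The only real obstacle is the derivative computation in Step 2, in particular checking that the sign is positive uniformly in $L$. The calculation above indicates the factor $(L+1)/2$ separates cleanly, so I expect no difficulty. No critical-point assumption is needed, since Proposition~\ref{prop:deletion-certificates} already certifies every feasible point.
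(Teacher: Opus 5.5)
Your proposal is correct and follows the paper's proof essentially step for step. You evaluate at the feasible point $(\delta,1)$ to recover the uniform local-lemma baseline, compute the positive $y$-derivative $\frac{L+1}{2\ln2}\,\delta(1-2\delta)$, and decrease $y$ slightly within $\Ddomain$; your explicit check that the first inequality comes from taking $\rho=1/2$ in the supremum is a harmless addition.
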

\begin{proof}
Set $m=L+1$. At the feasible point $(x,y)=(\delta,1)$,
\[
K_L(\delta,1)=\frac2{(1-\delta)^m},\qquad
\Phi_{L,\delta,1/2}(\delta,1)=(1-\delta)+mh_2(\delta).
\]
Moreover,
\begin{equation}\label{eq:fixed-L-derivative}
\left.\frac{\partial\Phi_{L,\delta,1/2}}{\partial y}
\right|_{(\delta,1)}
=\frac{m}{2\ln2}\,\delta(1-2\delta)>0.
\end{equation}
To verify the derivative, the two summands of $K_L$ are equal at $y=1$;
their logarithmic derivatives are $2m\delta$ and $m$, respectively.
Thus $(\partial_y K_L)/K_L=m(\delta+1/2)$ there, which gives
\eqref{eq:fixed-L-derivative} after subtracting $m/(2\ln2)$.
Decreasing $y$ slightly makes the coefficient bound strictly smaller,
while remaining in $\Ddomain$. Substitute in \eqref{eq:half-rate}.
\end{proof}
This theorem separates the two sources of improvement. Replacing a global
union bound over codeword tuples by the local lemma gives
Corollary~\ref{cor:uniform-deletion}. Keeping the same local-lemma
framework and restricting the runs gives the additional strict gain in
\eqref{eq:fixed-L-strict}. Neither comparison by itself asserts dominance
over all known constructions or all deletion-rate lower bounds.

\section{Achievable Rates for Insertion List Decoding}\label{sec:insertions}
For insertion decoding, $L+1$ codewords are confusable when they have a
common supersequence of length $n+d$, where now $d=\lfloor\gamma n\rfloor$.
We continue to sample from $\cU(n,r)$, but the common witness is longer
than the codewords. The exact descendant count is more difficult to
analyze, so we upper-bound it by a deletion-allocation count. This
relaxation is part of the bound and must be distinguished from exact
common-witness enumeration.

\subsection{From descendants to deletion allocations}
For $w\in\{0,1\}^{n+d}$, let
\[
D(w,d;r)=|\{c\in\cU(n,r):c\subseq w\}|.
\]
If $w$ contains a word from $\cU(n,r)$, its number of runs $r^*$ satisfies
\begin{equation}\label{eq:insertion-run-range}
r\le r^*\le\min\{n+d,r+2d\}.
\end{equation}
Indeed, taking a subsequence cannot increase the number of runs, and each
inserted symbol increases it by at most two.

For a run-length vector $s=(s_1,\ldots,s_{r^*})$ with $s_j\ge1$ and
$\sum_j s_j=n+d$, define
\begin{equation}\label{eq:allocation-count}
T(s,d;r^*)
=\left|\left\{(t_1,\ldots,t_{r^*}):0\le t_j\le s_j,
\ \sum_jt_j=d\right\}\right|.
\end{equation}
Let $\Delta^+(N,k)=\{s\in\Z_{>0}^k:\sum_js_j=N\}$.

\begin{lemma}[Bounding descendants by deletion allocations]
\label{lem:insertion-deletion-vectors}
Fix $r^*\ge r$ and let
$\vw\in\cU(n+d,r^*)$ have run-length vector
$\vs\in\Delta^+(n+d,r^*)$. Then, for independent uniform samples $\vc_1,\ldots,\vc_{L+1}$ from $\cU(n,r)$,
\[
\Pr\left(
\vc_i\subseq\vw
\text{ for every }i\in[L+1]
\right)
\le
\left(
\frac{T(\vs,d;r^*)}
{|\cU(n,r)|}
\right)^{L+1},
\]
and
\[
\Pr\left(
\vc_i\subseq\overline{\vw}
\text{ for every }i\in[L+1]
\right)
\le
\left(
\frac{T(\vs,d;r^*)}
{|\cU(n,r)|}
\right)^{L+1}.
\]
\end{lemma}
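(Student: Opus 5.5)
The probability factorizes over the $L+1$ independent samples, so it suffices to show, for a single uniform sample $\vc$ from $\cU(n,r)$,
\[
\Pr(\vc\subseq\vw)\le \frac{T(\vs,d;r^*)}{|\cU(n,r)|},
\]
and the same for $\overline{\vw}$. This is equivalent to the counting inequality $D(\vw,d;r)\le T(\vs,d;r^*)$, that is, the number of words of $\cU(n,r)$ contained in $\vw$ is at most the number of allocation vectors. Raising to the $(L+1)$st power then gives both displays.

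The key step is an injection, or more precisely a surjection in the reverse direction: every subsequence of $\vw$ of length $n$ arises from a deletion allocation. Write $\vw$ as its runs $a_1^{s_1}a_2^{s_2}\cdots a_{r^*}^{s_{r^*}}$, where the symbols $a_j$ alternate. Given $(t_1,\ldots,t_{r^*})$ with $0\le t_j\le s_j$ and $\sum_j t_j=d$, delete $t_j$ symbols from run $j$. Since symbols within a run are identical, this produces the well-defined word $a_1^{s_1-t_1}\cdots a_{r^*}^{s_{r^*}-t_{r^*}}$ of length $n$. Conversely, any length-$n$ subsequence $\vc$ of $\vw$ is obtained by deleting some set of $d$ positions, and recording how many deleted positions fall in each run gives an allocation mapping to $\vc$. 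The map from allocations to length-$n$ subsequences is therefore onto. Hence
\[
|\{\vc\in\{0,1\}^n:\vc\subseq\vw\}|\le T(\vs,d;r^*),
\]
and restricting to $\cU(n,r)$ only decreases the left side.

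For $\overline{\vw}$, note that complementation preserves run lengths, so $\overline{\vw}$ has the same vector $\vs$, and the same argument applies verbatim. The hypothesis $\vw\in\cU(n+d,r^*)$ fixes the first symbol, but it plays no role in the bound; this is presumably why the statement covers both $\vw$ and its complement, so that witnesses starting with $1$ are also handled. Independence then yields
\[
\Pr\Bigl(\bigcap_i\{\vc_i\subseq\vw\}\Bigr)=\prod_i\Pr(\vc_i\subseq\vw)\le \left(\frac{T(\vs,d;r^*)}{|\cU(n,r)|}\right)^{L+1}.
\]

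I expect no real obstacle. The only point needing care is that different allocations may produce the same word when a run is emptied and its neighbours merge, so the correspondence is not injective. This is exactly why the bound is an inequality, and it is the relaxation the section warns about.
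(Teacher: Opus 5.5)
Your proposal is correct and is essentially the paper's argument. The paper maps each contained word injectively to an allocation by fixing its leftmost embedding and counting deletions per run, using the fact that an allocation determines the resulting word; your surjection from allocations onto length-$n$ subsequences is the same fact seen from the other side. As in the paper, you then finish with independence and the observation that the complement has the same run-length vector.
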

\begin{proof}
Consider a word
\[
\vc\in\cU(n,r)
\]
such that $\vc\subseq\vw$. Choose the leftmost embedding of $\vc$ in
$\vw$. For each $j\in[r^*]$, let $t_j$ be the number of symbols deleted
from the $j$th run of $\vw$. Then
\[
0\le t_j\le s_j
\]
and
\[
\sum_{j=1}^{r^*}t_j=d.
\]
Thus, $(t_1,\ldots,t_{r^*})$ is counted by $T(\vs,d;r^*)$.

Moreover, a deletion vector $(t_1,\ldots,t_{r^*})$ uniquely determines
the word obtained from $\vw$: from the $j$th run, exactly $s_j-t_j$
symbols remain. Hence, distinct words $\vc$ give distinct deletion
vectors. It follows that
\[
D(\vw,d;r)
\le
T(\vs,d;r^*).
\]

Since the words $\vc_1,\ldots,\vc_{L+1}$ are chosen independently,
\[
\begin{aligned}
\Pr\left(
\vc_i\subseq\vw
\text{ for every }i\in[L+1]
\right)
&=
\left(
\frac{D(\vw,d;r)}
{|\cU(n,r)|}
\right)^{L+1}
\\
&\le
\left(
\frac{T(\vs,d;r^*)}
{|\cU(n,r)|}
\right)^{L+1}.
\end{aligned}
\]

The words $\vw$ and $\overline{\vw}$ have the same run-length vector.
The same argument therefore gives
\[
D(\overline{\vw},d;r)
\le
T(\vs,d;r^*),
\]
which proves the second inequality.
\end{proof}
In particular, the proof establishes
\begin{equation}\label{eq:allocation-relaxation}
D(w,d;r)\le T(s,d;r^*),\qquad
D(\overline w,d;r)\le T(s,d;r^*).
\end{equation}
The bound may be strict. Different deletion vectors can produce the same
word when entire runs are deleted and neighboring runs merge. Moreover,
some deletion vectors produce words that do not start with $0$ or do not
have exactly $r$ runs.

For $r^*\ge1$, set
\begin{equation}\label{eq:TL-definition}
T_L(n,d;r^*)=\sum_{s\in\Delta^+(n+d,r^*)}T(s,d;r^*)^{L+1}.
\end{equation}
\begin{corollary}[Common-supersequence probability bound]
\label{cor:insertion-probability}
For a fixed $(L+1)$-set $S$ of independent uniform sample indices,
\begin{equation}\label{eq:insertion-bad-probability}
\Pr(E_S)\le\frac{2}{|\cU(n,r)|^{L+1}}
\sum_{r^*=r}^{\min\{n+d,r+2d\}}T_L(n,d;r^*).
\end{equation}
\end{corollary}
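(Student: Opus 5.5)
The plan is a union bound over all possible common supersequences, grouped by run structure, with Lemma~\ref{lem:insertion-deletion-vectors} bounding each term.

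By the preliminaries, it suffices to test witnesses of exact length $n+d$. A shorter common supersequence can be extended, so $E_S$ occurs exactly when some $w\in\{0,1\}^{n+d}$ satisfies $C_i\subseq w$ for every $i\in S$. Hence
\[
\Pr(E_S)\le\sum_{w\in\{0,1\}^{n+d}}\Pr\bigl(C_i\subseq w\ \text{for all } i\in S\bigr),
\]
and only witnesses with $D(w,d;r)>0$ contribute. For such $w$, \eqref{eq:insertion-run-range} gives $r\le r^*\le\min\{n+d,r+2d\}$, where $r^*=\runs(w)$. This follows because deletions never increase the number of runs and each insertion adds at most two runs.

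Next I partition the surviving witnesses by run count and first symbol. A binary word of length $n+d$ with $r^*$ runs is determined by its first symbol together with its run-length vector $s\in\Delta^+(n+d,r^*)$. So, for each admissible $r^*$, the witnesses are exactly the pairs $\{\vw,\overline{\vw}\}$ with $\vw\in\cU(n+d,r^*)$ running over the $|\Delta^+(n+d,r^*)|$ run-length vectors.

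For each such pair, the two inequalities of Lemma~\ref{lem:insertion-deletion-vectors}, applied to the $L+1$ independent samples indexed by $S$, bound each of the two probabilities by $\bigl(T(s,d;r^*)/|\cU(n,r)|\bigr)^{L+1}$. The pair therefore contributes at most twice this quantity. Summing over $s\in\Delta^+(n+d,r^*)$ gives $2T_L(n,d;r^*)/|\cU(n,r)|^{L+1}$ by \eqref{eq:TL-definition}. Summing over $r^*$ in the admissible range yields \eqref{eq:insertion-bad-probability}.

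The only delicate points are bookkeeping. First, the bijection between words and (first symbol, run-length vector) must be used so that no witness is omitted; this explains both the factor $2$ and the restriction of the vectors to $\Delta^+(n+d,r^*)$. Second, the run range must be justified so that dropping the other values of $r^*$ is legitimate. Repeated samples among $S$ are automatically included, since the argument never requires the $C_i$ to be distinct.
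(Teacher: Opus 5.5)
Your proposal is correct and follows the paper's proof essentially step for step. Both arguments take a union bound over length-$(n+d)$ witnesses, restrict to the run range \eqref{eq:insertion-run-range}, use the correspondence between each positive run-length vector and its two witnesses (a word starting with $0$ and its complement), and sum the two inequalities of Lemma~\ref{lem:insertion-deletion-vectors} via \eqref{eq:TL-definition}.
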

\begin{proof}
Take a union bound over all length-$(n+d)$ common-supersequence
witnesses. A witness containing a member of $\cU(n,r)$ must have
$r^*$ in the range \eqref{eq:insertion-run-range}. For each such $r^*$,
every positive run-length vector $s\in\Delta^+(n+d,r^*)$ determines
exactly two witnesses: a word $w$ starting with $0$ and its complement
$\overline w$. By Lemma~\ref{lem:insertion-deletion-vectors}, each one
contributes at most
\[
\left(\frac{T(s,d;r^*)}{|\cU(n,r)|}\right)^{L+1}
\]
to the union bound. Summing over the vectors $s$ and the feasible values
of $r^*$ gives
\[
\Pr(E_S)\le\frac{2}{|\cU(n,r)|^{L+1}}
\sum_{r^*=r}^{\min\{n+d,r+2d\}}
\sum_{s\in\Delta^+(n+d,r^*)}T(s,d;r^*)^{L+1},
\]
which is the claimed expression by \eqref{eq:TL-definition}.
\end{proof}

\begin{theorem}[The deletion-allocation generating function]
\label{thm:insertion-gf}
Define
\begin{equation}\label{eq:insertion-kernel-multi}
\mathcal J_L(\vx,y)
=\sum_{s\ge1}y^s\prod_{i=1}^{L+1}(1+x_i+\cdots+x_i^s).
\end{equation}
For $r^*\ge1$,
\begin{align}
T_L(n,d;r^*)
&=\left[y^{n+d}\prod_{i=1}^{L+1}x_i^d\right]
 \mathcal J_L(\vx,y)^{r^*}\label{eq:insertion-gf-power}\\
&=\left[y^{n+d}z^{r^*}\prod_{i=1}^{L+1}x_i^d\right]
 \frac1{1-z\mathcal J_L(\vx,y)}.\label{eq:insertion-gf-rational}
\end{align}
\end{theorem}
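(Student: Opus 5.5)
The plan is to expand $T(s,d;r^*)^{L+1}$ as a count of $(L+1)$-tuples of allocation vectors and read this count off a product of independent per-run factors. For a single run of length $s$ and a single codeword $i$, the admissible deletion counts $t\in\{0,1,\ldots,s\}$ are recorded by the polynomial $1+x_i+\cdots+x_i^s$, where $x_i$ marks deletions charged to codeword $i$. Because the $L+1$ allocation vectors are chosen independently once $s$ is fixed, the generating polynomial of the tuples $(t^{(1)}_j,\ldots,t^{(L+1)}_j)$ for that run is $\prod_{i=1}^{L+1}(1+x_i+\cdots+x_i^{s})$. Attaching $y^{s}$ to record the run length and summing over $s\ge1$ gives exactly the kernel $\mathcal J_L(\vx,y)$ in \eqref{eq:insertion-kernel-multi}. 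Its coefficients are nonnegative integers, and it is a well-defined formal power series because each monomial $y^s\prod_i x_i^{k_i}$ arises from only the single value $s$.

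Next, I would raise the kernel to the power $r^*$. The product $\mathcal J_L(\vx,y)^{r^*}$ then sums over ordered sequences $(s_1,\ldots,s_{r^*})$ of positive integers together with choices $0\le t^{(i)}_j\le s_j$ for every run $j$ and every codeword $i$. Each such configuration contributes the monomial
\[
y^{\sum_j s_j}\prod_{i=1}^{L+1}x_i^{\sum_j t^{(i)}_j}.
\]
Extracting the coefficient of $y^{n+d}\prod_i x_i^{d}$ forces two things: $s\in\Delta^+(n+d,r^*)$, and $\sum_j t^{(i)}_j=d$ for every $i$. The configurations that survive are therefore exactly the pairs consisting of a vector $s\in\Delta^+(n+d,r^*)$ and an $(L+1)$-tuple of allocation vectors, each of which is counted by $T(s,d;r^*)$. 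This bijection gives $\sum_{s}T(s,d;r^*)^{L+1}=T_L(n,d;r^*)$, which is \eqref{eq:insertion-gf-power}.

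For the rational form \eqref{eq:insertion-gf-rational}, I would expand $1/(1-z\mathcal J_L)=\sum_{k\ge0}z^k\mathcal J_L^k$ as a formal series. Since $\mathcal J_L$ has no constant term (it is divisible by $y$), for any fixed power of $y$ only finitely many values of $k$ contribute, so the expansion is legitimate. Taking the coefficient of $z^{r^*}$ picks out the single term $\mathcal J_L^{r^*}$, and the identity follows from \eqref{eq:insertion-gf-power}.

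The only step needing any care is the formal well-definedness just noted, together with making sure that extracting the coefficient enforces both the length constraint and the deletion-sum constraint simultaneously. Both points are routine, so I do not expect a genuine obstacle.
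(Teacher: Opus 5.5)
Your proposal is correct and follows the paper's argument. You use the per-run factor $\prod_i(1+x_i+\cdots+x_i^{s})$ with length marked by $y^s$, raise $\mathcal J_L$ to the power $r^*$, and extract $[y^{n+d}\prod_i x_i^d]$ to get \eqref{eq:insertion-gf-power}, then expand geometrically in $z$ for \eqref{eq:insertion-gf-rational}; this is how the paper proves both identities (its additional closed rational form of $\mathcal J_L$ via $(1-x_i^{s+1})/(1-x_i)$ and inclusion--exclusion is not needed for the statement).
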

The complete proof, including the fixed-witness allocation identity,
the sum over run lengths, and the rational simplification, is given in
Appendix~\ref{app:gf}.
On the diagonal, write
\begin{equation}\label{eq:insertion-kernel-symmetric}
J_L(x,y)=\sum_{s\ge1}y^s(1+x+\cdots+x^s)^{L+1}.
\end{equation}
It converges throughout
\begin{equation}\label{eq:ins-domain}
\Idomain=\{(x,y)\in\R_{>0}^2:\max\{y,yx^{L+1}\}<1\}.
\end{equation}

\subsection{The insertion exponent and its critical point}
\label{sec:insertion-critical}
For $\gamma>0$ and $0<\eta\le1+\gamma$, let
\begin{equation}\label{eq:theta-definition}
\theta_L(\gamma,\eta)=\limsup_{n\to\infty}\frac1n
\log_2T_L(n,\lfloor\gamma n\rfloor;\lfloor\eta n\rfloor).
\end{equation}
Here $\eta$ describes the runs of a witness supersequence, not the chosen
runs of a codeword. For $0<\eta<1+\gamma$, solve
\begin{align}
x\partial_x\ln J_L(x,y)&=\frac{(L+1)\gamma}{\eta},
\label{eq:insertion-critical-x}\\
y\partial_y\ln J_L(x,y)&=\frac{1+\gamma}{\eta},
\label{eq:insertion-critical-y}\\
x>0,\quad y>0,\qquad\max\{y,yx^{L+1}\}&<1.
\label{eq:insertion-critical-admissibility}
\end{align}

\begin{theorem}[Insertion Critical-Point Formula]\label{thm:insertion-theta}
For every $L\ge1$, $\gamma>0$, and $0<\eta<1+\gamma$, there is a
unique solution $(x^*,y^*)$ of
\eqref{eq:insertion-critical-x}--\eqref{eq:insertion-critical-admissibility}.
Set $z^*=1/J_L(x^*,y^*)$. Then the limsup in
\eqref{eq:theta-definition} is a limit, and
\begin{equation}\label{eq:theta-critical-value}
\begin{split}
\theta_L(\gamma,\eta)
&=\eta\log_2J_L(x^*,y^*)-(L+1)\gamma\log_2x^*
 -(1+\gamma)\log_2y^*\\
&=-\eta\log_2z^*-(L+1)\gamma\log_2x^*
 -(1+\gamma)\log_2y^*.
\end{split}
\end{equation}
The same limit holds for every integer sequence $d_n/n\to\gamma$,
$r_n^*/n\to\eta$, with $1\le r_n^*\le n+d_n$.
\end{theorem}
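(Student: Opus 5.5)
The plan is to apply the Critical-Point Coefficient Lemma (Lemma~\ref{lem:critical-point-coefficient}) with $s=L+2$ variables $\vu=(x_1,\ldots,x_{L+1},y)$, $B\equiv1$, $K=\mathcal J_L(\vx,y)$, power $N_n=r_n^*$, and exponent vector $\mathbf j_n=(d_n,\ldots,d_n,n+d_n)$, so that $\mathbf j_n/N_n\to\boldsymbol\mu=(\gamma/\eta,\ldots,\gamma/\eta,(1+\gamma)/\eta)$. By \eqref{eq:insertion-gf-rational}, $T_L(n,d_n;r_n^*)=c_{N_n}$, and multiplying the lemma's limit by $N_n/n\to\eta$ gives \eqref{eq:theta-critical-value}. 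The symmetric ansatz $x_1=\cdots=x_{L+1}=x$ reduces the $L+1$ individual $x_i$-equations to \eqref{eq:insertion-critical-x} (sum of the $L+1$ equal logarithmic derivatives), and the $y$-equation is \eqref{eq:insertion-critical-y}. Interior convergence follows from \eqref{eq:ins-domain}: $\mathcal J_L$ is dominated termwise by $\sum_s (s+1)^{L+1}y^s\max\{1,x\}^{(L+1)s}$, which converges in a neighborhood of any point with $y<1$ and $yx^{L+1}<1$, since these are strict inequalities.

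\textbf{Existence and uniqueness.} Write $J_L(e^a,e^b)$ and $\Lambda(a,b)=\ln J_L(e^a,e^b)$. This function is strictly convex on the open convex set $\{b<0,\ b+(L+1)a<0\}$ because it is the log of a sum of exponentials of linear forms $(a\sum_i k_i + b s)$ with positive weights, whose exponents span $\R^2$. Hence $\nabla\Lambda$ is injective, which gives uniqueness. For existence, I will show that the image of $\nabla\Lambda$ is the interior of the convex hull of the Newton polytope's directions, namely the cone of pairs $(\mu_x,\mu_y)$ with $0<\mu_x<(L+1)\mu_y$ and $\mu_y>1$. The set of ratios is $\{(\sum k_i, s):0\le k_i\le s,\ s\ge1\}$, and since we normalize by the power we need the point $((L+1)\gamma/\eta,(1+\gamma)/\eta)$ to lie in the interior of the convex hull $\{(\mu_x,\mu_y):\mu_y\ge1,\ 0\le\mu_x\le(L+1)\mu_y\}$. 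This holds: $\eta<1+\gamma$ gives $\mu_y>1$, and $\gamma>0$ together with $\gamma<1+\gamma$ gives $0<\mu_x<(L+1)\mu_y$. Surjectivity onto this interior is the standard steepness argument. As $(a,b)$ approaches the boundary $b\to0^-$ or $b+(L+1)a\to0^-$, the series diverges and the gradient blows up in the outward direction. As $a\to\pm\infty$ or $b\to-\infty$, the gradient tends to the corresponding faces. So $\Lambda$ minus a linear form attains its minimum in the interior. Concretely, I will minimize $\Phi(a,b)=\eta\Lambda(a,b)-(L+1)\gamma a-(1+\gamma)b$, show coercivity in all directions of the domain, and take the minimizer as the critical point. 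Restricting to the diagonal in $\vx$ is justified because the full problem is symmetric and strictly convex in $(\ln x_1,\ldots,\ln x_{L+1},\ln y)$, so its unique critical point is symmetric.

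\textbf{Lattice conditions.} The support of $\mathcal J_L$ contains $(0,\ldots,0,1)$, $(e_i,1)$, and $(0,\ldots,0,2)$. Their differences include $e_i$ for each $i$ and $(0,\ldots,0,1)$, so $\Lambda(K)=\Z^{L+2}$ has full rank. Compatibility (d) is then automatic with $B=1$ and $w=0$.

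\textbf{Main obstacle.} The main obstacle is the coercivity and existence step. The domain boundary $yx^{L+1}=1$ is where divergence must be shown to force the gradient outward; near the face $\mu_x=(L+1)\mu_y$ the dominant terms are $y^s x^{(L+1)s}$, which produce a $\sum_s (yx^{L+1})^s$ singularity with a polynomial factor. I will handle this by estimating $\Phi$ along rays and using that $\mu_x<(L+1)\mu_y$ and $\mu_y>1$ strictly, so that the linear term cannot compensate for the logarithmic blow-up at the boundary or the growth at infinity. The final assertion about general sequences follows directly from the lemma's convergent-direction hypothesis.
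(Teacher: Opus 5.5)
Your proposal is correct and follows essentially the paper's route. It uses the same application of the Critical-Point Coefficient Lemma, with $B=1$, $K=\mathcal J_L$, power $r_n^*$ and target $(d_n,\ldots,d_n,n+d_n)$, the same support-lattice check, and existence and uniqueness via strict convexity, coercivity (the target lies in the interior of the support hull) and divergence at the finite boundary. The only difference is that you run the convex minimization on the two-variable diagonal and lift to the individual equations by symmetry of the partial derivatives. The paper instead minimizes in all $L+2$ log-variables, makes coercivity explicit through the coefficient-one support corners at levels $\ell=1$ and $\ell=S>(1+\gamma)/\eta$, and then uses uniqueness to force $x_1=\cdots=x_{L+1}$.
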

\begin{proof}
Appendix~\ref{app:insertion-critical} proves existence and uniqueness
throughout the stated interior range. In
Lemma~\ref{lem:critical-point-coefficient}, take $B=1$,
$K=\mathcal J_L$, power $N_n=r_n^*$, and
$\mathbf j_n=(d_n,\ldots,d_n,n+d_n)$.
Symmetry yields the required individual logarithmic-gradient equations.
The support differences generate $\Z^{L+2}$, so lattice compatibility
is automatic. The detailed verification is in the same appendix.
Use \eqref{eq:insertion-gf-power} and multiply the lemma's exponent by
$N_n/n\to\eta$.
\end{proof}
The result evaluates the allocation exponent, not the exact number of
distinct descendant tuples. No additional optimization over $x,y,z$
appears in \eqref{eq:theta-critical-value}.

\paragraph{The endpoint.}
At $\eta=1+\gamma$, every witness run has length one, and
\begin{equation}\label{eq:theta-endpoint}
T_L(n,d;n+d)=\binom{n+d}{d}^{L+1},\qquad
\theta_L(\gamma,1+\gamma)
=(L+1)(1+\gamma)h_2\!\left(\frac\gamma{1+\gamma}\right).
\end{equation}
Indeed, there is only one positive composition of $n+d$ into $n+d$
parts, and a deletion allocation selects $d$ of these unit runs. This
endpoint is not a finite interior critical point.

\subsection{The coding bound and the worst-case witness density}
\label{sec:insertion-coding}
For a design density $0<\rho<1$, put
\[
I_{\gamma,\rho}=[\rho,\min\{1+\gamma,\rho+2\gamma\}].
\]

\begin{theorem}[Critical-point insertion lower bound]
\label{cor:insertion-list-lower-bound}
For $L\ge1$ and $\gamma>0$, define
\begin{equation}\label{eq:insertion-list-lower-bound}
\Rirun(\gamma)=\sup_{0<\rho<1}\frac1L
\left\{(L+1)h_2(\rho)-
\sup_{\eta\in I_{\gamma,\rho}}\theta_L(\gamma,\eta)\right\},
\end{equation}
where $\theta_L$ is evaluated by Theorem~\ref{thm:insertion-theta}
in the interior and by \eqref{eq:theta-endpoint} at $1+\gamma$.
Then $\cR_{2,\mathrm{ins}}^{(L)}(\gamma)\ge[\Rirun(\gamma)]_+$.
The inner supremum includes all feasible witness densities and both
endpoints of $I_{\gamma,\rho}$.
\end{theorem}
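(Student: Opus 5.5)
The plan is to fix a design density $0<\rho<1$, set $d=\lfloor\gamma n\rfloor$ and $r=\lfloor\rho n\rfloor$, and sample uniformly from $\cU(n,r)$. The bad-event probability for a fixed $(L+1)$-set $S$ is then bounded by Corollary~\ref{cor:insertion-probability}:
\[
\Pr(E_S)\le\frac{2}{|\cU(n,r)|^{L+1}}\sum_{r^*=r}^{\min\{n+d,r+2d\}}T_L(n,d;r^*).
\]
The sum has at most $2d+1=O(n)$ terms, so its exponent equals the exponent of its largest term. The denominator has exponent $(L+1)h_2(\rho)$ by \eqref{eq:run-ensemble-size}. It remains to show
\[
\limsup_{n\to\infty}\frac1n\log_2\max_{r^*}T_L(n,d;r^*)\le\sup_{\eta\in I_{\gamma,\rho}}\theta_L(\gamma,\eta).
\]
Lemma~\ref{lem:sampling} will then give every positive rate below $\{(L+1)h_2(\rho)-\sup_\eta\theta_L\}/L$, and it also supplies distinct codewords. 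We take the supremum over $\rho$, and a singleton code covers the case where this quantity is nonpositive.

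The exponent comparison is done by compactness. For each $n$, pick a maximizing index $r_n^*$. Given any subsequence of $n$, extract a further subsequence along which $r_n^*/n\to\eta$. Since $r\le r^*_n\le\min\{n+d,r+2d\}$, this limit satisfies $\eta\in I_{\gamma,\rho}$.

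There are two cases for $\eta$. If $\eta$ is interior, that is $0<\eta<1+\gamma$, the final clause of Theorem~\ref{thm:insertion-theta} gives $\frac1n\log_2T_L(n,d_n;r_n^*)\to\theta_L(\gamma,\eta)$ along that subsequence. Note that $\eta\ge\rho>0$, so $\eta=0$ cannot occur. The main obstacle is the endpoint $\eta=1+\gamma$ when it lies in $I_{\gamma,\rho}$, that is, when $\rho+2\gamma\ge1+\gamma$. In that case $r_n^*$ may approach $n+d$ without equalling it, so neither the interior formula nor \eqref{eq:theta-endpoint} applies directly.

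For the endpoint case I would avoid needing continuity of $\theta_L$ and bound $T_L$ directly. Each allocation count satisfies $T(s,d;r^*)\le\binom{d+r^*-1}{d}$. The number of compositions is $|\Delta^+(n+d,r^*)|=\binom{n+d-1}{r^*-1}$. Hence
\[
T_L(n,d;r^*)\le\binom{n+d-1}{r^*-1}\binom{d+r^*-1}{d}^{L+1}.
\]
Along a subsequence with $r_n^*/n\to1+\gamma$, the first factor is $2^{o(n)}$. The second factor has exponent $(L+1)(1+2\gamma)h_2(\gamma/(1+2\gamma))$, which exceeds the endpoint value \eqref{eq:theta-endpoint}, so this bound is too crude. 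I would refine it as follows. When $r^*=n+d-k$ with $k=o(n)$, all but $k$ runs have length one and the remaining runs absorb $k$ extra symbols. Then $T(s,d;r^*)$ is at most $\binom{n+d}{d}\cdot2^{o(n)}$, because the allocation is determined up to $2^{o(n)}$ choices by the set of touched runs together with the deletion counts inside the $O(k)$ long-run mass. Combined with the $2^{o(n)}$ compositions, this gives the exponent $\theta_L(\gamma,1+\gamma)$.

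If the subsequential limit is interior, no further work is needed. Since every subsequence has a further subsequence whose exponent is at most $\sup_{\eta\in I_{\gamma,\rho}}\theta_L(\gamma,\eta)$, the limsup bound follows, and this completes the proof.
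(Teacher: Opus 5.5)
Your proposal is correct and follows the paper's proof essentially step for step: the bound of Corollary~\ref{cor:insertion-probability}, reduction to a maximizing witness run count $r_n^*$, compactness in $r_n^*/n$, Theorem~\ref{thm:insertion-theta} for interior limits, and a composition-count-times-allocation estimate at $\eta=1+\gamma$. The only difference is the endpoint allocation estimate. The paper bounds $T(s,d;r^*)\le\binom{n+d}{d}$ for every composition at once, since an allocation injects into the $d$-set of positions formed by the first $t_j$ positions of each run. This gives \eqref{eq:insertion-near-endpoint} directly, without your near-endpoint structural count. If you keep your version, state explicitly that the number of touched runs lies in $[d-k,d]$; otherwise the touched-set count is not $\binom{n+d}{d}2^{o(n)}$ when $\gamma>1$.
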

\begin{proof}
Fix $\rho\in(0,1)$, and set $d_n=\lfloor\gamma n\rfloor$,
$r_n=\lfloor\rho n\rfloor$ and
$A=\sup_{\eta\in I_{\gamma,\rho}}\theta_L(\gamma,\eta)$.
We first justify the uniform exponent bound for the sum of witness
run classes. For any $1\le r^*\le n+d$,
\begin{equation}\label{eq:insertion-near-endpoint}
T_L(n,d;r^*)\le\binom{n+d-1}{r^*-1}\binom{n+d}{d}^{L+1}.
\end{equation}
There are $\binom{n+d-1}{r^*-1}$ positive run-length compositions. For
each composition, a deletion allocation injects into the $d$-subsets
of positions by deleting the first $t_j$ positions in each run, so its
count is at most $\binom{n+d}{d}$. This proves
\eqref{eq:insertion-near-endpoint} and also bounds $A$ by a finite constant.

Let $r_n^*$ maximize $T_L(n,d_n;r^*)$ over
$r_n\le r^*\le\min\{n+d_n,r_n+2d_n\}$. Along a subsequence realizing
the limsup of its normalized logarithm, pass to a further subsequence
with $r_n^*/n\to\eta\in I_{\gamma,\rho}$. If $\eta<1+\gamma$,
Theorem~\ref{thm:insertion-theta} applies to this converging integer
sequence and gives exponent $\theta_L(\gamma,\eta)\le A$.
If $\eta=1+\gamma$, the composition factor in
\eqref{eq:insertion-near-endpoint} has logarithm $o(n)$, giving the
upper exponent in \eqref{eq:theta-endpoint}, again at most $A$.
Since the sum has $O(n)$ terms,
\begin{equation}\label{eq:insertion-uniform-sum}
\limsup_{n\to\infty}\frac1n\log_2
\sum_{r^*=r_n}^{\min\{n+d_n,r_n+2d_n\}}T_L(n,d_n;r^*)\le A.
\end{equation}
Thus Corollary~\ref{cor:insertion-probability} gives
\[
\Pr(E_S)\le2^{n\{A-(L+1)h_2(\rho)+o(1)\}}.
\]
Apply Lemma~\ref{lem:sampling} and then take the supremum over $\rho$.
The lemma includes deduplication; nonpositive expressions are replaced
by zero.
\end{proof}
Here $\rho$ is a code-design choice, whereas $\eta$ is a worst-case
witness run density. Only the auxiliary saddle variables have been
eliminated. The converging-direction statement and the separate
near-endpoint estimate justify the witness sum, rather than an
unproved exchange of pointwise limits and maxima.

\subsection{Auxiliary positive-evaluation certificates}
\label{sec:insertion-certificates}
For $(x,y)\in\Idomain$, write
\begin{align}
\Xi_{L,\gamma,\eta}(x,y)
&=\eta\log_2J_L(x,y)-(L+1)\gamma\log_2x
 -(1+\gamma)\log_2y,\label{eq:insertion-point-envelope}\\
\widehat\theta_L(\gamma,\eta)
&=\inf_{(x,y)\in\Idomain}\Xi_{L,\gamma,\eta}(x,y).
\label{eq:theta-envelope}
\end{align}
Nonnegative coefficients in \eqref{eq:insertion-gf-power} imply
$\theta_L\le\widehat\theta_L$. The critical-point formula gives equality
for every interior $\eta$. At $\eta=1+\gamma$, take $x=\gamma$ and
$y\downarrow0$, using $J_L(x,y)=y(1+x)^{L+1}+O(y^2)$, to obtain the
same equality with \eqref{eq:theta-endpoint}.

A single feasible point gives the following independent certificate:
for every $\rho\in(0,1)$, the nonnegative part of
\begin{equation}\label{eq:insertion-point-certificate}
\frac1L\left\{(L+1)h_2(\rho)+(L+1)\gamma\log_2x
 +(1+\gamma)\log_2y
 -\max_{\eta\in I_{\gamma,\rho}}\eta\log_2J_L(x,y)\right\}
\end{equation}
is a valid lower bound on $\cR_{2,\mathrm{ins}}^{(L)}(\gamma)$.
Indeed, $\theta_L(\gamma,\eta)\le\Xi_{L,\gamma,\eta}(x,y)$ for all
$\eta$; substitute this into \eqref{eq:insertion-list-lower-bound}.
The maximum in \eqref{eq:insertion-point-certificate} is over the two
endpoints because its objective is linear in $\eta$.
These certificates need not attain the critical-point coding bound,
but require only a positive evaluation, not a critical-equation solver.

\subsection{A Markov-ensemble baseline and its local-lemma consequence}\label{sec:insertion-comparison}
Con, Doron, and Ribeiro~\cite[Lemma~3.2]{CDR2026} show that, for a symmetric
binary Markov chain $X$ with uniform first symbol and probability
$\alpha\in(1/2,1)$ of repeating the previous symbol,
\begin{equation}\label{eq:cdr-witness}
\Pr[X\subseq w]\le
\left(\frac{\alpha^{1+\gamma}}{(2\alpha-1)^\gamma}\right)^n
\end{equation}
for a fixed length-$(1+\gamma)n$ word $w$.
Taking $\alpha=(1+\gamma)/2$ gives the exponent
\begin{equation}\label{eq:Cins}
C_{\mathrm{ins}}(\gamma)
=(1+\gamma)\left[1-h_2\left(\frac\gamma{1+\gamma}\right)\right],
\qquad 0<\gamma<1.
\end{equation}
Their union-bound proof, retaining $L$ in their equation~(14), gives all
positive rates below
$C_{\mathrm{ins}}(\gamma)-(1+\gamma)/(L+1)$.
For comparison, we record the following consequence of their witness
estimate and our sampling lemma.

\begin{proposition}[A local-lemma consequence of the CDR estimate]
\label{prop:cdr-lll}
For every $L\ge1$ and $0<\gamma<1$,
\begin{equation}\label{eq:cdr-lll-rate}
\cR_{2,\mathrm{ins}}^{(L)}(\gamma)
\ge\left[\frac{(L+1)C_{\mathrm{ins}}(\gamma)-(1+\gamma)}{L}\right]_+.
\end{equation}
\end{proposition}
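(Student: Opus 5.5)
Fix $d=\lfloor\gamma n\rfloor$ and $m=n+d$, and let $\mathcal X_n=\{0,1\}^n$ carry the law $P_n$ of the first $n$ symbols of the symmetric Markov chain with uniform first symbol and repetition probability $\alpha=(1+\gamma)/2$. Since $0<\gamma<1$, we have $\alpha\in(1/2,1)$, so the CDR estimate \eqref{eq:cdr-witness} applies. We take the witnesses of $c$ to be $\mathcal W_n(c)=\cI_d(c)$, the length-$m$ supersequences of $c$; this set is nonempty. By the remark after the definition of list decodability, a set in which every witness lies in at most $L$ members is exactly $(d,0;L)$-list-decodable. The plan is therefore to verify hypothesis \eqref{eq:sampling-probability} of Lemma~\ref{lem:sampling} and read off the rate.

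For a fixed $(L+1)$-tuple of independent samples, a union bound over all $2^m$ possible common supersequences $w\in\{0,1\}^m$, together with independence, gives
\[
\Pr\Bigl[\bigcap_{i=1}^{L+1}\cI_d(C_i)\ne\varnothing\Bigr]
\le\sum_{w\in\{0,1\}^m}\prod_{i=1}^{L+1}\Pr[C_i\subseq w]
\le 2^{m}\Bigl(\frac{\alpha^{1+\gamma}}{(2\alpha-1)^\gamma}\Bigr)^{n(L+1)}.
\]
At $\alpha=(1+\gamma)/2$ we have $2\alpha-1=\gamma$, and a direct computation gives
\[
-\log_2\frac{\alpha^{1+\gamma}}{\gamma^\gamma}=(1+\gamma)-(1+\gamma)h_2\!\left(\frac{\gamma}{1+\gamma}\right)=C_{\mathrm{ins}}(\gamma),
\]
which is exactly how \eqref{eq:Cins} arises. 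Hence the probability is at most $2^{-n(a-o(1))}$ with
\[
a=(L+1)C_{\mathrm{ins}}(\gamma)-(1+\gamma),
\]
where the $o(1)$ term absorbs the floor in $m=n+\lfloor\gamma n\rfloor$.

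Lemma~\ref{lem:sampling} then yields, for every $0<R<a/L$ and all large $n$, distinct words forming a $(d,0;L)$-list-decodable code of rate $R-o(1)$. That lemma's deduplication step already handles the Markov samples that coincide. When $a\le0$, a singleton code supplies the zero bound. Together these give \eqref{eq:cdr-lll-rate}.

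The only delicate point is whether the CDR bound \eqref{eq:cdr-witness} applies to $w$ of length exactly $n+\lfloor\gamma n\rfloor$ rather than exactly $(1+\gamma)n$. If their lemma is stated only for length $(1+\gamma)n$ with $\gamma n$ an integer, I would handle this in one of two ways. The first is to apply it with $\gamma'=\lfloor\gamma n\rfloor/n$, which changes the exponent only by $o(1)$ because $C_{\mathrm{ins}}$ is continuous. The second is to note that $\Pr[X\subseq w]$ is monotone under extending $w$: pad $w$ to a longer word and use the bound at a slightly larger rational $\gamma$. Either way, the exponent stays within $o(1)$ of $C_{\mathrm{ins}}(\gamma)$.
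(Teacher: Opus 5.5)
Your proposal is correct and follows the paper's proof essentially step for step. The common steps are: Markov samples with $\alpha=(1+\gamma)/2$; witnesses $\cI_d(c)$; a union bound over the $2^{n+d}$ possible common supersequences combined with the CDR estimate, giving exponent $a=(L+1)C_{\mathrm{ins}}(\gamma)-(1+\gamma)$; Lemma~\ref{lem:sampling} with its deduplication step; and a singleton code when $a\le0$. Your first fix for nonintegral $\gamma n$ is exactly the paper's treatment: replace $\gamma$ by $\lfloor\gamma n\rfloor/n$ with the matching Markov parameter and use continuity of the exponent. Your padding alternative, via monotonicity of $\Pr[X\subseq w]$ under extending $w$, also works.
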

\begin{proof}
Use independent samples from the Markov distribution in
\eqref{eq:cdr-witness}. For a fixed $(L+1)$-set of sample indices,
union-bounding over all common supersequences gives
\[
\Pr(E_S)\le
2^{(1+\gamma)n}2^{-n(L+1)C_{\mathrm{ins}}(\gamma)}.
\]
The exponent in Lemma~\ref{lem:sampling} is therefore
\[
a=(L+1)C_{\mathrm{ins}}(\gamma)-(1+\gamma).
\]
When $a>0$, that lemma gives every rate $R<a/L$. The witnesses are
nonempty for each sampled word, so its deduplication argument applies
to this nonuniform ensemble as well. If $a\le0$, a singleton code gives
rate zero. For nonintegral $\gamma n$, replace $\gamma$ in the finite
experiment by $\lfloor\gamma n\rfloor/n$ and choose the corresponding
Markov parameter. Continuity of the displayed exponent changes only
$o(1)$ terms in the rate.
\end{proof}
The lower bounds in \eqref{eq:insertion-list-lower-bound} and
\eqref{eq:cdr-lll-rate} can be combined pointwise. Their numerical
comparison, including the distinction between a feasible evaluation and
a global optimization, appears in Section~\ref{sec:comparison-insertion}.

For $L=1$, there is an additional direct lower bound:
\begin{equation}\label{eq:unique-ins-equals-del}
\cR_{2,\mathrm{ins}}^{(1)}(\gamma)
=\cR_{2,\mathrm{del}}^{(1)}(\gamma)
\ge[\Runi(\gamma)]_+,\qquad 0<\gamma<1/2.
\end{equation}
Indeed, for two words of length $n$, the length of a shortest common
supersequence is $2n$ minus the length of a longest common subsequence.
A common subsequence gives a supersequence by interleaving the unmatched
symbols. Conversely, in a shortest supersequence the two embedding sets
cover all positions, and their intersection gives the reverse inequality.
This two-word identity does not supply the corresponding equivalence for
$L+1>2$ words.

\section{Converse Bounds}
\label{sec:converses}\label{sec:upper-bounds}
We now prove Singleton-, Levenshtein-, and higher-order Elias-type
converses. Throughout this section, $q\ge2$ and $L\ge1$ are fixed.
By Proposition~\ref{prop:ins2del} in Section~\ref{sec:reduction},
deletion converses transfer to mixed errors at the effective fraction
$\tau=\delta+\gamma/L$, with the exact integer budget $t_n$ in
\eqref{eq:effective-integer-budget}. The insertion-based Elias estimate
also uses a direct common-supersequence argument.

\subsection{Singleton- and Levenshtein-type Bounds}\label{sec:lev-upper}
We begin with the prefix-counting consequence of the reduction.
\begin{corollary}[Singleton-type bound]\label{thm:singleton-insdel-list}
\label{cor:singleton-insdel-asymptotic}
Every $(\gamma,\delta,L)$-insdel list-decodable code satisfies
\begin{equation}\label{eq:singleton-insdel-finite}
|\cC_n|\le Lq^{n-t_n},
\end{equation}
with $t_n$ as in \eqref{eq:effective-integer-budget}. Consequently,
\begin{equation}\label{eq:singleton-insdel-asymptotic}
\limsup_{n\to\infty}\frac1n\log_q|\cC_n|
\le\max\{1-\delta-\gamma/L,0\}.
\end{equation}
\end{corollary}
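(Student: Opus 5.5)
The bound follows from Proposition~\ref{prop:ins2del} combined with a pigeonhole argument on prefixes. Let $\cC_n$ be $(\lfloor\gamma n\rfloor,\lfloor\delta n\rfloor;L)$-list-decodable, and write $A_n=\lfloor\gamma n\rfloor$, $B_n=\lfloor\delta n\rfloor$. Apply Proposition~\ref{prop:ins2del} with $a=A_n-L\lfloor A_n/L\rfloor\ge0$, $b=B_n$, and $t=\lfloor A_n/L\rfloor$. Then $a+Lt=A_n$, so $\cC_n$ is $(a,B_n+\lfloor A_n/L\rfloor;L)$-list-decodable. Dropping the leftover insertion budget, which only shrinks the error balls, $\cC_n$ is list-decodable against $B_n+\lfloor A_n/L\rfloor$ pure deletions. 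Since deletions beyond $n$ are impossible, the same holds for $t_n=\min\{n,B_n+\lfloor A_n/L\rfloor\}$ deletions. This is exactly the integer-budget statement recorded in \eqref{eq:effective-integer-budget}.

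Next comes the counting step. Group the codewords by their length-$(n-t_n)$ prefixes; there are at most $q^{n-t_n}$ such prefixes. For any prefix $w$, every codeword with prefix $w$ yields $w$ after deleting its last $t_n$ symbols. So $w$ is a received word consistent with all of these codewords under $t_n$ deletions, and list decodability gives at most $L$ codewords per prefix. Hence $|\cC_n|\le Lq^{n-t_n}$. When $t_n=n$ the prefix is the empty word, and the same argument gives $|\cC_n|\le L$, which still fits the formula.

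For the asymptotic statement, take $\frac1n\log_q$ of $|\cC_n|\le Lq^{n-t_n}$. If $\tau=\delta+\gamma/L<1$, then $t_n/n\to\tau$ because $B_n/n\to\delta$ and $\lfloor A_n/L\rfloor/n\to\gamma/L$. If $\tau\ge1$, then $\liminf t_n/n\ge1$, so $t_n/n\to1$ and the exponent tends to $0$. In both cases $\limsup\frac1n\log_q|\cC_n|\le\max\{1-\tau,0\}$, since $\frac1n\log_q L\to0$.

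There is no real obstacle here. The only care needed is in the integer bookkeeping: the leftover insertion budget $a$ must be discarded correctly, and the case $t_n=n$ must be handled, where the witness is the empty word.
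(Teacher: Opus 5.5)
Your proof is correct and follows the paper's argument. The reduction in Proposition~\ref{prop:ins2del} gives deletion-list decodability with the integer budget $t_n$, and pigeonholing on length-$(n-t_n)$ prefixes gives $|\cC_n|\le Lq^{n-t_n}$. You spell out the integer bookkeeping ($a=A_n-L\lfloor A_n/L\rfloor$, $t=\lfloor A_n/L\rfloor$) that the paper compresses into \eqref{eq:effective-integer-budget}. For $\tau\ge1$ you use $t_n/n\to1$ where the paper notes $n-t_n=O(1)$; either suffices for the rate statement.
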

\begin{proof}
No prefix of length $n-t_n$ can occur in more than $L$ codewords, since
such codewords would have a common subsequence of that length.
There are $q^{n-t_n}$ possible prefixes. If $\tau\ge1$, then $n-t_n=O(1)$,
which proves the zero-rate conclusion, including the rounding-sensitive
case $\tau=1$.
\end{proof}

We next generalize Levenshtein's run-counting converse to list decoding.
The following finite-length statement is formulated for pure deletions;
Proposition~\ref{prop:ins2del} applies it to mixed errors with the exact
budget $t_n$ in \eqref{eq:effective-integer-budget}.

\begin{theorem}[Levenshtein-type list-decoding upper bound]
\label{thm:levenshtein-insdel-list}
Suppose $\cC\subseteq\Sigma_q^n$ is $(0,t;L)$-list-decodable, with
$0\le t<n$. For every integer $r$ with $t\le r+1\le n$,
\begin{equation}\label{eq:levenshtein-insdel-list-finite}
|\cC|\le
\frac{Lq^{n-t}}{\displaystyle\sum_{i=0}^t\binom{r+1-t}{i}}
+q\sum_{i=0}^{r-1}\binom{n-1}{i}(q-1)^i.
\end{equation}
In particular, this applies to mixed-error codes with $t=t_n$ from
\eqref{eq:effective-integer-budget} whenever $t_n<n$.
\end{theorem}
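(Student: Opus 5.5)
Split $\cC$ by run count into a low-run part $\cC_{<}=\{c\in\cC:\runs(c)\le r\}$ and a high-run part $\cC_{>}=\{c\in\cC:\runs(c)\ge r+1\}$.

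\emph{Low-run part.} A $q$-ary word of length $n$ with at most $r$ runs is determined by three pieces of data: its first symbol ($q$ choices), the set of $i\le r-1$ run-boundary positions among the $n-1$ gaps, and the new symbol at each boundary ($q-1$ choices each). Hence
\[
|\cC_{<}|\le q\sum_{i=0}^{r-1}\binom{n-1}{i}(q-1)^i,
\]
which is the second term in \eqref{eq:levenshtein-insdel-list-finite}. No list-decoding property is needed here.

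\emph{High-run part.} Here I would double count pairs $(c,w)$ with $c\in\cC_{>}$ and $w\in\cD_t(c)$. Since $\cC$ is $(0,t;L)$-list-decodable, each $w\in\Sigma_q^{n-t}$ lies in $\cD_t(c)$ for at most $L$ codewords. This gives
\[
\sum_{c\in\cC_{>}}|\cD_t(c)|\le Lq^{n-t}.
\]
It therefore suffices to prove the lower bound
\[
|\cD_t(c)|\ge\sum_{i=0}^{t}\binom{r+1-t}{i}\qquad\text{whenever }\runs(c)\ge r+1.
\]

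This descendant lower bound is the main step. The classical Levenshtein bound gives $|\cD_t(c)|\ge\binom{\runs(c)-t+1}{t}$. Since the target is a cumulative sum of binomials in $r+1-t$, I would instead show the stronger inequality
\[
|\cD_t(c)|\ge\sum_{i=0}^{t}\binom{k-t}{i},\qquad k=\runs(c),
\]
and then use that the right-hand side is increasing in $k$.

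I would first try induction on $t$ and on the length, following the standard recursion for deletion balls. Write $c=a^{m}c'$, where $a^m$ is the first run and $\runs(c')=k-1$. The descendants of $c$ then fall into two disjoint classes:
\begin{itemize}
\item those that keep at least one symbol $a$ at the front, which contain the words $a\cdot\cD_t(a^{m-1}c')$;
\item those that delete the whole first run; their first symbol differs from $a$, and they include $\cD_{t-m}(c')$, and in particular $\cD_{t-1}(c')$ after one reduces to a single-symbol first run.
\end{itemize}
Reducing to $m=1$, this yields the recursion
\[
|\cD_t(c)|\ge|\cD_t(c')|+|\cD_{t-1}(c')|,
\]
with $\runs(c')=k-1$. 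Pascal's rule $\binom{k-t}{i}=\binom{k-1-t}{i}+\binom{k-1-t}{i-1}$ is the identity that should close the induction for $F(k,t)=\sum_{i\le t}\binom{k-t}{i}$.

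The inductive hypothesis for the second class carries parameters $(k-1,t-1)$. This produces $\binom{k-t}{i}$ rather than $\binom{k-1-t}{i-1}$, which is the right direction since $\binom{k-t}{\cdot}$ dominates. The index shift still needs checking: the recursion gives $F(k-1,t)+F(k-1,t-1)$, and this must be at least $F(k,t)$. If this fails, the fallback is a direct injection. A subset $P$ of at most $t$ of the first $k-t$ runs is mapped to a descendant as follows:
\begin{itemize}
\item reduce the runs in $P$ by one symbol each;
\item spend the remaining $t-|P|$ deletions in a canonical way on the tail, for instance by deleting from the last runs;
\item argue distinctness through the run profile of the prefix.
\end{itemize}
Tail deletions can affect the prefix only when whole runs merge, so the tail budget has to be handled carefully. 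That is why only $k-t$ runs are made available to $P$.

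\emph{Conclusion.} Combining the two parts,
\[
|\cC|\le\frac{Lq^{n-t}}{\sum_{i=0}^{t}\binom{r+1-t}{i}}+q\sum_{i=0}^{r-1}\binom{n-1}{i}(q-1)^i,
\]
which is \eqref{eq:levenshtein-insdel-list-finite}. The mixed-error statement with $t=t_n$ then follows directly from Proposition~\ref{prop:ins2del}.
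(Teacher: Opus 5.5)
\textbf{There is a genuine gap in your proof of the descendant bound.} Your low-run/high-run split, the count $q\sum_{i=0}^{r-1}\binom{n-1}{i}(q-1)^i$, and the double count $\sum_{c}|\cD_t(c)|\le Lq^{n-t}$ are exactly the paper's argument. The paper does not prove the descendant bound. It cites Levenshtein's deletion-shadow inequality \eqref{eq:lev-shadow-bound}, $|\cD_t(x)|\ge\sum_{i=0}^{t}\binom{\runs(x)-t}{i}$ for $\runs(x)\ge t$. That is precisely the ``stronger inequality'' you set out to prove; the $\binom{k-t+1}{t}$ you quote is only the weaker classical estimate. With that citation your argument is complete.

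The self-contained proof you sketch, however, fails. The recursion $|\cD_t(c)|\ge|\cD_t(c')|+|\cD_{t-1}(c')|$ is false. For $c=01010$ and $t=2$, the set $\cD_2(01010)$ consists of every length-$3$ word except $111$, so it has size $7$. But $|\cD_2(1010)|+|\cD_1(1010)|=4+4=8$. The error is the claimed disjointness: a descendant obtained after deleting the whole first run need not start with a symbol other than $a$, since the first run of $c'$ may be deleted as well. Thus $\cD_{t-1}(c')$ overlaps $a\cdot\cD_t(c')$. Your fallback injection also fails. In an alternating word, shrinking the singleton runs indexed by $P=\{1,2\}$ or by $P=\{2,3\}$ gives the same descendant, so the prefix run profile does not separate subsets.

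\textbf{How to repair it.} Split by the first symbol of the descendant, using leftmost embeddings.
\begin{itemize}
\item If the first run has length $m\ge2$, then $a\cdot\cD_t(a^{m-1}c')\subseteq\cD_t(c)$, and $a^{m-1}c'$ is shorter but still has $k$ runs. Induction on the length therefore suffices.
\item If $m=1$, write $c'=bc''$ with $b\ne a$. The descendants starting with $a$ are exactly $a\cdot\cD_t(c')$. Those starting with $b$ are exactly $b\cdot\cD_{t-1}(c'')$, where $\runs(c'')\ge k-2$. Descendants starting with any other symbol can be discarded.
\end{itemize}
Put $F(k,t)=\sum_{i=0}^{t}\binom{k-t}{i}$, which is increasing in $k$. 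The case $m=1$ then gives $|\cD_t(c)|\ge F(k-1,t)+F(k-2,t-1)$. By Pascal's rule, $F(k-1,t)+F(k-2,t-1)=1+\sum_{i=1}^{t}\left[\binom{k-1-t}{i}+\binom{k-1-t}{i-1}\right]=F(k,t)$, so the induction closes exactly. The base cases are $t=0$ and $k=t$, where $F=1$. The point you were missing is that the second term must come from the word $c''$, with roughly $k-2$ runs, rather than from $c'$.
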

\begin{proof}
Since $\cC$ is $(0,t;L)$-list-decodable, every
$\vy\in\Sigma_q^{n-t}$ is a subsequence of at most $L$ codewords in
$\cC$.

Split the code according to the number of runs:
\[
\cC_{\le r}\eqdef\{\vc\in\cC:\runs(\vc)\le r\},
\qquad
\cC_{\ge r+1}\eqdef\{\vc\in\cC:\runs(\vc)\ge r+1\}.
\]
A word with exactly $j+1$ runs is determined by its first symbol, its $j$
run boundaries, and the new symbol at every boundary.  Therefore
\begin{equation}\label{eq:low-run-count}
|\cC_{\le r}|
\le q\sum_{j=0}^{r-1}\binom{n-1}{j}(q-1)^j.
\end{equation}

For the high-run part, Levenshtein's deletion-shadow bound~\cite{Levenshtein2002} gives, for every
$\vx\in\Sigma_q^n$ with $\runs(\vx)\ge t$,
\begin{equation}\label{eq:lev-shadow-bound}
|\cD_t(\vx)|
\ge \sum_{i=0}^{t}\binom{\runs(\vx)-t}{i}.
\end{equation}
Consequently, every $\vc\in\cC_{\ge r+1}$ satisfies
\[
|\cD_t(\vc)|
\ge S_{r,t},
\qquad
S_{r,t}\eqdef\sum_{i=0}^{t}\binom{r+1-t}{i}.
\]
Double-counting the pairs $(\vc,\vy)$ with
$\vc\in\cC_{\ge r+1}$ and $\vy\in\cD_t(\vc)$ gives
\begin{equation}\label{eq:high-run-count}
|\cC_{\ge r+1}|S_{r,t}\le Lq^{n-t}.
\end{equation}
Combining \eqref{eq:low-run-count} and \eqref{eq:high-run-count} proves
\eqref{eq:levenshtein-insdel-list-finite}.
\end{proof}

\begin{corollary}[Asymptotic fixed-alphabet form]
\label{cor:levenshtein-insdel-asymptotic-fixed-q}
For $0<\tau<1$ and $\tau\le\rho\le1$, let
\begin{equation}\label{eq:Psi-definition}
\Psi(\rho,\tau)=
\begin{cases}
(\rho-\tau)h_2\!\left(\dfrac\tau{\rho-\tau}\right),&\rho\ge3\tau,\\[1mm]
\rho-\tau,&\tau\le\rho\le3\tau.
\end{cases}
\end{equation}
Define
\begin{equation}\label{eq:singleton-levenshtein-combined-function}
\mathsf U_q^{\mathrm{SL}}(\tau)=
\min\left\{1-\tau,
\inf_{\rho\in[\tau,1]}
\max\{1-\tau-\Psi(\rho,\tau)\log_q2,\overline H_q(\rho)\}\right\}.
\end{equation}
Every $(\gamma,\delta,L)$-insdel list-decodable family with
$\tau=\delta+\gamma/L\in(0,1)$ satisfies
\begin{equation}\label{eq:fixed-q-asymptotic-combined}
\limsup_n\frac1n\log_q|\cC_n|\le\mathsf U_q^{\mathrm{SL}}(\tau).
\end{equation}
At $\tau=0$ set $\mathsf U_q^{\mathrm{SL}}(0)=1$, and at $\tau\ge1$
set it equal to zero.
\end{corollary}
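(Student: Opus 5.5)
The plan is to apply Theorem~\ref{thm:levenshtein-insdel-list} at the exact budget $t=t_n$ from \eqref{eq:effective-integer-budget}, with $r=\lfloor\rho n\rfloor$ for a fixed $\rho\in[\tau,1]$, and then take normalized logarithms. The argument has three steps.

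First, by Proposition~\ref{prop:ins2del}, the family is $(0,t_n;L)$-list-decodable with $t_n/n\to\tau<1$. The option $1-\tau$ in the minimum comes directly from Corollary~\ref{thm:singleton-insdel-list}. For the inner term, fix $\rho\in(\tau,1]$ and set $r_n=\lfloor\rho n\rfloor-1$, adjusting by $O(1)$ so that $t_n\le r_n+1\le n$. The endpoint $\rho=\tau$ can be handled separately, since there $\Psi=0$ and the bound is $\max\{1-\tau,\cdot\}\ge$ the Singleton term, or by continuity.

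Second, estimate the two summands of \eqref{eq:levenshtein-insdel-list-finite}:
\begin{itemize}
\item The low-run term $q\sum_{i<r_n}\binom{n-1}{i}(q-1)^i$ is the size of a Hamming ball of radius $\approx\rho n$. Its base-$q$ exponent is $H_q(\rho)$ when $\rho\le(q-1)/q$, and at most $1$ otherwise. This gives $\overline H_q(\rho)$.
\item For the high-run term, we need $\frac1n\log_2\sum_{i\le t_n}\binom{r_n+1-t_n}{i}\to\Psi(\rho,\tau)$. With $m=r_n+1-t_n\approx(\rho-\tau)n$ and summation limit $t_n\approx\tau n$, the ratio $\tau/(\rho-\tau)$ is compared with $1/2$.
\begin{itemize}
\item If $\tau\le(\rho-\tau)/2$, i.e.\ $\rho\ge3\tau$, the sum is dominated by its last term. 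Its exponent is $(\rho-\tau)h_2(\tau/(\rho-\tau))$.
\item Otherwise the sum contains the central binomial coefficient and is at most $2^m$, so its exponent is $\rho-\tau$.
\end{itemize}
Both cases follow from the standard estimate $\sum_{i\le\alpha m}\binom mi=2^{m h_2(\min\{\alpha,1/2\})+o(m)}$. If $m\le t_n$ the sum is exactly $2^m$, which is consistent with the second case. So the high-run term has base-$q$ exponent $1-\tau-\Psi(\rho,\tau)\log_q2$.
\end{itemize}

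Third, the logarithm of a sum of two terms is the maximum of the exponents up to $o(1)$. This gives $\limsup\frac1n\log_q|\cC_n|\le\max\{1-\tau-\Psi\log_q2,\overline H_q(\rho)\}$ for each fixed $\rho$. Since the left side does not depend on $\rho$, we may take the infimum over $\rho\in[\tau,1]$ and then the minimum with the Singleton bound.

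The main obstacle is bookkeeping rather than ideas. The integer effects ($t_n$ versus $\tau n$, and $r_n$ versus $\rho n$, including the $\pm1$ shifts) must be absorbed uniformly into $o(1)$ via continuity of $h_2$ and $\Psi$. Note that $\Psi$ is continuous at $\rho=3\tau$, where both branches equal $2\tau$. The boundary cases $\rho=\tau$ and $\rho=1$ also need checking against the constraint $t\le r+1\le n$.
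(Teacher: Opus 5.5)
Your proposal is correct and takes essentially the same route as the paper. You apply Theorem~\ref{thm:levenshtein-insdel-list} with $t=t_n$ and $r=\rho n+O(1)$. You bound the low-run term by $\overline H_q(\rho)$ and the high-run term by $1-\tau-\Psi(\rho,\tau)\log_q2$, splitting at $t/N=1/2$, i.e.\ at $\rho=3\tau$. You then take the larger exponent, minimize over $\rho$, and combine with the Singleton-type bound. Handling the endpoint $\rho=\tau$ separately, by noting that the inner maximum is at least $1-\tau$ there, is a harmless variant of the paper's uniform treatment with $N=o(n)$.
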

\begin{proof}
Take $t=t_n=\tau n+O(1)$ from \eqref{eq:effective-integer-budget} and choose $r=\rho n+O(1)$ with $t\le r+1\le n$ in
Theorem~\ref{thm:levenshtein-insdel-list}.  The second summand in
\eqref{eq:levenshtein-insdel-list-finite} has $q$-ary exponential rate
$\overline H_q(\rho)+o(1)$.  For the denominator in the first summand, put
$N=r+1-t=(\rho-\tau)n+o(n)$.  If $\rho\ge3\tau$, then
$t/N\le1/2+o(1)$ and
\[
\sum_{i=0}^{t}\binom{N}{i}
=2^{n((\rho-\tau)h_2(\tau/(\rho-\tau))+o(1))}.
\]
If $\tau\le\rho\le3\tau$, then the same sum has exponent
$N+o(n)$, namely
\[
\sum_{i=0}^{t}\binom{N}{i}=2^{n(\rho-\tau+o(1))}.
\]
Thus the first summand has $q$-ary exponential rate
$1-\tau-\Psi(\rho,\tau)\log_q2+o(1)$.  Taking the larger of the two
exponents and then minimizing over $\rho$ proves
the variational expression in \eqref{eq:singleton-levenshtein-combined-function}; combining it with
Corollary~\ref{cor:singleton-insdel-asymptotic} gives
\eqref{eq:fixed-q-asymptotic-combined}.
\end{proof}
The two expressions inside the maximum are respectively nonincreasing
and nondecreasing in $\rho$. When a crossing exists, the corresponding
one-dimensional minimization can be performed using
$\overline H_q(\rho)=1-\tau-\Psi(\rho,\tau)\log_q2$.
The term ``Levenshtein-type'' here refers to a \emph{converse}; it is
distinct from the achievable-rate curve improved in
Theorem~\ref{thm:unique-main}.

\subsection{Higher-order Elias-type Bounds}\label{sec:elias}

We first recall Yasunaga's unique-decoding converse, which provides the
starting point for the higher-order argument below. After establishing
our bound, we compare it with the fixed-alphabet HSS converses.

\paragraph{Yasunaga's finite-length and asymptotic bounds.}
\label{par:yasunaga}
For equal-length words, the Levenshtein distance is the total number of
insertions and deletions needed to transform one word into the other.
It equals twice their deletion distance.
\begin{theorem}[Yasunaga's Elias-type bound]
\label{thm:yasunaga-elias}
\cite[Theorem~2 and Corollary~2]{Yasunaga2024ImprovedBounds}
Let $\cC\subseteq\Sigma_q^n$ have minimum Levenshtein distance $d$ with
$0<d<2n$. For every integer $t\ge0$ satisfying
\begin{equation}\label{eq:yasunaga-t-condition}
t<\frac{nd}{2n-d},
\end{equation}
we have
\begin{equation}\label{eq:yasunaga-finite}
|\cC|\le\left\lfloor
\frac{(n+t)d}{(n+t)d-2nt}\,
\frac{q^{n+t}}{I_q(n,t)}\right\rfloor.
\end{equation}
If a family has
$\liminf_n d_{\min}(\cC_n)/(2n)\ge\Delta$, then
\begin{equation}\label{eq:yasunaga-asymptotic}
\limsup_n\frac1n\log_q|\cC_n|
\le\frac{1-H_q(\Delta)}{1-\Delta},
\qquad 0\le\Delta<(q-1)/q.
\end{equation}
\end{theorem}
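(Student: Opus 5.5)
The plan is Yasunaga's ambient-word averaging: find one word of length $n+t$ that contains many codewords, and then bound how many codewords such a word can contain by a second-moment count on their embedding position sets.

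\emph{Step 1: finding a crowded ambient word.} Count the pairs $(\vc,\vw)$ with $\vc\in\cC$ and $\vw\in\cI_t(\vc)\subseteq\Sigma_q^{n+t}$. By \eqref{eq:insertion-sphere-Iq} there are exactly $|\cC|\,I_q(n,t)$ such pairs. Hence some $\vw\in\Sigma_q^{n+t}$ is a supersequence of at least
\[
M\ge |\cC|\,I_q(n,t)/q^{n+t}
\]
codewords. It therefore suffices to prove
\[
M\le\frac{(n+t)d}{(n+t)d-2nt}
\]
for any $M$ distinct codewords contained in a common word of length $n+t$. The floor in \eqref{eq:yasunaga-finite} follows because $|\cC|$ is an integer.

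\emph{Step 2: intersections of embeddings.} Fix one embedding of each of the $M$ codewords in $\vw$, and let $P_1,\dots,P_M\subseteq[n+t]$ be their position sets, each of size $n$. For $i\ne j$, the positions in $P_i\cap P_j$ spell a common subsequence of $\vc_i$ and $\vc_j$. Minimum Levenshtein distance $d$ means every common subsequence has length at most $n-d/2$, so
\[
|P_i\cap P_j|\le n-d/2 .
\]
Let $m_k$ be the number of sets $P_i$ containing position $k$. Then $\sum_k m_k=Mn$, and
\[
\sum_{i<j}|P_i\cap P_j|=\sum_k\binom{m_k}{2}.
\]
By convexity this is at least $(n+t)\binom{Mn/(n+t)}{2}$. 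Comparing with the upper bound $\binom M2(n-d/2)$ and dividing by $M/2$ gives
\[
\frac{Mn^2}{n+t}-n\le (M-1)(n-d/2).
\]
This rearranges to
\[
M\bigl[(n+t)d/2-nt\bigr]\le (n+t)d/2 .
\]
The bracket is positive exactly when $nd>t(2n-d)$, which is condition \eqref{eq:yasunaga-t-condition}. This yields the claimed bound on $M$ and hence \eqref{eq:yasunaga-finite}.

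\emph{Step 3: asymptotics.} Given $\Delta<(q-1)/q$, pass to $\Delta'<\Delta$ so that $d\ge2\Delta' n$ eventually. Choose $t=\lfloor\tau n\rfloor$ with $\tau<\Delta'/(1-\Delta')$. This choice is made because
\[
\frac{nd}{2n-d}\ge\frac{\Delta'}{1-\Delta'}\,n ,
\]
so condition \eqref{eq:yasunaga-t-condition} holds for all large $n$. The same margin keeps the prefactor $(n+t)d/((n+t)d-2nt)$ bounded by a constant.

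Since $\tau/(1+\tau)<\Delta'<(q-1)/q$, the sum $I_q(n,t)$ is dominated by its top term and equals $q^{n(1+\tau)H_q(\tau/(1+\tau))+o(n)}$. The rate bound is therefore
\[
(1+\tau)\bigl(1-H_q(\tau/(1+\tau))\bigr).
\]
Let $\tau\uparrow\Delta'/(1-\Delta')$, so that $\tau/(1+\tau)\to\Delta'$ and $1+\tau\to1/(1-\Delta')$. By continuity the bound tends to $(1-H_q(\Delta'))/(1-\Delta')$. Letting $\Delta'\uparrow\Delta$ gives \eqref{eq:yasunaga-asymptotic}.

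The main obstacle is Step 2: identifying that the correct quantity to average is the pairwise intersection of embedding sets, and that convexity in the coverage counts $m_k$ produces exactly the denominator $(n+t)d-2nt$. The remaining steps are routine entropy estimates, in which the only care needed is keeping $t$ strictly inside the admissible range.
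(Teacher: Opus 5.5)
Your proof is correct, and it reproduces the prefactor in \eqref{eq:yasunaga-finite} exactly. Step~2 is precisely the Johnson-bound second-moment argument for the constant-weight family of embedding sets $P_1,\ldots,P_M\subseteq[n+t]$. Their pairwise symmetric differences are at least $d$, and the Johnson bound for weight $n$ and length $n+t$ is $\frac{(n+t)d}{(n+t)d-2nt}$.

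The paper does not prove this theorem; it quotes it. Its only internal route to the result is the case $L=1$ of Theorem~\ref{thm:higher-order-elias} (Remark~\ref{rem:yasunaga-recovery}). That argument uses the same ambient-word double counting. However, instead of your exact identity $\sum_{i<j}|P_i\cap P_j|=\sum_k\binom{m_k}{2}$, it invokes Lemma~\ref{lem:higher-order-intersection}. That lemma samples indices with replacement and bounds the collision contribution by $X\le N$, although on a collision one actually has $X=|P_i|=n$. At $k=2$ it therefore bounds the multiplicity of an ambient word only by roughly $(n+t)^2/((n+t)d/2-nt)$, which is a factor $2(n+t)/d$ worse than your bound.

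This loss is $O(1)$ when $d=\Theta(n)$, so it is invisible in the rate. The paper then optimizes $G_q(p)$ with $p=\tau/(1+\tau)$, exactly as in your Step~3. The consequence is that the paper recovers only the asymptotic statement \eqref{eq:yasunaga-asymptotic}, not the finite-length bound. What the cruder lemma buys is uniformity: it applies verbatim to $(L+1)$-wise intersections and, through complements, to unions, which the list-decoding converses need.

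Two trivial edge cases deserve a sentence in your write-up. For $\Delta=0$ the asymptotic bound is $1$. If $d_{\min}(\cC_n)=2n$ or $|\cC_n|\le1$, you can run Step~2 with $d$ replaced by $\lceil 2\Delta' n\rceil$, since that step uses only $|P_i\cap P_j|\le n-d/2$.
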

Both the finite-length estimate and its asymptotic consequence are due
to Yasunaga~\cite[Theorem~2 and Corollary~2]{Yasunaga2024ImprovedBounds}.

\paragraph{Higher-order averaging.}
Let
\begin{equation}\label{eq:Gq-definition}
G_q(p)=\frac{1-\overline H_q(p)}{1-p},\qquad 0\le p<1,
\qquad
S_L(p)=p+p^2+\cdots+p^L.
\end{equation}
The function $G_q$ is zero for $p\ge(q-1)/q$ and is strictly decreasing
on $(0,(q-1)/q)$, since
\begin{equation}\label{eq:Gq-derivative}
G_q'(p)=\frac{\log_q(qp/(q-1))}{(1-p)^2}<0.
\end{equation}
Our argument follows the ambient-word averaging step of Yasunaga's
Elias-type bound~\cite{Yasunaga2024ImprovedBounds}, but replaces its
pairwise multiplicity estimate with the following higher-order lemma.

\begin{lemma}[Higher-order intersection averaging]
\label{lem:higher-order-intersection}
Let $M\ge k\ge2$ and let $A_1,\ldots,A_M\subseteq[N]$ satisfy
$|A_i|\ge\alpha N$ for $0\le\alpha\le1$. There are $k$ distinct indices
$i_1,\ldots,i_k$ such that
\begin{equation}\label{eq:higher-order-intersection}
\left|\bigcap_{j=1}^k A_{i_j}\right|
\ge\left(\alpha^k-\frac{\binom k2}{M}\right)N.
\end{equation}
\end{lemma}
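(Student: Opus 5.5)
We will count ordered $k$-tuples of indices and average via convexity, comparing tuples with distinct indices to all tuples. For each position $p\in[N]$, let $d_p=|\{i:p\in A_i\}|$ be its degree. Then
\[
\sum_{p=1}^N d_p=\sum_{i=1}^M|A_i|\ge\alpha MN .
\]
Consider the sum, over all ordered $k$-tuples $(i_1,\ldots,i_k)\in[M]^k$ (repetitions allowed), of $\bigl|\bigcap_j A_{i_j}\bigr|$. Exchanging the order of summation, this sum equals $\sum_p d_p^k$. By the power-mean (Jensen) inequality applied to $t\mapsto t^k$,
\[
\sum_p d_p^k\ge N\Bigl(\frac1N\sum_p d_p\Bigr)^k\ge N(\alpha M)^k .
\]

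Next we remove the tuples with a repeated index. The number of ordered tuples in which some pair of coordinates coincides is at most $\binom k2 M^{k-1}$: choose the coinciding pair of coordinates, then the remaining $k-1$ free values. For any such tuple the intersection has size at most $N$. Hence the sum of $\bigl|\bigcap_j A_{i_j}\bigr|$ over tuples with pairwise distinct indices is at least
\[
\alpha^kM^kN-\binom k2M^{k-1}N .
\]
The number of distinct-index ordered tuples is $M(M-1)\cdots(M-k+1)\le M^k$, and it is positive because $M\ge k$. So the average over distinct tuples is at least
\[
\frac{\alpha^kM^kN-\binom k2M^{k-1}N}{M(M-1)\cdots(M-k+1)} .
\]

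To finish, we lower-bound this average. If the numerator is nonnegative, then dividing by a denominator at most $M^k$ gives at least $\bigl(\alpha^k-\binom k2/M\bigr)N$. If the numerator is negative, then the right-hand side of \eqref{eq:higher-order-intersection} is negative and any distinct $k$ indices work trivially. Since some distinct tuple attains at least the average, that tuple gives indices $i_1,\ldots,i_k$ as claimed.

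The only delicate point is the correct use of the direction of the denominator bound, which is why the sign case split is included. The rest is routine double counting.
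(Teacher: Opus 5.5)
Your proof is correct and is essentially the paper's argument written as a count rather than an expectation. The paper draws $J_1,\ldots,J_k$ independently and uniformly from $[M]$ and bounds $\E\bigl|\bigcap_j A_{J_j}\bigr|\ge\alpha^kN$ by Jensen. It then discards the non-distinct event, whose probability is at most $\binom k2/M$, using $X\le N$, and conditions on distinctness; the case $\alpha^k\le\binom k2/M$ is dismissed as trivial, exactly as in your sign split.
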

\begin{proof} We assume that 
$\alpha^k>\frac{\binom{k}{2}}{M}$, since otherwise the desired conclusion follows immediately. For each $u\in\Int{N}$, let
\[
d_u\eqdef |\{i\in\Int{M}:u\in A_i\}|.
\]
Choose $J_1,\ldots,J_k$ independently and uniformly from $\Int{M}$, and
put
\[
X\eqdef\left|\bigcap_{j=1}^{k}A_{J_j}\right|.
\]
For a fixed $u$, the probability that $u$ belongs to all the selected sets
is $(d_u/M)^k$.  Hence
\begin{align}
\mathbb E[X]
&=\sum_{u=1}^{N}\left(\frac{d_u}{M}\right)^k\nonumber\\
&\geq
N\left(\frac{1}{N}\sum_{u=1}^{N}\frac{d_u}{M}\right)^k
\label{eq:intersection-jensen}\\
&=N\left(\frac{1}{MN}\sum_{i=1}^{M}|A_i|\right)^k
\geq \alpha^kN,\nonumber
\end{align}
where \eqref{eq:intersection-jensen} follows from the convexity of
$x\mapsto x^k$.

Let $\mathcal D$ be the event that $J_1,\ldots,J_k$ are pairwise distinct.
By the union bound,
\[
\Pr(\mathcal D^c)
\leq \frac{\binom{k}{2}}{M}.
\]
Since $\alpha^k>\frac{\binom{k}{2}}{M}$, we have $\Pr(\mathcal D)>0$.

Let $\mathbf 1_{\mathcal D}$ denote the indicator of the event
$\mathcal D$: it equals $1$ when $\mathcal D$ occurs and $0$ otherwise.
Thus, $\mathbb E[X\mathbf 1_{\mathcal D}]$ is the contribution to
$\mathbb E[X]$ from outcomes in $\mathcal D$.
Since $0\leq X\leq N$, 
\[X\mathbf 1_{\mathcal D^c}\leq N \mathbf 1_{\mathcal D^c}.
\]
It follows that 
\[\mathbb E[X\mathbf 1_{\mathcal D^c} ]\leq  N  \mathbb E[\mathbf 1_{\mathcal D^c}]=N\Pr[\mathcal D^c]\leq \frac{\binom{k}{2}}{M}N.\]
Hence, 
\begin{align*}
\mathbb E[X\mathbf 1_{\mathcal D}]
&=\mathbb E[X]-\mathbb E[X\mathbf 1_{\mathcal D^c}]\\
&\geq
\left(\alpha^k-\frac{\binom{k}{2}}{M}\right)N.
\end{align*}
Moreover, since $\Pr(\mathcal D)>0$, the definition of conditional
expectation gives
\[
\mathbb E[X\mathbf 1_{\mathcal D}]
=
\Pr(\mathcal D)\mathbb E[X\mid\mathcal D].
\]
It follows that
\[
\mathbb E[X\mid\mathcal D]
=\frac{\mathbb E[X\mathbf 1_{\mathcal D}]}{\Pr(\mathcal D)}
\geq
\left(\alpha^k-\frac{\binom{k}{2}}{M}\right)N.
\]
Therefore some ordered $k$-tuple of distinct indices satisfies
\eqref{eq:higher-order-intersection}.
\end{proof}

\begin{theorem}[Higher-order Elias-type upper bound]
\label{thm:higher-order-elias}
Let $\cC_n\subseteq\Sigma_q^n$ be $(\gamma,\delta,L)$-insdel
list-decodable, and put $R_n=n^{-1}\log_q|\cC_n|$ and
$\tau=\delta+\gamma/L$. If $\tau\ge1$, then $\limsup_nR_n=0$.
If $\tau<1$, define
\begin{equation}\label{eq:pdel-definition}
p_{\mathrm{del}}=1-(1-\tau)^{1/L},
\qquad
p_{\mathrm{ins}}=
\sup\{p\in[0,(q-1)/q]:S_L(p)\le\gamma\}.
\end{equation}
Then
\begin{equation}\label{eq:higher-order-elias-rate}
\limsup_{n\to\infty}R_n
\le\min\{G_q(p_{\mathrm{del}}),G_q(p_{\mathrm{ins}})\}.
\end{equation}
The deletion term can be written explicitly as
\[
G_q(p_{\mathrm{del}})
=\frac{1-\overline H_q\!\left(1-(1-\tau)^{1/L}\right)}{(1-\tau)^{1/L}}.
\]
For $0\le\gamma<S_L((q-1)/q)$, let $p_L(\gamma)$ be the unique root
in $[0,(q-1)/q)$ of $p+p^2+\cdots+p^L=\gamma$. The pure-insertion
specialization therefore gives $R\le G_q(p_L(\gamma))$.
The insertion term vanishes when
\begin{equation}\label{eq:insertion-zero-threshold}
\gamma\ge S_L\left(\frac{q-1}{q}\right)
=(q-1)\left[1-\left(\frac{q-1}{q}\right)^L\right].
\end{equation}
\end{theorem}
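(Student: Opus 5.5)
The plan is to run Yasunaga's ambient-word averaging at supersequence length $N=n+t$. We then feed the resulting family of embedding position sets into Lemma~\ref{lem:higher-order-intersection} with $k=L+1$. The intersection of $L+1$ embeddings yields the deletion term, and the union of $L+1$ embeddings yields the insertion term. The case $\tau\ge1$ is immediate from Corollary~\ref{cor:singleton-insdel-asymptotic}.

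\emph{Averaging step (common to both terms).} Fix $p\in(0,(q-1)/q)$ and set $t=\lfloor pn/(1-p)\rfloor$, so that $N=n+t$ and $t/N\to p$. By \eqref{eq:insertion-sphere-Iq}, each codeword has exactly $I_q(n,t)$ supersequences of length $N$. Double counting pairs (codeword, supersequence) gives some $\vw\in\Sigma_q^N$ containing
\[
M\ge|\cC_n|\,I_q(n,t)/q^{N}
\]
codewords as subsequences. For each such codeword we fix one embedding, and let $A_i\subseteq[N]$ be its image. Then $|A_i|=n=\alpha N$ with $\alpha=n/N\to1-p$, and distinct codewords give distinct indices $i$. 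We will show that, for a suitable $p$, $M$ must be bounded by a constant $C=C(p,L)$. This gives
\[
|\cC_n|\le C\,q^N/I_q(n,t).
\]
Since $t/N\le(q-1)/q$, we have $I_q(n,t)=q^{N(H_q(t/N)+o(1))}$. The rate is therefore at most
\[
\frac{1-H_q(p)}{1-p}+o(1)=G_q(p)+o(1).
\]
Letting $p$ increase to the relevant threshold and using continuity of $G_q$ then finishes the proof. At $p=(q-1)/q$ we have $G_q=0$, which also covers the case $p_{\mathrm{ins}}=(q-1)/q$ by a limiting argument.

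\emph{Deletion term.} Take $p<p_{\mathrm{del}}$, so that $(1-p)^L>1-\tau$. Suppose $M$ is large. Lemma~\ref{lem:higher-order-intersection} with $k=L+1$ gives $L+1$ distinct codewords whose embeddings share at least
\[
\bigl((1-p)^{L+1}-o(1)-\tbinom{L+1}{2}/M\bigr)N
\]
positions. Reading $\vw$ on these common positions gives a common subsequence of all $L+1$ codewords, of length about $(1-p)^Ln$. Once $M$ exceeds a constant, this length is at least $n-t_n$, where $t_n$ is the budget in \eqref{eq:effective-integer-budget}. But Proposition~\ref{prop:ins2del} says the code is $(0,t_n;L)$-list-decodable, which is a contradiction.

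\emph{Insertion term.} Take $p$ with $S_L(p)<\gamma$; this is possible for every $p<p_{\mathrm{ins}}$. We apply the lemma instead to the complements $B_i=[N]\setminus A_i$, which satisfy $|B_i|=t\approx pN$. This yields $L+1$ codewords with
\[
\Bigl|\bigcap_j B_{i_j}\Bigr|\ge\bigl(p^{L+1}-o(1)-\tbinom{L+1}{2}/M\bigr)N.
\]
Equivalently,
\[
\Bigl|\bigcup_j A_{i_j}\Bigr|\le(1-p^{L+1}+\epsilon)N.
\]
Restricting $\vw$ to this union gives a common supersequence of length at most
\[
\frac{n(1-p^{L+1})}{1-p}+\epsilon' n=n\bigl(1+S_L(p)\bigr)+\epsilon' n.
\]
This is below $n+\lfloor\gamma n\rfloor$ for small slack, and we pad it to exact length $n+\lfloor\gamma n\rfloor$ if needed. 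This contradicts $(\lfloor\gamma n\rfloor,0;L)$-list decodability, which holds because the allowed error set only shrinks. The explicit formula for $G_q(p_{\mathrm{del}})$ is direct substitution. The threshold \eqref{eq:insertion-zero-threshold} is the geometric sum
\[
S_L\Bigl(\tfrac{q-1}{q}\Bigr)=(q-1)\Bigl[1-\Bigl(\tfrac{q-1}{q}\Bigr)^L\Bigr].
\]
For $\gamma$ at or above this threshold, $p_{\mathrm{ins}}=(q-1)/q$ and $G_q$ vanishes. Uniqueness of $p_L(\gamma)$ follows from strict monotonicity of $S_L$.

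\emph{Main obstacle.} The delicate part is the bookkeeping of slacks. We must choose the strict inequalities $(1-p)^L>1-\tau$ and $S_L(p)<\gamma$ with a fixed margin $\epsilon$. That margin has to absorb the floors in $t$ and $t_n$, the difference between $\alpha$ and $1-p$, and the term $\binom{L+1}{2}/M$. This is what forces $M\le C$ for a constant independent of $n$, so that the factor $C$ is invisible in the rate.
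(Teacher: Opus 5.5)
Your proposal is correct and follows essentially the same route as the paper's proof. Both use Yasunaga's ambient-word averaging at length $N\approx n/(1-p)$ and apply the higher-order intersection lemma with $k=L+1$, to the embedding sets for the deletion term (via the $(0,t_n;L)$ reduction) and to their complements for the insertion term, with a constant $M$ fixed from the strict gaps $(1-p)^L>1-\tau$ and $S_L(p)<\gamma$; the remaining claims are handled, as in the paper, by monotonicity of $S_L$ and the vanishing of $G_q$ at and beyond $(q-1)/q$.
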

\begin{proof}
Suppose first that $\tau\ge1$. By
Corollary~\ref{cor:singleton-insdel-asymptotic}, the transferred integer
budget satisfies $n-t_n=O(1)$, and $|\cC_n|\le Lq^{n-t_n}$.
Thus $\limsup_n R_n=0$, including the rounding-sensitive case $\tau=1$.

Assume henceforth that $\tau<1$.  Fix
$0\leq p<(q-1)/q$, and choose integers $N_n\geq n$ such that
\[
\frac{n}{N_n}\longrightarrow1-p.
\]
Set $s_n\eqdef N_n-n$.  By \eqref{eq:insertion-sphere-Iq}, every fixed
$\vc\in\Sigma_q^n$ has exactly $I_q(n,s_n)$ supersequences of length
$N_n$, and the standard type estimate gives
\begin{equation}\label{eq:ambient-entropy}
I_q(n,s_n)
=q^{N_n(H_q(p)+o(1))}.
\end{equation}
Consequently, if every $\vy\in\Sigma_q^{N_n}$ contains at most a constant
$K$ codewords of $\cC_n$ as subsequences, then double counting the pairs
$(\vc,\vy)$ gives
\[
|\cC_n|I_q(n,s_n)\leq Kq^{N_n},
\]
and therefore
\begin{equation}\label{eq:ambient-averaging}
\limsup_{n\to\infty}R_n\leq G_q(p).
\end{equation}
We establish the required constant-multiplicity property in two ways.
In both arguments, write $\alpha_n=n/N_n$ and $c_L=\binom{L+1}{2}$.

\emph{Deletion-based estimate.}
If $p_{\mathrm{del}}=0$, the deletion term is the trivial upper bound one. Otherwise, assume that $p<p_{\mathrm{del}}$, or equivalently,
\begin{equation}\label{eq:deletion-strict-gap}
(1-p)^L>1-\tau.
\end{equation}
Let
\[
\alpha_n\eqdef\frac{n}{N_n},
\qquad
c_L\eqdef\binom{L+1}{2}.
\]
Because $\alpha_n\to1-p$ and $t_n/n\to\tau$, the strict inequality
\eqref{eq:deletion-strict-gap} allows us to choose a constant
$M\geq L+1$ and then take $n$ sufficiently large so that
\begin{equation}\label{eq:deletion-gap-with-M}
\alpha_n^L-\frac{c_L}{M\alpha_n}>1-\frac{t_n}{n}.
\end{equation}
Suppose that some $\vy\in\Sigma_q^{N_n}$ contains $M$ distinct codewords
$\vc_1,\ldots,\vc_M\in\cC_n$.  For each $i$, fix an embedding of
$\vc_i$ into $\vy$, and let $A_i\subseteq\Int{N_n}$ be its set of
embedding positions.  Thus $|A_i|=n=\alpha_nN_n$.  Applying
Lemma~\ref{lem:higher-order-intersection} with $k=L+1$ gives distinct
indices $i_1,\ldots,i_{L+1}$ such that
\begin{align*}
\left|\bigcap_{j=1}^{L+1}A_{i_j}\right|
&\geq
\left(\alpha_n^{L+1}-\frac{c_L}{M}\right)N_n\\
&=
\left(\alpha_n^L-\frac{c_L}{M\alpha_n}\right)n\\
&>n-t_n.
\end{align*}
The symbols of $\vy$ on this intersection form a common subsequence of
$\vc_{i_1},\ldots,\vc_{i_{L+1}}$.  Its length is greater than $n-t_n$,
contradicting $(0,t_n;L)$-list decodability supplied by the reduction.
Hence every ambient word contains at most
$M-1$ codewords.  By \eqref{eq:ambient-averaging},
$\limsup R_n\leq G_q(p)$ for every $p<p_{\mathrm{del}}$ with
$p<(q-1)/q$.  Letting $p$ approach
$\min\{p_{\mathrm{del}},(q-1)/q\}$ and using the definition of
$\overline H_q$ yields
\[
\limsup_{n\to\infty}R_n\leq G_q(p_{\mathrm{del}}).
\]

\emph{Insertion-based estimate.}
If $\gamma=0$, the insertion term is one. Otherwise, assume that $p<p_{\mathrm{ins}}$ and $S_L(p)<\gamma$.  Suppose that an
ambient word $\vy\in\Sigma_q^{N_n}$ contains $M$ distinct codewords, and
fix embedding sets $A_1,\ldots,A_M$ for them.  Set
\[
B_i\eqdef\Int{N_n}\setminus A_i,
\qquad
\beta_n\eqdef\frac{s_n}{N_n}=1-\alpha_n.
\]
Then $|B_i|=\beta_nN_n$ and $\beta_n\to p$.  Since
\begin{equation}\label{eq:insertion-strict-gap}
\frac{1-p^{L+1}}{1-p}
=1+S_L(p)<1+\gamma,
\end{equation}
we may first choose a constant $M\geq L+1$ and then take $n$ sufficiently
large so that
\begin{equation}\label{eq:insertion-gap-with-M}
\frac{1-\beta_n^{L+1}+c_L/M}{\alpha_n}<1+\gamma.
\end{equation}
Apply Lemma~\ref{lem:higher-order-intersection} to
$B_1,\ldots,B_M$.  For some distinct $i_1,\ldots,i_{L+1}$,
\begin{align*}
\left|\bigcup_{j=1}^{L+1}A_{i_j}\right|
&=N_n-\left|\bigcap_{j=1}^{L+1}B_{i_j}\right|\\
&\leq
\left(1-\beta_n^{L+1}+\frac{c_L}{M}\right)N_n\\
&<(1+\gamma)n.
\end{align*}
Restricting $\vy$ to this union produces a common supersequence of the
$L+1$ codewords.  Its integer length is at most
$n+\lfloor\gamma n\rfloor$, contradicting
$(\gamma,\delta,L)$-list decodability.  Thus every ambient word again
contains at most a constant number of codewords, and
\eqref{eq:ambient-averaging} holds.  Letting $p$ approach
$p_{\mathrm{ins}}$ from below gives
\[
\limsup_{n\to\infty}R_n\leq G_q(p_{\mathrm{ins}}).
\]
Combining the deletion-based and insertion-based estimates proves
\eqref{eq:higher-order-elias-rate}.  The remaining statements follow
from the strict monotonicity of $S_L$ and the fact that $G_q$ vanishes on
$[(q-1)/q,1)$.
\end{proof}

\begin{remark}[The case $L=1$ and zero-rate consequences]
\label{rem:yasunaga-recovery}
We next specialize Theorem~\ref{thm:higher-order-elias} to unique decoding.
Let $L=1$ and set $\tau=\delta+\gamma$.  If $\tau\geq1$, then
the asymptotic rate is zero.  If $\tau<1$, then
\[
p_{\mathrm{del}}=\tau,
\qquad
p_{\mathrm{ins}}
=
\min\left\{\gamma,\frac{q-1}{q}\right\}.
\]
Since $\tau\geq\gamma$ and $G_q$ is nonincreasing,
\[
\min\{G_q(p_{\mathrm{del}}),G_q(p_{\mathrm{ins}})\}
=G_q(\tau).
\]
Consequently,
\begin{equation}\label{eq:L1-yasunaga-recovery}
\limsup_{n\to\infty}R_n\leq G_q(\tau).
\end{equation}
In particular, for $0\leq\tau<(q-1)/q$,
\begin{equation}\label{eq:L1-yasunaga-nontrivial-range}
\limsup_{n\to\infty}R_n
\leq
\frac{1-H_q(\delta+\gamma)}{1-\delta-\gamma}.
\end{equation}
Thus, with $\Delta=\delta+\gamma$, the case $L=1$ of
Theorem~\ref{thm:higher-order-elias} recovers the asymptotic part of
Theorem~\ref{thm:yasunaga-elias}.
\end{remark}

The deletion term also gives zero when $\tau\ge1-q^{-L}$; other
deletion converses can be stronger. None of these necessary conditions
asserts achievability at the boundary.

\paragraph{The fixed-alphabet HSS converse bounds.}\label{par:hss}
The following statements allow polynomially bounded list sizes and
therefore also apply to any fixed list size.
\begin{theorem}[HSS insertion-only bound]\label{thm:HSS-insertion}
\cite[Theorem~2]{HaeuplerShahrasbiSudan2018SyncStringsList}
Fix $q\ge2$ and $0\le\gamma<q-1$. A family of $q$-ary codes that is
list-decodable from $\lfloor\gamma n\rfloor$ insertions with a
polynomially bounded list size has asymptotic rate at most
\begin{equation}\label{eq:HSS-insertion-function}
\Uins(\gamma)=1-\log_q(1+\gamma)
-\gamma\left(\log_q\frac{1+\gamma}{\gamma}
             -\log_q\frac q{q-1}\right).
\end{equation}
The value at zero is understood by continuity.
\end{theorem}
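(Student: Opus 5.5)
We prove this by double counting pairs (codeword, common supersequence), using the fact that every length-$n$ word has the same number of supersequences of a given length; this is the $p\to$ boundary case of the ambient averaging step \eqref{eq:ambient-averaging}. First we rewrite the target. Expanding $(1+\gamma)H_q\!\bigl(\gamma/(1+\gamma)\bigr)$ in base $q$ gives
\[
(1+\gamma)\log_q(1+\gamma)-\gamma\log_q\gamma+\gamma\log_q(q-1).
\]
Hence
\[
\Uins(\gamma)=(1+\gamma)\Bigl[1-H_q\!\Bigl(\tfrac{\gamma}{1+\gamma}\Bigr)\Bigr],
\]
which is an elementary algebraic identity. So it suffices to show
\[
\limsup_n n^{-1}\log_q|\cC_n|\le(1+\gamma)\bigl(1-H_q(\gamma/(1+\gamma))\bigr).
\]

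Set $d=\lfloor\gamma n\rfloor$ and $N=n+d$. By list decodability from $d$ insertions with list size $L_n\le n^{O(1)}$, every $\vy\in\Sigma_q^N$ contains at most $L_n$ codewords as subsequences. Indeed, any $L_n+1$ such codewords would share the witness $\vy$, which each of them reaches with exactly $d$ insertions. By \eqref{eq:insertion-sphere-Iq}, each codeword has exactly $I_q(n,d)$ supersequences in $\Sigma_q^N$. Double counting the incidences $(\vc,\vy)$ with $\vc\subseq\vy$ therefore gives
\[
|\cC_n|\,I_q(n,d)\le L_n\,q^{N}.
\]

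It remains to estimate $I_q(n,d)=\sum_{j=0}^{d}\binom{N}{j}(q-1)^j$. The hypothesis $\gamma<q-1$ is equivalent to $d/N\to\gamma/(1+\gamma)<(q-1)/q$. In this range the summands $\binom Nj(q-1)^j$ are increasing in $j$ up to $j=d$, since the ratio of consecutive terms is $(N-j)(q-1)/(j+1)>1$ there. So the sum lies between its last term and $(d+1)$ times its last term. The standard entropy estimate then gives $I_q(n,d)=q^{N(H_q(d/N)+o(1))}$. Taking $n^{-1}\log_q$ of the double-counting inequality, and using that $\log_q L_n=O(\log n)$ is negligible, yields
\[
\limsup_n R_n\le(1+\gamma)\bigl(1-H_q(\gamma/(1+\gamma))\bigr),
\]
which by the identity above is the claim.

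The only delicate points are the algebraic identification of \eqref{eq:HSS-insertion-function} with this entropy form, and the requirement $\gamma<q-1$, which places $d/N$ below $(q-1)/q$ so that the last summand dominates and the exponent is $H_q$ rather than $1$. At $\gamma=0$ both sides equal $1$, consistent with the convention of continuity.
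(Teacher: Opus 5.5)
Your proof is correct and is essentially the paper's own: the paper cites HSS for the theorem, but it records the identity $\Uins(\gamma)=G_q(\gamma/(1+\gamma))$ in \eqref{eq:HSS-insertion-as-Gq}. It then rederives the bound by the same double counting at length $N_n=n+\lfloor\gamma n\rfloor$, giving $|\cC_n|I_q(n,\lfloor\gamma n\rfloor)\le L_nq^{N_n}$ in \eqref{eq:growing-list-HSS-recovery}. Only the lower bound $I_q(n,d)\ge\binom{N}{d}(q-1)^d$ is actually needed, so your monotonicity argument for the summands is harmless but unnecessary.
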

Expanding $H_q$ gives the equivalent form
\begin{equation}\label{eq:HSS-insertion-as-Gq}
\Uins(\gamma)=G_q\left(\frac\gamma{1+\gamma}\right),
\qquad 0\le\gamma<q-1,
\end{equation}
where $G_q$ is defined in \eqref{eq:Gq-definition}.
For $q=2$ this is the function $C_{\mathrm{ins}}$ in \eqref{eq:Cins}.

\begin{theorem}[HSS deletion-only bound]\label{thm:HSS-deletion}
\cite[Theorem~3]{HaeuplerShahrasbiSudan2018SyncStringsList}
Fix $q\ge2$ and $0\le\delta<(q-1)/q$, and set
$f(x)=(1-x)(1+\log_q(1-x))$.
A family of $q$-ary codes that is list-decodable from
$\lfloor\delta n\rfloor$ deletions with a polynomially bounded list
size has asymptotic rate at most $f(\delta)$ when $\delta=d/q$ is a
grid point. More generally, if $\delta=d/q+\delta'$ with
$0\le\delta'<1/q$, then its rate is at most
\begin{equation}\label{eq:HSS-deletion-interpolation}
\Udel(\delta)=(1-q\delta')f(d/q)+q\delta'f((d+1)/q).
\end{equation}
Set $\Udel(\delta)=0$ for $\delta\ge(q-1)/q$.
\end{theorem}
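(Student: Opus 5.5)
The plan is to prove the HSS deletion converse (Theorem~\ref{thm:HSS-deletion}) with a direct adversarial counting argument. Concretely, I would build a deterministic map $\Pi$ from codewords to received words of length exactly $n-\lfloor\delta n\rfloor$, each obtainable by at most $\lfloor\delta n\rfloor$ deletions, whose image is exponentially small. Since each received word is consistent with at most $L_n=n^{O(1)}$ codewords, $|\cC_n|\le L_n|\image{\Pi}|$. The rate bound then follows from the normalized logarithm of $|\image{\Pi}|$, because the polynomial list size contributes only $o(1)$.

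\emph{Grid points.} Let $\delta=d/q$ with $0\le d\le q-1$.
\begin{itemize}
\item For a codeword $\vc$, choose the $d$ symbols of $\Sigma_q$ occurring least often in $\vc$, breaking ties by a fixed rule. Together they occupy at most a $d/q$ fraction of the positions, so at most $\lfloor dn/q\rfloor=\lfloor\delta n\rfloor$ positions, since the count is an integer.
\item Delete all their occurrences. Then delete further symbols, for instance from the end, until the length is exactly $m=n-\lfloor\delta n\rfloor$.
\item The result is a word of length $m$ over the remaining $q-d$ symbols. Together with the chosen set of symbols, it lies in a set of size at most $\binom qd(q-d)^{m}$.
\item Hence $|\cC_n|\le L_n\binom qd(q-d)^{(1-\delta)n+O(1)}$, and its $q$-ary rate is at most $(1-\delta)\log_q(q-d)=(1-\delta)\bigl(1+\log_q(1-\delta)\bigr)=f(\delta)$.
\end{itemize}

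\emph{Interpolation.} For $\delta=d/q+\delta'$, I would time-share by splitting every codeword into a prefix of length $n_1\approx(1-q\delta')n$ and a suffix of length $n_2\approx q\delta' n$.
\begin{itemize}
\item On the prefix, apply the $d$-symbol strategy, normalized to exactly $\lfloor d n_1/q\rfloor$ deletions.
\item On the suffix, apply the $(d+1)$-symbol strategy with exactly $\lfloor (d+1)n_2/q\rfloor$ deletions.
\item The total number of deletions is $\approx n_1d/q+n_2(d+1)/q=\delta n$. Any integer shortfall is made up by extra deletions at the end of the suffix, adjusting $n_1$ by $O(1)$ so the budget is $\lfloor\delta n\rfloor$.
\item Because both pieces have fixed lengths, the received word is the concatenation of the two normalized pieces. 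Its image is contained in a product set of size at most $\binom qd\binom q{d+1}(q-d)^{n_1(1-d/q)}(q-d-1)^{n_2(1-(d+1)/q)}$, up to $q^{O(1)}$.
\item Taking logarithms gives $(1-q\delta')f(d/q)+q\delta' f((d+1)/q)+o(1)$, which is \eqref{eq:HSS-deletion-interpolation}. For $\delta\ge(q-1)/q$, deleting all but the most frequent symbol leaves a word $a^{m}$ with at most $q$ choices, so the rate is zero.
\end{itemize}

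\emph{Main obstacle.} The only delicate step is bookkeeping: making every received word have exactly the prescribed length, so that it is determined by its two pieces, while the total deletion count never exceeds $\lfloor\delta n\rfloor$. Fixed-length normalization of each segment, with truncation performed at segment ends, handles this. The floors only change lengths by $O(1)$, so the count changes by a factor of at most $q^{O(1)}$ and the exponent is unaffected. I would also note that since the map is valid for every $n$, the bound holds for the $\limsup$ of the rates.
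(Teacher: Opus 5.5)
Your proposal is correct and follows the same route as the HSS argument that the paper cites and summarizes after the statement. At a grid point $\delta=d/q$, the adversary deletes the $d$ least frequent symbols. For $\delta=d/q+\delta'$, it partitions the coordinates into blocks of relative lengths $1-q\delta'$ and $q\delta'$ and applies the $d$- and $(d+1)$-symbol strategies to the respective blocks, so the image has the interpolated exponent while the polynomial list size costs only $o(1)$. The padding deletions you use to reach exactly $\lfloor\delta n\rfloor$ are harmless but unnecessary, since the definition of list decodability covers received words obtained by at most that many deletions.
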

The HSS argument deletes the least frequent symbols. At nongrid
fractions, it partitions the coordinates and interpolates between
adjacent deletion strategies. Proposition~\ref{prop:ins2del} transfers
this converse to $\tau=\delta+\gamma/L$: for fixed $0<\tau<1$
and $0<\varepsilon<\tau$, the integer budget $t_n$ is at least
$\lfloor(\tau-\varepsilon)n\rfloor$ for large $n$. Apply the bound
there and let $\varepsilon\downarrow0$, using the continuity of the
piecewise-linear HSS function. At $\tau=0$ the claim is trivial, and at
$\tau\ge1$ the Singleton-type bound gives rate zero.
For binary codes, the expression simplifies to
\begin{equation}\label{eq:binary-HSS-deletion}
\mathsf U_{2,\mathrm{del}}^{\mathrm{HSS}}(x)=\max\{1-2x,0\}.
\end{equation}

\paragraph{Analytic comparison with the HSS insertion converse.}
We now compare Theorem~\ref{thm:higher-order-elias} with the HSS
insertion-only converse in Theorem~\ref{thm:HSS-insertion}.
\begin{proposition}[Strict fixed-list improvement of the HSS insertion converse]
\label{prop:hoe-improves-hss-insertion}
For $q\ge2$, a finite $L\ge1$, and $0<\gamma<q-1$,
\begin{equation}\label{eq:hoe-strictly-improves-hss}
G_q(p_{\mathrm{ins}})<\Uins(\gamma).
\end{equation}
\end{proposition}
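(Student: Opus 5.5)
We compare the arguments of $G_q$ on the two sides. By \eqref{eq:HSS-insertion-as-Gq}, $\Uins(\gamma)=G_q(\gamma/(1+\gamma))$. Since $G_q$ is strictly decreasing on $[0,(q-1)/q]$ (by \eqref{eq:Gq-derivative}) and vanishes on $[(q-1)/q,1)$, it suffices to prove two things: first, $p_{\mathrm{ins}}>\gamma/(1+\gamma)$; second, $\gamma/(1+\gamma)<(q-1)/q$, so that $G_q$ is still strictly decreasing at the smaller argument. The second claim follows from $\gamma<q-1$, because $x\mapsto x/(1+x)$ is increasing and equals $(q-1)/q$ at $x=q-1$. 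Consequently $\Uins(\gamma)>0$.

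\textbf{Key inequality.} Set $p_0=\gamma/(1+\gamma)\in(0,(q-1)/q)$. We have
\[
S_L(p_0)=\sum_{k=1}^L p_0^k<\sum_{k\ge1}p_0^k=\frac{p_0}{1-p_0}=\gamma ,
\]
where the inequality is strict because $L$ is finite and $p_0>0$. Since $S_L$ is continuous and strictly increasing, $S_L(p)\le\gamma$ still holds for all $p$ slightly above $p_0$. Because $p_0<(q-1)/q$, such $p$ can be chosen inside $[0,(q-1)/q]$. It follows that $p_{\mathrm{ins}}>p_0$.

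\textbf{Conclusion.} There are two cases.
\begin{itemize}
\item If $p_{\mathrm{ins}}<(q-1)/q$, strict monotonicity of $G_q$ on $[0,(q-1)/q)$ gives $G_q(p_{\mathrm{ins}})<G_q(p_0)$.
\item If $p_{\mathrm{ins}}=(q-1)/q$, then $G_q(p_{\mathrm{ins}})=0<G_q(p_0)$, using $\overline H_q<1$ strictly below $(q-1)/q$.
\end{itemize}
In both cases $G_q(p_{\mathrm{ins}})<G_q(p_0)=\Uins(\gamma)$.

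The only subtle point is the boundary of $G_q$'s strict monotonicity: we need $p_0$ strictly below $(q-1)/q$, and this is exactly where the hypothesis $\gamma<q-1$ enters. Everything else is elementary.
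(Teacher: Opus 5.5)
Your proof is correct and follows essentially the same route as the paper: you identify $\Uins(\gamma)=G_q(\gamma/(1+\gamma))$, use $S_L(a)<a/(1-a)=\gamma$ to get $p_{\mathrm{ins}}>a$, and conclude from the strict decrease of $G_q$ on $(0,(q-1)/q)$ together with $G_q((q-1)/q)=0$. Your explicit case split at $p_{\mathrm{ins}}=(q-1)/q$ and your continuity argument for $p_{\mathrm{ins}}>a$ spell out details that the paper compresses into one sentence each.
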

\begin{proof}
Put
\[
a\eqdef\frac{\gamma}{1+\gamma}.
\]
A direct calculation shows that
\begin{equation}\label{eq:HSS-Gq-in-proof}
\mathsf U_{q,\mathrm{ins}}^{\mathrm{HSS}}(\gamma)=G_q(a).
\end{equation}
Since $0<a<(q-1)/q$ and
\[
S_L(a)
=a+a^2+\cdots+a^L
<\frac{a}{1-a}
=\gamma,
\]
the definition of $p_{\mathrm{ins}}$ gives $p_{\mathrm{ins}}>a$.
Equation~\eqref{eq:Gq-derivative} and the fact that $G_q((q-1)/q)=0$
therefore imply
\[
G_q(p_{\mathrm{ins}})<G_q(a),
\]
which proves \eqref{eq:hoe-strictly-improves-hss}.
\end{proof}
For fixed $0<\gamma<q-1$, the solution of $S_L(p)=\gamma$ satisfies,
for all sufficiently large $L$,
\begin{equation}\label{eq:fixed-L-to-HSS-limit}
p_L(\gamma)\downarrow\frac\gamma{1+\gamma},\qquad
G_q(p_L(\gamma))\uparrow\Uins(\gamma).
\end{equation}
This is a limit of fixed-$L$ bounds, not an application of a fixed-$L$
theorem to a sequence $L=L_n$. For genuinely growing lists,
double counting at length $N_n=n+\lfloor\gamma n\rfloor$ gives directly
\begin{equation}\label{eq:growing-list-HSS-recovery}
|\cC_n|I_q(n,\lfloor\gamma n\rfloor)\le L_nq^{N_n},\qquad
R_n\le\Uins(\gamma)+\frac{\log_qL_n}{n}+o(1).
\end{equation}
Thus the first-order HSS bound also holds for every subexponential list
size. The higher-order deletion term, on the other hand, tends to one
as $L\to\infty$ at fixed positive deletion fraction.

\section{Comparison of Upper and Lower Bounds}\label{sec:comparison}
We compare binary codes with fixed list sizes $L=1,2,3$. The lower curves retain feasible positive-evaluation certificates,
not certified global optima of the critical-point coding bounds; upper
curves evaluate the stated converses.
Negative rate expressions are replaced by zero, and line segments between
sampled points only guide the eye. At zero errors the lower curves take
the exact rate one; other plotted endpoints use the stated limiting or
zero-rate values. The companion script and data files
record the coordinates and evaluation points. This is a comparison of the
named bounds, not an exhaustive survey of all known estimates.

\subsection{Binary deletion codes}\label{sec:comparison-deletion}
For $0\le\delta\le1/2$, combine the lower bounds pointwise as
\[
\underline R_{\mathrm{del,comb}}^{(L)}(\delta)
=\max\{0,B_{\mathrm{del},L}(\delta),\RdelEval(\delta)\}.
\]
The converses give
\begin{equation}\label{eq:del-comparison-upper}
\overline R_{\mathrm{del,comb}}^{(L)}(\delta)
=\min\left\{
\mathsf U_2^{\mathrm{SL}}(\delta),
\mathsf U_{2,\mathrm{del}}^{\mathrm{HSS}}(\delta),
G_2\!\left(1-(1-\delta)^{1/L}\right)\right\},
\end{equation}
where $\mathsf U_2^{\mathrm{SL}}$ is the combined Singleton--Levenshtein
bound and $\mathsf U_{2,\mathrm{del}}^{\mathrm{HSS}}(\delta)=[1-2\delta]_+$.
Thus
\begin{equation}\label{eq:deletion-sandwich}
\underline R_{\mathrm{del,comb}}^{(L)}(\delta)
\le\cR_{2,\mathrm{del}}^{(L)}(\delta)
\le\overline R_{\mathrm{del,comb}}^{(L)}(\delta).
\end{equation}
Here $\RdelEval$ is the auxiliary all-density certificate envelope
from Proposition~\ref{prop:deletion-certificates}; it is not identified
with $\Rrun$, whose optimization is over $\mathcal D_{\delta,L}$.
Replacing the envelope by a feasible evaluation can only widen the
interval. A zero lower curve does not imply zero optimal rate.

For $L=1$, Figure~\ref{fig:unique-rate-comparison} shows the rate scale,
and Figure~\ref{fig:unique-gain} isolates the additive gain. The increase
at $\delta=0.1$ is approximately $11.37\%$, as quantified in
Section~\ref{sec:unique-numerics}. For $L=2,3$,
Table~\ref{tab:fixed-list-rates} separates the local-lemma improvement
from the further gain due to run restriction.

\begin{figure}[tbp]
\centering
\includegraphics{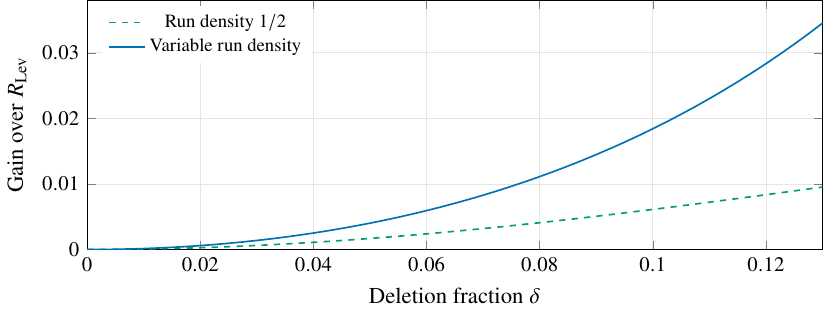}
\caption{Additive unique-decoding gains over the Levenshtein expression
where that expression is positive, separating density $1/2$ from variable
density.}
\label{fig:unique-gain}
\end{figure}

\begin{table}[tbp]
\centering\small
\begin{tabular}{@{}rrrrrr@{}}
\toprule
$L$ & $\delta$ & HSS union bound & Uniform LLL & Density $1/2$ & Variable density\\
\midrule
2 & 0.05 & 0.396936 & 0.595405 & 0.596281 & 0.596869 \\
2 & 0.10 & 0.231004 & 0.346507 & 0.349828 & 0.352922 \\
2 & 0.20 & 0.011405 & 0.017108 & 0.027553 & 0.049869 \\
3 & 0.05 & 0.476103 & 0.634804 & 0.635400 & 0.635663 \\
3 & 0.10 & 0.306004 & 0.408006 & 0.410319 & 0.411725 \\
3 & 0.20 & 0.078072 & 0.104096 & 0.111875 & 0.122189 \\
\bottomrule
\end{tabular}
\caption{Deletion lower bounds for fixed list sizes. Run-restricted
entries are feasible evaluations rounded down to six decimal places.}
\label{tab:fixed-list-rates}
\end{table}

Figures~\ref{fig:comparison-deletion-L1}--\ref{fig:comparison-deletion-L3}
show that the strongest displayed converse depends on the parameters.
For $L=1$, the Singleton--Levenshtein bound is strongest up to
$\delta\approx0.01936$, followed by Yasunaga's bound. For $L=2$, the
Singleton--Levenshtein, higher-order Elias, and HSS bounds are respectively
strongest on the approximate intervals
$(0,0.08646)$, $(0.08646,0.25268)$, and $(0.25268,1/2)$.
For $L=3$, the higher-order deletion curve does not improve the minimum of
the other two: Singleton--Levenshtein is strongest up to approximately
$0.11355$, followed by HSS. These are numerical crossings, not analytic
ordering theorems. At $L=2$, $\delta=0.05$, for example, the
Singleton--Levenshtein, Elias, and HSS upper values are approximately
$0.792653$, $0.851201$, and $0.900000$, demonstrating the independent role
of the run-sensitive converse.

Substantial gaps remain. At $\delta=0.1$, the smallest displayed upper
bounds for $L=1,2,3$ are approximately $0.590005$, $0.746335$, and
$0.772908$, compared with lower evaluations $0.180431$, $0.352922$, and
$0.411725$. Their differences are gaps between estimates, not exact
distances of the lower bounds from the unknown optimum.

\begin{figure}[tbp]
\centering
\includegraphics{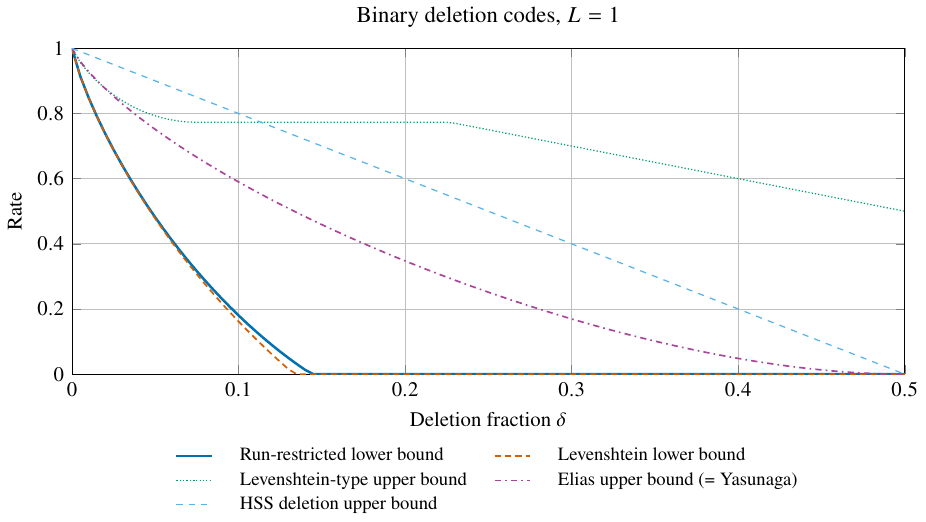}
\caption{Binary deletion bounds for $L=1$. The higher-order Elias upper
curve is Yasunaga's bound; the run-restricted lower curve uses
\eqref{eq:del-point-certificate}.}
\label{fig:comparison-deletion-L1}
\end{figure}

\begin{figure}[tbp]
\centering
\includegraphics{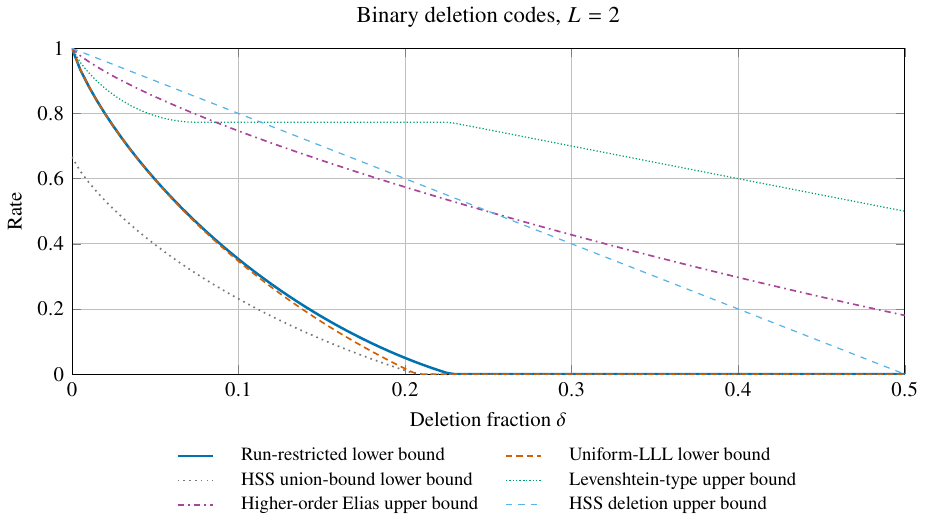}
\caption{Binary deletion bounds for $L=2$, showing both the improvement
in achievable rate and the complementary roles of the three converses.}
\label{fig:comparison-deletion-L2}
\end{figure}

\begin{figure}[tbp]
\centering
\includegraphics{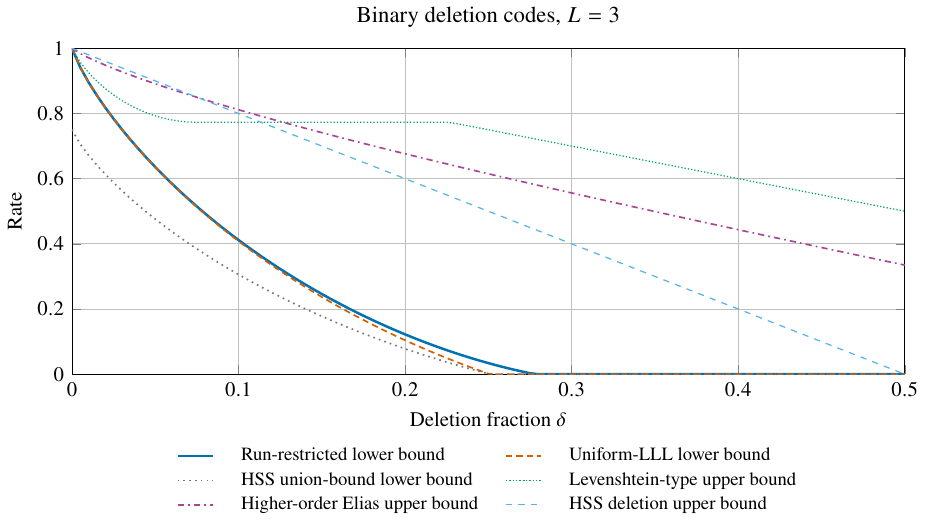}
\caption{Binary deletion bounds for $L=3$. In the displayed comparison, the higher-order deletion
curve does not improve the minimum of the other two converses.}
\label{fig:comparison-deletion-L3}
\end{figure}

\subsection{Binary insertion codes}\label{sec:comparison-insertion}
For $L=1$, insertion-only and deletion-only unique decodability are
equivalent at the same fraction by \eqref{eq:unique-ins-equals-del}, so we
focus on $L=2,3$. Combining \eqref{eq:insertion-list-lower-bound} with
Proposition~\ref{prop:cdr-lll} gives
\begin{equation}\label{eq:insertion-comparison-lower}
\underline R_{\mathrm{ins,comb}}^{(L)}(\gamma)
=\max\left\{0,\Rirun(\gamma),
\frac{(L+1)C_{\mathrm{ins}}(\gamma)-(1+\gamma)}{L}\right\}.
\end{equation}
The last term combines the CDR witness estimate with our local lemma; it
is not a theorem quoted from CDR. Their union-bound expression is
$[C_{\mathrm{ins}}(\gamma)-(1+\gamma)/(L+1)]_+$.
Table~\ref{tab:insertion-comparison} illustrates the comparison at
$\gamma=0.1$. The Markov-LLL values exceed the displayed allocation
values, but this does not prove domination of the critical-point
insertion bound in \eqref{eq:insertion-list-lower-bound}.
The curves use a finer grid of feasible points; the table records separate
fixed evaluations.

\begin{table}[tbp]
\centering\small
\begin{tabular}{@{}rrrrr@{}}
\toprule
$L$ & $\gamma$ & Allocation certificate & CDR union bound & CDR estimate + LLL\\
\midrule
2 & 0.10 & 0.344092 & 0.249887 & 0.374830 \\
3 & 0.10 & 0.423327 & 0.341553 & 0.455404 \\
\bottomrule
\end{tabular}
\caption{Insertion lower bounds at $\gamma=0.1$. Allocation entries
are feasible evaluations of \eqref{eq:insertion-point-certificate}.}
\label{tab:insertion-comparison}
\end{table}

Deletion converses transfer at $\tau=\gamma/L$. With $p_{\mathrm{ins}}$
as in Theorem~\ref{thm:higher-order-elias}, set
\begin{equation}\label{eq:ins-comparison-upper}
\begin{split}
\overline R_{\mathrm{ins,comb}}^{(L)}(\gamma)=\min\bigl\{&
\mathsf U_2^{\mathrm{SL}}(\gamma/L),
\mathsf U_{2,\mathrm{del}}^{\mathrm{HSS}}(\gamma/L),\\
&\mathsf U_{2,\mathrm{ins}}^{\mathrm{HSS}}(\gamma),
G_2(p_{\mathrm{ins}})\bigr\}.
\end{split}
\end{equation}
Then $\underline R_{\mathrm{ins,comb}}^{(L)}(\gamma)
\le\cR_{2,\mathrm{ins}}^{(L)}(\gamma)
\le\overline R_{\mathrm{ins,comb}}^{(L)}(\gamma)$.
Figures~\ref{fig:comparison-insertion-L2}--\ref{fig:comparison-insertion-L3}
show the strict improvement over HSS proved in
Proposition~\ref{prop:hoe-improves-hss-insertion} and the remaining gap.
The higher-order insertion upper bound vanishes at $\gamma=3/4$ for $L=2$
and $\gamma=7/8$ for $L=3$; these are necessary zero-rate conditions, not
matching achievable thresholds.

\begin{figure}[tbp]
\centering
\includegraphics{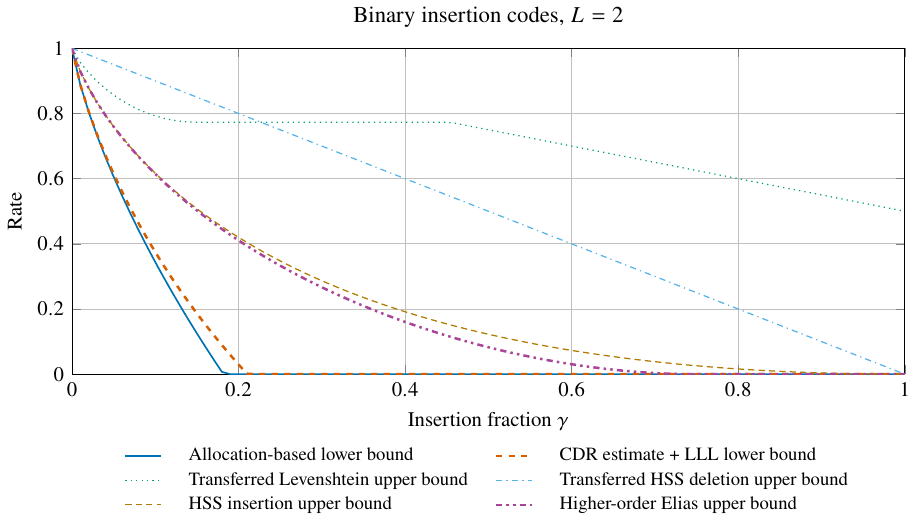}
\caption{Binary insertion bounds for $L=2$. The allocation curve uses
\eqref{eq:insertion-point-certificate}, and the Markov-LLL curve is
Proposition~\ref{prop:cdr-lll}. Deletion converses use $\tau=\gamma/L$.}
\label{fig:comparison-insertion-L2}
\end{figure}

\begin{figure}[tbp]
\centering
\includegraphics{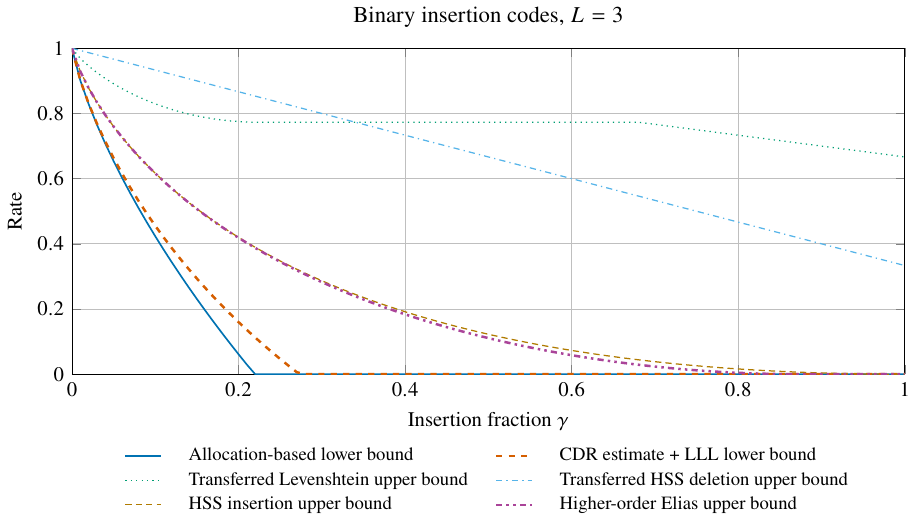}
\caption{Binary insertion bounds for $L=3$, with the same benchmarks
as Figure~\ref{fig:comparison-insertion-L2}. The higher-order insertion
upper bound vanishes at $\gamma=7/8$.}
\label{fig:comparison-insertion-L3}
\end{figure}

\paragraph{Mixed errors.}
Proposition~\ref{prop:ins2del} gives
$(\gamma+L\delta,0,L)\Longrightarrow(\gamma,\delta,L)
\Longrightarrow(0,\delta+\gamma/L,L)$ asymptotically, with exact integer
budgets as in Section~\ref{sec:reduction}. Thus insertion lower bounds at
$\gamma+L\delta$ transfer to mixed errors, while, for $\tau=\delta+\gamma/L<1$,
\begin{equation}\label{eq:combined-converse}
\limsup_n R_n\le
\min\left\{\mathsf U_q^{\mathrm{SL}}(\tau),\Udel(\tau),
G_q(p_{\mathrm{del}}),G_q(p_{\mathrm{ins}})\right\}.
\end{equation}
The two transfers use different parameter combinations; the reduction is
not an equivalence. A direct mixed-error count could strengthen the
transferred lower bound.

\section{Conclusion and Open Questions}\label{sec:discussion}
We have studied synchronization-error codes with a constant list size,
retaining the dependence of the bounds on its fixed numerical value.
The run-restricted common-witness count improves the classical
Levenshtein lower-bound curve for unique decoding and extends to every
fixed finite $L$. For the general lower bounds, we exploit the large-power
structure of the generating functions. The Critical-Point Coefficient
Lemma uses nonnegativity, interior convergence, and support-lattice
conditions to obtain explicit deletion and insertion exponent formulas
from the critical equations. The deletion design
set enforces prefactor convergence; the insertion bound includes all
witness densities and the separate endpoint. Positive evaluations
remain independent certificates. The converse analysis combines an error-budget
reduction, run-dependent deletion-shadow estimates, and higher-order
intersections and unions of embedding positions.

The comparisons in Section~\ref{sec:comparison} show both progress and
remaining limitations. Run restriction gives a quantitative gain, but
there is still a substantial gap to the displayed upper bounds. The
deletion converses have complementary parameter ranges, while the
higher-order insertion converse gives an analytic fixed-list improvement
over HSS. The allocation relaxation for insertion achievability must
also be evaluated against the Markov-based lower bound; an improvement
over an older baseline alone does not settle that comparison.

Several questions remain. The constructions here are existential;
explicit codes and efficient decoding algorithms with comparable rates
and constant list size require additional ideas. On the counting side,
$M_L$ counts witness--tuple incidences rather than distinct bad tuples,
and the insertion estimate further replaces descendants by deletion
allocations. Controlling these overcounts could strengthen the lower
bounds. More refined restrictions on run lengths, rather than only the
number of runs, are another direction. Finally, a direct enumeration
for mixed errors may improve on the lower bounds transferred from
insertion-only codes. These questions concern the coding and counting
arguments, rather than merely the evaluation of their coefficients.

\section*{Acknowledgment of AI assistance}
\addcontentsline{toc}{section}{Acknowledgment of AI assistance}
Large language models, including ChatGPT 5.6, were used during the
development of this work to assist with mathematical exploration,
checking calculations and arguments, and preparation of the manuscript.
All mathematical claims and arguments were reviewed by the authors,
who take full responsibility for the contents of the paper.

\appendix
\section{Derivation of the Generating Functions}
\label{app:gf}\label{app:symbolic}

We give the complete combinatorial derivations of
Theorems~\ref{thm:list-witness-gf} and~\ref{thm:insertion-gf}, using the
symbolic method to translate unambiguous decompositions into generating
functions; see~\cite{SedgewickFlajolet2013}.

\subsection{Counting Common-Subsequence Witnesses}
We use $M(w,d;r)$ from \eqref{eq:fixed-witness-count} and the five local
factors in Table~\ref{tab:leftmost-factors}. The leftmost rule forces every
unmatched block before a match to consist of the opposite symbol; the
suffix after the last match is unrestricted. This gives the initial,
same-symbol, and flip factors in the table. In the suffix, an appended
symbol continues a run or starts a new one, contributing $x$ or $xy$,
respectively, and hence the factor $\Ttail(x,y)$.
We now assemble these factors for a fixed witness and then for a shared
witness across $L+1$ codewords.

\begin{lemma}[Fixed witness subsequence]
\label{lem:fixed-witness}
Let
$
w=w_1w_2\cdots w_{n-d}\in\{0,1\}^{n-d}
$
with $n-d\ge 1$. Then
\begin{equation}\label{eq:fixed-witness-gf}
M(\vw,d;r)
=
[x^d y^r]\,
S_{w_1}(x,y)
\prod_{j=2}^{n-d}
G_{w_{j-1},w_j}(x,y)
\Ttail(x,y),
\end{equation}
where
\[
G_{a,b}(x,y)
=
\begin{cases}
\Gsame(x,y), & a=b,\\
\Gflip(x,y), & a\ne b.
\end{cases}
\]
\end{lemma}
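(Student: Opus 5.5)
We will establish \eqref{eq:fixed-witness-gf} through an explicit bijection. On one side are the words $c\in\cU(n,r)$ with $w\subseq c$. On the other side are the unambiguous block decompositions produced by the leftmost embedding. Fix such a $c$ and let $p_1<\cdots<p_{n-d}$ be the greedy (leftmost) embedding positions. Greedy means $p_j$ is the first index after $p_{j-1}$ at which $c$ carries the symbol $w_j$, with $p_0=0$. Hence every symbol strictly between $p_{j-1}$ and $p_j$ differs from $w_j$. Since the alphabet is binary, that gap is a block of the constant symbol $\overline{w_j}$, possibly empty. The part after $p_{n-d}$ is an arbitrary binary word. Conversely, any choice of gap lengths $g_1,\dots,g_{n-d}\ge0$ and any suffix determines a word $c$ in which these positions form exactly the leftmost embedding. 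Also, $d=\sum_j g_j+|\text{suffix}|$. So for fixed $w$ the map $c\mapsto(g_1,\dots,g_{n-d},\text{suffix})$ is a bijection onto all such tuples, subject to the constraint $c_1=0$.

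Next I track the first symbol and the run count while scanning $c$ from left to right. I give weight $y$ to the first symbol and to each subsequent symbol that differs from its predecessor, and weight $x$ to each unmatched symbol. The first block is handled by $S_{w_1}$. If $w_1=0$, then $c_1=0$ forces $g_1=0$, since a nonempty gap would consist of $1$'s; this gives the factor $y$. If $w_1=1$, the gap of $0$'s must be nonempty, which yields $\sum_{g\ge1}x^g y\cdot y=xy^2/(1-x)$: one $y$ for the initial run and one for the switch to the matched $1$. For $j\ge2$ with $w_j=w_{j-1}$, the gap consists of the opposite symbol. An empty gap creates no new run, while a nonempty gap creates two boundaries, giving $1+xy^2/(1-x)=\Gsame$. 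For $w_j\ne w_{j-1}$, the gap has symbol $\overline{w_j}=w_{j-1}$, which continues the previous run, and the match itself starts a new run. This gives $y/(1-x)=\Gflip$. Finally, each suffix symbol either continues the current run or starts a new one, contributing $x$ or $xy$, and so the suffix contributes $1/(1-x(1+y))=\Ttail$. All of this matches Table~\ref{tab:leftmost-factors}.

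Because the weights are multiplicative over the blocks and each run boundary is attributed to exactly one block, the product $S_{w_1}\prod_{j\ge2}G_{w_{j-1},w_j}\Ttail$ is the bivariate generating function of the tuples, with $x$ marking unmatched symbols and $y$ marking runs. Extracting $[x^dy^r]$ therefore counts the words $c$ of length $(n-d)+d=n$ with $c_1=0$, $r$ runs, and $w\subseq c$, which is $M(w,d;r)$. The main point that needs care is the run bookkeeping at block junctions. We must check that the run status entering each block is determined by $w_{j-1}$ alone: the last symbol before block $j$ is always the matched symbol $w_{j-1}$. This holds by construction, and it is what makes the factors local. The constraint $c_1=0$ must also be handled correctly in the case $w_1=0$.
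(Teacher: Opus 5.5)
Your proof is correct and follows the same route as the paper. Both decompose each $c\in\cU(n,r)$ containing $w$ via its unique leftmost embedding into an initial block, gap-plus-match blocks, and a free suffix, reading off $S_{w_1}$, $G_{w_{j-1},w_j}$ and $\Ttail$ as local generating functions. You simply make explicit what the paper's short proof leaves implicit: the inverse map showing the decomposition is a bijection, the handling of $c_1=0$, and the fact that each block begins right after the matched symbol $w_{j-1}$, so run boundaries are counted locally.
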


\begin{proof}
Fix $\vc\in\cU(n,r)$ containing $\vw$, and use the leftmost
embedding of $\vw$ into $\vc$.

The factor $S_{w_1}(x,y)$ describes the portion of $c$ up to the first
matched symbol. For each $j\in\{2,\ldots,n-d\}$,
$G_{w_{j-1},w_j}(x,y)$ describes the unmatched block between the
matched symbols $w_{j-1}$ and $w_j$, together with the matched symbol
$w_j$. Finally, $\Ttail(x,y)$ describes the unrestricted suffix.

Hence
\[
S_{w_1}(x,y)
\prod_{j=2}^{n-d}
G_{w_{j-1},w_j}(x,y)
\Ttail(x,y)
\]
is the generating function in which $x$ marks the number of unmatched
symbols and $y$ marks the number of runs. Therefore, extracting
$[x^d y^r]$
selects exactly the words in $\cU(n,r)$ containing $\vw$, since
$|\vw|+d=n$. The leftmost embedding is unique, so each such word is
counted exactly once.
\end{proof}

We now sum over all possible witnesses and consider $L+1$ codewords simultaneously. Since the same witness is embedded in every codeword, the choice of the first witness symbol and each subsequent same/flip transition is shared across all $L+1$ codewords. 

\begin{proof}[Proof of Theorem~\ref{thm:list-witness-gf}]
For each witness $\vw\in\{0,1\}^{n-d}$, Lemma~\ref{lem:fixed-witness}
gives
\[
M(\vw,d;r)
=
[x^d y^r]\,
S_{w_1}(x,y)
\prod_{j=2}^{n-d}
G_{w_{j-1},w_j}(x,y)
\Ttail(x,y).
\]

To count ordered $(L+1)$-tuples of codewords containing the same
witness $\vw$, we introduce separate variables $(x_i,y_i)$ for the
$i$th codeword. Hence,
\[
M(\vw,d;r)^{L+1}
=
\left[
\prod_{i=1}^{L+1}x_i^d y_i^r
\right]
\prod_{i=1}^{L+1}
\left(
S_{w_1}(x_i,y_i)
\prod_{j=2}^{n-d}
G_{w_{j-1},w_j}(x_i,y_i)
\Ttail(x_i,y_i)
\right).
\]

Now sum over all witnesses $\vw$ of length $n-d$. Since the same
witness is used for all $L+1$ codewords, the first witness symbol is
either $0$ for all of them or $1$ for all of them. This gives
\[
\prod_{i=1}^{L+1}S_0(x_i,y_i)
+
\prod_{i=1}^{L+1}S_1(x_i,y_i).
\]

Likewise, at each of the remaining $n-d-1$ positions, the witness
either stays at the same symbol or flips. Hence each transition
contributes
$
\prod_{i=1}^{L+1}\Gsame(x_i,y_i)
+
\prod_{i=1}^{L+1}\Gflip(x_i,y_i).
$
Since the $n-d-1$ transitions are independent choices of same or flip,
their total contribution is
\[
\left(
\prod_{i=1}^{L+1}\Gsame(x_i,y_i)
+
\prod_{i=1}^{L+1}\Gflip(x_i,y_i)
\right)^{n-d-1}.
\]

Finally, each codeword contributes its own unrestricted suffix, giving
\[
\prod_{i=1}^{L+1}\Ttail(x_i,y_i).
\]

Combining these factors and extracting
$
\left[
\prod_{i=1}^{L+1}x_i^d y_i^r
\right]
$
yields $M_L(n,d;r)
=
\sum_{\vw\in\{0,1\}^{n-d}}
M(\vw,d;r)^{L+1}$, proving \eqref{eq:deletion-gf-power}.
Finally, introducing $z$ to mark the witness length gives
\[
\sum_{k\ge1}z^k\mathcal A_L(\vx,\vy)\mathcal K_L(\vx,\vy)^{k-1}
=\frac{z\mathcal A_L(\vx,\vy)}{1-z\mathcal K_L(\vx,\vy)}.
\]
Extracting $z^{n-d}\prod_i x_i^dy_i^r$ proves
\eqref{eq:deletion-gf-rational}.
\end{proof}

\begin{example}\label{ex:finite-witness-count}
For $n=4$, $d=1$, $r=2$, we have
$\cU(4,2)=\{0001,0011,0111\}$. Among the length-three witnesses, the
only nonzero counts are
\[
M(000,1;2)=M(111,1;2)=1,\qquad
M(001,1;2)=M(011,1;2)=2.
\]
Thus there are $1+2+2+1=6$ word--witness pairs, whereas the common-witness
count for $L=1$ is $M_1(4,1;2)=1^2+2^2+2^2+1^2=10$.
\end{example}

\subsection{Counting Common-Supersequence Witnesses}

We recall our quantity of interest:
\[
T_L(n,d;r^*)
\triangleq
\sum_{\vs\in\Delta^+(n+d,r^*)}
T(\vs,d;r^*)^{L+1}.
\]

As before, we fix the run-length vector of a witness supersequence,
$\vs=(s_1,\ldots,s_{r^*})\in\Delta^+(n+d,r^*)$.
For a run-difference vector
$\vt=(t_1,\ldots,t_{r^*})$, the entry $t_j$ records the number of symbols deleted from the $j$th
run of $\vs$. Thus,
\[
0\le t_j\le s_j
\qquad\text{and}\qquad
\sum_{j=1}^{r^*}t_j=d.
\]
In the next lemma, we use $x$ to mark the total number of deleted symbols.

\begin{lemma}[Fixed witness supersequence]
For
$
\vs=(s_1,\ldots,s_{r^*})\in\Delta^+(n+d,r^*),
$
we have that
\begin{equation}\label{eq:fixed-allocation-gf}
T(\vs,d;r^*)
=
[x^d]
\prod_{j=1}^{r^*}
\left(
1+x+\cdots+x^{s_j}
\right).
\end{equation}
\end{lemma}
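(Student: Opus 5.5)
The identity follows from the standard product rule for generating functions: I will show that the right-hand side of \eqref{eq:fixed-allocation-gf} enumerates exactly the allocation vectors counted by $T(\vs,d;r^*)$ in \eqref{eq:allocation-count}.

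First, fix $j\in[r^*]$. The admissible values of $t_j$ are $0,1,\ldots,s_j$, and each contributes exactly one monomial $x^{t_j}$. Hence the polynomial $1+x+\cdots+x^{s_j}=\sum_{t_j=0}^{s_j}x^{t_j}$ is the generating function of the choices for the $j$th run, with $x$ marking the number of deleted symbols.

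Second, I will expand the product over $j=1,\ldots,r^*$ by distributivity:
\[
\prod_{j=1}^{r^*}\bigl(1+x+\cdots+x^{s_j}\bigr)
=\sum_{(t_1,\ldots,t_{r^*})}x^{t_1+\cdots+t_{r^*}},
\]
where the sum ranges over all vectors with $0\le t_j\le s_j$. Each such vector appears exactly once, because distinct choices of exponents from the factors give distinct terms in the expansion, each with coefficient one. Extracting $[x^d]$ therefore keeps precisely those vectors with $\sum_jt_j=d$. This is the set counted by $T(\vs,d;r^*)$, which proves \eqref{eq:fixed-allocation-gf}.

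The only point requiring care is bookkeeping rather than analysis. $T$ counts deletion \emph{vectors}, not distinct descendant words, so there is no overcounting issue here; the possible collapse of vectors to the same word was already handled by the inequality $D\le T$ in Lemma~\ref{lem:insertion-deletion-vectors}.

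This lemma then feeds Theorem~\ref{thm:insertion-gf} in two steps. First, raise it to the power $L+1$ with separate variables $x_i$ for each codeword. Then sum over compositions $\vs$, with $y$ marking run lengths, so that each run contributes a factor $\sum_{s\ge1}y^s\prod_i(1+x_i+\cdots+x_i^s)=\mathcal J_L(\vx,y)$, and the sum over compositions yields $\mathcal J_L^{r^*}$.
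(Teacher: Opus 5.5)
Your proof is correct and matches the paper's argument: each run contributes $1+x+\cdots+x^{s_j}$ for its deletion count $t_j$, the product over runs enumerates every vector with $0\le t_j\le s_j$ weighted by $x^{\sum_j t_j}$, and extracting $[x^d]$ imposes $\sum_j t_j=d$. Your remarks that $T$ counts vectors rather than words, and on how the lemma feeds Theorem~\ref{thm:insertion-gf}, are also consistent with the paper.
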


\begin{proof}
For the $j$th run, we may delete any number
$t_j\in\{0,1,\ldots,s_j\}$ of symbols. Hence its generating function is
$1+x+\cdots+x^{s_j} $.
Multiplying over all $r^*$ runs records all run-difference vectors
$\vt=(t_1,\ldots,t_{r^*})$. The exponent of $x$ in each monomial is
$\sum_j t_j$. Extracting $[x^d]$ therefore gives the number of run-difference
vectors satisfying
$\sum_{j=1}^{r^*}t_j=d$.
\end{proof}

We complete the derivation of the generating function.

\begin{proof}[Proof of Theorem~\ref{thm:insertion-gf}]
For fixed $\vs=(s_1,\ldots,s_{r^*})$, we introduce separate variables
$x_1,\ldots,x_{L+1}$ for the $L+1$ copies of
$T(\vs,d;r^*)$. By the preceding lemma,
\[
T(\vs,d;r^*)^{L+1}
= 
\prod_{i=1}^{L+1}
\left[
x_i^d
\right]
\prod_{j=1}^{r^*}
\left(
1+x_i+\cdots+x_i^{s_j}
\right)
=
\left[
\prod_{i=1}^{L+1}x_i^d
\right]
\prod_{j=1}^{r^*}
\prod_{i=1}^{L+1}
\left(
1+x_i+\cdots+x_i^{s_j}
\right).
\]

We now sum over all
$\vs\in\Delta^+(n+d,r^*)$.
Let $y$ mark the total length of $\vs$ and so, a single run of length
$s\ge1$ then contributes the term
$
y^s
\prod_{i=1}^{L+1}
\left(
1+x_i+\cdots+x_i^s
\right).
$
Hence, defining
\[
\mathcal J_L(\boldsymbol{x},y)
\triangleq
\sum_{s\ge1}
y^s
\prod_{i=1}^{L+1}
\left(
1+x_i+\cdots+x_i^s
\right),
\]
the $r^*$ runs contribute
$
\mathcal J_L(\boldsymbol{x},y)^{r^*}.
$
Since
$s_1+\cdots+s_{r^*}=n+d$,
we have that 
\[
T_L(n,d;r^*)
=
\left[
\prod_{i=1}^{L+1}x_i^d\,y^{n+d}
\right]
\mathcal J_L(\boldsymbol{x},y)^{r^*},
\]
which proves \eqref{eq:insertion-gf-power}.

It remains to simplify $\mathcal J_L$. Using
$
1+x_i+\cdots+x_i^s
=
\frac{1-x_i^{s+1}}{1-x_i},
$
we obtain
\begin{align*}
\mathcal J_L(\boldsymbol{x},y)
&=
\frac{1}{\displaystyle\prod_{i=1}^{L+1}(1-x_i)}
\sum_{s\ge1}
y^s
\prod_{i=1}^{L+1}
\left(
1-x_i^{s+1}
\right)
\\
&=
\frac{1}{\displaystyle\prod_{i=1}^{L+1}(1-x_i)}
\sum_{S\subseteq[L+1]}
(-1)^{|S|}
\sum_{s\ge1}
y^s
\left(
\prod_{i\in S}x_i
\right)^{s+1}.
\end{align*}
For each $S\subseteq[L+1]$,
\[
\sum_{s\ge1}
y^s
\left(
\prod_{i\in S}x_i
\right)^{s+1}
=
\frac{
y\left(\displaystyle\prod_{i\in S}x_i\right)^2
}{
1-y\displaystyle\prod_{i\in S}x_i
}.
\]
Thus,
\begin{equation}\label{eq:insertion-gf-rational-expanded}
\mathcal J_L(\boldsymbol{x},y)
=
\frac{y}{\displaystyle\prod_{i=1}^{L+1}(1-x_i)}
\sum_{S\subseteq[L+1]}
(-1)^{|S|}
\frac{
\left(\displaystyle\prod_{i\in S}x_i\right)^2
}{
1-y\displaystyle\prod_{i\in S}x_i
},
\end{equation}
which proves \eqref{eq:insertion-gf-rational-expanded}.

Finally,
\[
\frac{1}{1-z\mathcal J_L(\boldsymbol{x},y)}
=
\sum_{j\ge0}
z^j\mathcal J_L(\boldsymbol{x},y)^j.
\]
Hence extracting $z^{r^*}$ selects
$\mathcal J_L(\boldsymbol{x},y)^{r^*}$, and therefore
\[
T_L(n,d;r^*)
=
\left[
\prod_{i=1}^{L+1}x_i^d\,
y^{n+d}z^{r^*}
\right]
\frac{1}{1-z\mathcal J_L(\boldsymbol{x},y)}.
\]
This proves \eqref{eq:insertion-gf-rational}.
\end{proof}

On the diagonal, summing according to $|S|=a$ gives
\begin{equation}\label{eq:insertion-diagonal-rational}
J_L(x,y)=\frac{y}{(1-x)^{L+1}}
\sum_{a=0}^{L+1}(-1)^a\binom{L+1}{a}
\frac{x^{2a}}{1-yx^a}.
\end{equation}
An empty product is one. The apparent singularities at $x_i=1$ or
$x=1$ are removable inside the convergence domain: the positive-series
definition, for example
$J_L(1,y)=\sum_{s\ge1}y^s(s+1)^{L+1}$ for $0<y<1$,
remains valid. Cancellation in the rational representation does not
represent a negative combinatorial count.

\section{Proofs for Critical-Point Coefficient Evaluation}
\label{app:critical}\label{app:acsv}
This appendix proves the Critical-Point Coefficient Lemma and verifies
its assumptions for the two kernels. The proof exploits the large-power
form of the coefficient extractions, with nonnegativity providing the
required lower and upper bounds. It does not invoke a general
rational-function ACSV theorem. The critical equations are the familiar
saddle-point balance equations; see
\cite{pemantle2008twenty,melczer2021invitation,goyal2024gilbert,lenz2025multivariate}
for analytic-combinatorics context.

\subsection{Proof of the Critical-Point Coefficient Lemma}
\label{app:critical-proof}
\begin{proof}[Proof of Lemma~\ref{lem:critical-point-coefficient}]
Write $K(\vu)=\sum_v\kappa_v\vu^v$, put $\vt^*=\ln\vu^*$, and define
\[
\mathcal L(\vt)=\ln K(e^{\vt}),\qquad
\Pr_{\vt}(V=v)=\kappa_v\exp\bigl(\langle\vt,v\rangle-\mathcal L(\vt)\bigr).
\]
Interior convergence gives an exponential moment locally and permits
termwise differentiation. Therefore
\begin{equation}\label{eq:tilted-mean-covariance}
\nabla\mathcal L(\vt)=\E_{\vt}V,\qquad
\nabla^2\mathcal L(\vt)=\Sigma_{\vt}=\operatorname{Cov}_{\vt}(V).
\end{equation}
If $a^{\mathsf T}\Sigma_{\vt}a=0$, then $\langle a,V\rangle$ is constant
on the support. Thus $a$ annihilates every support difference, and the
full-rank assumption implies $a=0$. Hence $\Sigma_{\vt}$ is positive
definite. This derives the needed nondegeneracy from the support condition.

\emph{Moving critical point.}
Fix the monomial $b_w\vu^w$ from the lattice-compatibility assumption and
set $q_N=\mathbf j_N-w$ and $\mu_N=q_N/N$. Since $\mu_N\to\boldsymbol\mu$
and $\nabla\mathcal L(\vt^*)=\boldsymbol\mu$, the inverse function theorem
and \eqref{eq:tilted-mean-covariance} give, for all large $N$, a nearby
$\vt_N$ with
\begin{equation}\label{eq:moving-critical-point}
\nabla\mathcal L(\vt_N)=\mu_N,\qquad \vt_N\to\vt^*.
\end{equation}
In particular, the prefactor-shifted target is exactly centered under
the tilted distribution at $\vt_N$. No estimate of the form
$q_N-N\boldsymbol\mu=O(1)$ is used.

\emph{Uniform local estimate on the lattice.}
Choose a small compact neighborhood $U$ of $\vt^*$ strictly inside the
logarithmic convergence domain. Interior convergence gives, for some
$\epsilon>0$,
\[
\sup_{\vt\in U}\E_{\vt}e^{\epsilon\|V\|}<\infty.
\]
The covariance matrices vary continuously and their smallest eigenvalues
are bounded away from zero on $U$. Let
$\varphi_{\vt}(\theta)=\E_{\vt}e^{\mathrm i\langle\theta,V\rangle}$ on
$\R^s/(2\pi\Z^s)$, and write $\Lambda=\Lambda(K)$.
The points where $|\varphi_{\vt}(\theta)|=1$ form the same finite set
for every $\vt\in U$:
\[
H=\{h\bmod2\pi\Z^s:e^{\mathrm i\langle h,\lambda\rangle}=1
       \text{ for all }\lambda\in\Lambda\},
\qquad |H|=[\Z^s:\Lambda].
\]
Indeed, all support probabilities are positive, so equality in the
triangle inequality requires a constant phase on the support. Away
from fixed small neighborhoods of $H$, continuity and compactness give
$|\varphi_{\vt}|\le\xi<1$ uniformly over $U$.

Near zero, the common exponential moment yields the uniform expansion
\begin{equation}\label{eq:uniform-characteristic-expansion}
\ln\varphi_{\vt}(\theta)
=\mathrm i\langle\nabla\mathcal L(\vt),\theta\rangle
 -\tfrac12\theta^{\mathsf T}\Sigma_{\vt}\theta+O(\|\theta\|^3).
\end{equation}
On a smaller neighborhood we also have
$|\varphi_{\vt}(\theta)|\le e^{-c\|\theta\|^2}$ uniformly, for some $c>0$.
Since the support is contained in $v_0+\Lambda$, every $h\in H$ satisfies
\[
\varphi_{\vt}(h+\theta)
=e^{\mathrm i\langle h,v_0\rangle}\varphi_{\vt}(\theta).
\]
In Fourier inversion for the sum of $N$ independent copies at $\vt_N$,
the contribution near $h$ differs from that near zero by the factor
\begin{equation}\label{eq:lattice-phase-one}
e^{\mathrm i\langle h,Nv_0-q_N\rangle}=1,
\end{equation}
using $q_N\in Nv_0+\Lambda$. Thus these periodic contributions add.

Specifically,
\[
\Pr_{\vt_N}(V_1+\cdots+V_N=q_N)
=\frac1{(2\pi)^s}\int_{[-\pi,\pi]^s}
\varphi_{\vt_N}(\theta)^N e^{-\mathrm i\langle\theta,q_N\rangle}\,d\theta.
\]
Near zero put $\theta=N^{-1/2}z$. By \eqref{eq:moving-critical-point},
the linear term cancels exactly. On bounded $z$-sets,
\eqref{eq:uniform-characteristic-expansion} gives a Gaussian integrand;
the uniform quadratic bound controls the remaining local tails. The
complement of the phase neighborhoods contributes $O(\xi^N)$.
Combining the $[\Z^s:\Lambda]$ equal phase contributions gives
\begin{equation}\label{eq:moving-lattice-local-limit}
\Pr_{\vt_N}(V_1+\cdots+V_N=q_N)
=\frac{[\Z^s:\Lambda]}{(2\pi N)^{s/2}\sqrt{\det\Sigma_{\vt_N}}}
 (1+o(1)).
\end{equation}
This also proves eventual attainability of the compatible target.
Uniformity follows from convergence on a neighborhood and the fixed
full-rank support; it is not an additional assumption.

\emph{Coefficient and prefactor bounds.}
Exponential tilting gives the exact identity
\[
[\vu^{q_N}]K(\vu)^N
=e^{N\mathcal L(\vt_N)-\langle\vt_N,q_N\rangle}
 \Pr_{\vt_N}(V_1+\cdots+V_N=q_N).
\]
The probability in \eqref{eq:moving-lattice-local-limit} has logarithm
$o(N)$. Since $\vt_N\to\vt^*$ and $q_N/N\to\boldsymbol\mu$, this
coefficient has normalized natural-logarithmic limit
$\mathcal L(\vt^*)-\langle\boldsymbol\mu,\vt^*\rangle$.
Nonnegativity supplies the sandwich
\begin{equation}\label{eq:prefactor-sandwich}
b_w[\vu^{\mathbf j_N-w}]K(\vu)^N
\le c_N
\le B(\vu^*)K(\vu^*)^N(\vu^*)^{-\mathbf j_N}.
\end{equation}
Both sides have the same exponential rate. Dividing by $\ln2$ proves
\eqref{eq:coefficient-power-estimate}, together with eventual positivity.
All estimates hold along arbitrary integer powers tending to infinity.
\end{proof}
The polynomial estimate above is centered at the moving point $\vt_N$.
Under mere convergence of the direction, no fixed-$\vu^*$ polynomial
prefactor is asserted; the lemma's conclusion is only the exponent.

\subsection{Deletion: convergence, support, and uniqueness}
\label{app:deletion-critical}
Put $m=L+1$ and let
$u^*=(x^*,\ldots,x^*,y^*,\ldots,y^*)$ solve
\eqref{eq:deletion-critical-x}--\eqref{eq:deletion-critical-admissibility}.
The strict inequality $x^*(1+y^*)<1$ ensures convergence of both
$\mathcal K_L$ and $\mathcal A_L$ in a neighborhood of $u^*$.
Symmetry under simultaneous permutations of the pairs $(x_i,y_i)$ gives
\[
\partial_xK_L(x^*,y^*)=m\partial_{x_i}\mathcal K_L(u^*),\qquad
\partial_yK_L(x^*,y^*)=m\partial_{y_i}\mathcal K_L(u^*).
\]
Consequently, the individual logarithmic means are
$\mu_{x_i}=\delta/(1-\delta)$ and
$\mu_{y_i}=\rho/(1-\delta)$, as required for power $N_n=n-d_n-1$.

The support of $\mathcal K_L$ contains zero and, for every $i$,
\[
e_{x_i}+2e_{y_i},\qquad 2e_{x_i}+2e_{y_i},\qquad
\mathbf1_y=(0,\ldots,0,1,\ldots,1).
\]
Taking differences gives $e_{x_i}$ and $2e_{y_i}$.
Conversely, the $y$-exponents in each monomial are all even in the
same-symbol product or all odd in the flip product. Hence
\begin{equation}\label{eq:deletion-lattice}
\Lambda(\mathcal K_L)=\Z^m\times
\{b\in\Z^m:b_1\equiv\cdots\equiv b_m\pmod2\}.
\end{equation}
It has rank $2m$ and index $2^{m-1}=2^L$.
Take $v_0=0$. The prefactor $\mathcal A_L$ contains $\prod_i y_i$
with coefficient one, so choose $w=\mathbf1_y$. For every integer $d,r$,
\[
(d,\ldots,d,r-1,\ldots,r-1)\in\Lambda(\mathcal K_L).
\]
This verifies the fixed-monomial compatibility condition. In particular,
\eqref{eq:prefactor-sandwich} supplies the lower bound without a separate
complex-amplitude calculation.

For uniqueness, consider $\ln K_L(e^a,e^b)$ on the convex domain $a<0$,
$b\in\R$. Its Hessian is the covariance of the two-coordinate projected
support. This support contains $(0,0),(1,2),(2,2)$, so its difference
lattice has rank two and the Hessian is positive definite. The gradient
is therefore injective. There is at most one solution of the prescribed
mean equations, and hence at most one admissible solution.
This argument does not claim existence in the prefactor convergence
domain for all $(\delta,\rho)$.

\subsection{Insertion: support and existence of the critical point}
\label{app:insertion-critical}
Again put $m=L+1$. The support of $\mathcal J_L$ consists of
\[
(t_1,\ldots,t_m,\ell),\qquad \ell\ge1,\quad0\le t_i\le\ell.
\]
It contains $(0,\ldots,0,1)$, $(e_i,1)$ for every $i$, and
$(0,\ldots,0,2)$. Their differences generate $\Z^{m+1}$.
Thus the support-difference lattice has full rank and index one, and
lattice compatibility is automatic with $B=1$ and $w=0$.

We now prove the existence assertion in
Theorem~\ref{thm:insertion-theta}; it is specific to this kernel.
Use full logarithmic variables $a_i=\ln x_i$, $b=\ln y$. The interior
convergence domain is exactly
\[
\mathcal E=\left\{(a,b):b+\sum_{i=1}^m\max\{a_i,0\}<0\right\}.
\]
For sufficiency, each run contribution is at most
$(\ell+1)^m\exp(\ell(b+\sum_i\max\{a_i,0\}))$.
For necessity, retain the extreme monomial with $t_i=\ell$ when
$a_i\ge0$ and $t_i=0$ otherwise. Its sum diverges outside $\mathcal E$.
The convex hull of the support is
\[
\mathcal P=\{(t_1,\ldots,t_m,\ell):\ell\ge1,\ 0\le t_i\le\ell\}.
\]
For $\gamma>0$ and $0<\eta<1+\gamma$, the target
$\mu=(\gamma/\eta,\ldots,\gamma/\eta,(1+\gamma)/\eta)$
lies in its interior. Consider
\[
f(a,b)=\ln\mathcal J_L(e^{a_1},\ldots,e^{a_m},e^b)
 -\langle\mu,(a,b)\rangle.
\]
By the full-rank covariance argument, $f$ is strictly convex on
$\mathcal E$.

To prove attainment, choose an integer $S>(1+\gamma)/\eta$.
The support corners at levels $\ell=1,S$, with each $t_i$ equal to
$0$ or $\ell$, have a convex hull containing $\mu$ in its interior.
Their coefficients are one, so for some $c>0$,
\[
f(a,b)\ge\max_{v\text{ among these corners}}
\langle v-\mu,(a,b)\rangle\ge c\|(a,b)\|.
\]
A minimizing sequence cannot escape to infinity. At a finite boundary
point of $\mathcal E$, put $\chi=b+\sum_i\max\{a_i,0\}\uparrow0$.
The same extreme monomials give
\[
\mathcal J_L(e^a,e^b)\ge\sum_{\ell\ge1}e^{\ell\chi}\longrightarrow\infty,
\]
so a bounded minimizing sequence cannot approach that boundary either.
There is therefore a unique interior minimizer. Its gradient equation
sets the logarithmic mean equal to $\mu$. Permutation symmetry of the
$x_i$ and uniqueness force $x_1=\cdots=x_m=x^*$, giving precisely
\eqref{eq:insertion-critical-x}--\eqref{eq:insertion-critical-admissibility}.
Conversely, a diagonal solution satisfies all the individual mean
equations, so it is this unique minimizer.

For arbitrary $d_n/n\to\gamma$ and $r_n^*/n\to\eta$ in the interior,
apply the coefficient lemma with power $N_n=r_n^*$ and target
$(d_n,\ldots,d_n,n+d_n)$. Its limiting direction is $\mu$.
Multiplying the exponent by $r_n^*/n\to\eta$ proves
\eqref{eq:theta-critical-value}. The boundary $\eta=1+\gamma$ is
handled separately by \eqref{eq:theta-endpoint} and
\eqref{eq:insertion-near-endpoint}; it is not an interior critical point.

\end{document}